\newtheorem{theorem}{Theorem}
\newtheorem{lemma}[theorem]{Lemma}
\newtheorem{definition}[theorem]{Definition}
\newtheorem{example}[theorem]{Example}
\newtheorem{remark}[theorem]{Remark}
\newtheorem{assumption}[theorem]{Assumption}
\newtheorem{corollary}[theorem]{Corollary}
\newcommand{{\Prob}}{\mathbb P}
\newcommand{\epi}{\mathrm{epi}}
\newcommand{\co}{\overline{\mathrm{co}}}
\newcommand{\mypar}[1]{\vspace{0.03in}\noindent{\bf #1.}}
\newcommand\mathitem{\item\leavevmode\vspace*{-\dimexpr\baselineskip+\abovedisplayskip\relax}}
\begin{document}

\title{Inaccuracy rates for distributed inference over random networks with applications to social learning}

\author{Dragana Bajovi\'c,~\IEEEmembership{Member,~IEEE} 
\thanks{D. Bajovi\'c is with the Department of Power, Electronics and Communications Engineering,  Faculty of Technical Sciences, University of Novi Sad. Email: dbajovic@uns.ac.rs.}

\thanks{Part of this work was done while the author was with the Department of Electrical and Computer Engineering, Carnegie Mellon University, Pittsburgh 15213, PA, USA~\cite{BajovicThesis13}.}
\thanks{This work is partially supported by the European Union’s Horizon 2020 Research and Innovation program under grant agreement No 957337. The paper reflects only the view of the author and the Commission is not responsible for any use that may be made of the information it contains.}
}

\maketitle
\begin{abstract}
This paper studies probabilistic rates of convergence for consensus+innovations type of algorithms in random, generic networks. For each node, we find a lower and also a family of upper bounds on the large deviations rate function, thus enabling the computation of the exponential convergence rates for the events of interest on the iterates. Relevant applications include error exponents in distributed hypothesis testing, rates of convergence of beliefs in social learning, and inaccuracy rates in distributed estimation. The bounds on the rate function have a very particular form at each node: they are constructed as the convex envelope between the rate function of the hypothetical fusion center and the rate function corresponding to a certain topological mode of the node's presence. We further show tightness of the discovered bounds for several cases, such as pendant nodes and regular networks, thus establishing the first proof of the large deviations principle for consensus+innovations and social learning in random networks.    
\end{abstract}
\begin{IEEEkeywords}
Large deviations, distributed inference, social learning, convex analysis, inaccuracy rates.
\end{IEEEkeywords}

\section{Introduction}
\label{sec-Intro}

\IEEEPARstart{T}{he} theory of large deviations is the most prominent tool for studying \emph{rare events} that occur with stochastic processes, offering a principled approach for estimating probabilities of such events. A typical setup concerns a sequence of probability measures induced by the studied process and parameterized by one of the process parameters (e.g., time, population size, learning rate etc.), with the goal of computing, or characterizing, the respective decay rate, for any given event (region) of interest.  The practical value of such rates is in estimating the probability of a rare event of interest as an exponentially decaying function of the concerned process parameter, while neglecting the terms with slower than exponential dependence. The rates of rare events can additionally provide a ground for comparison of two statistical procedures, as originally proposed in the seminal work by Chernoff~\cite{Chernoff52}, and can therefore serve as a useful design criterion~\cite{Anandkumar07},~\cite{Bajovic11},~\cite{Tay15},~\cite{Ping22}. This is of special interest in the cases when other performance metrics are intractable for optimization, such as probabilities of error with hypothesis testing.   

In addition to the rate computation, large deviations analysis often reveals the most likely way through which the event of interest takes place, providing additional important insights that can guide system design. Most notable applications of large deviations theory are in statistics~\cite{Bucklew90}, communications and queuing theory~\cite{SchwartzWeiss95}, statistical mechanics~\cite{Touchette2009LDStatMechs}, and information theory~\cite{Cover91}.

For example, in statistical estimation, an event of interest is the event that the estimator does not belong to a predefined close neighborhood of the parameter being estimated~\cite{Arcones06LDM-estimators}.  The decay rates of probabilities of such events are known in the estimation theory as \emph{inaccuracy rates} and can, e.g., guide the decision on how many samples are needed for the estimator to reach the desired accuracy, with high probability~\cite{Bahadur60}. 
To make the exposition concrete, let $X_t \in \mathbb R^d$, $t=1,2,...$, be a sequence of estimators of a parameter $\theta \in \mathbb R^d$. Assuming that $X_t$ converges to $\theta$, an event of interest has the form $\{\|X_t-\theta\|\geq \epsilon\}$, where $\|\cdot\|$ denotes the $l_2$ norm (other vector norms can also be used). An equivalent way to represent this event is $\{X_t \in C_{\epsilon}\}$, where $C_{\epsilon}$ is the complement of the $l_2$ ball of diameter $\epsilon$ centered at $\theta$, $C_\epsilon = {B}^{\mathrm{c}}_{\theta}(\epsilon)$. Provided that $X_t$ converges to $\theta$, the probabilities of these events typically vanish exponentially fast with $t$.  Large deviations analysis then aims at discovering the corresponding rate of decay, i.e., the inaccuracy rate $\mathbf I(C_\epsilon)$:
\begin{equation}
\label{eq-rate-objective}
\mathbb P\left(X_t\in C_{\epsilon}\right) = e^{-t \mathbf I(C_\epsilon)+o(t)},
\end{equation}
where $o(t)$ denotes a function growing slower than linear with $t$. The inaccuracy rate $\mathbf I(C_\epsilon)$ has a very particular structure: it is given through the so called \emph{rate function} $I: \mathbb R^d \mapsto \mathbb R$ by
\begin{equation}\label{eq-set-fcn-via-rate-fcn}
\mathbf I(C_\epsilon) = \inf_{x\in C_{\epsilon}} I(x).
\end{equation}
The rate function $I$ is itself defined through the statistics of the inference sequence $X_t$. It should be noted that, in contrast with the set function $\mathbf I$, the rate function $I$ does not depend on the inaccuracy region, i.e., when $C_\epsilon$ varies, only the domain of minimization on the right-hand side of~\eqref{eq-set-fcn-via-rate-fcn} varies, while the rate function remains fixed. Also, this relation holds for an arbitrary set $C_{\epsilon}$ (e.g., not necessarily a ball complement). Hence, once the rate function is identified, the associated inaccuracy rate is readily computable through~\eqref{eq-set-fcn-via-rate-fcn} for a new given region of interest, without the need to redo the large deviations analysis each time, i.e., for each new region. Large deviations rate for estimation were first studied by Bahadur in~\cite{Bahadur60}.

Another well-known application of large deviations analysis is hypothesis testing~\cite{Chernoff52}, where the sequence $X_t$ is typically a decision statistics, e.g., obtained by summing up the log-likelihoods of the collected measurements up to the current time $t$, $X_t=1/t \sum_{s=1}^t \log \frac{f_1(Y_s)}{f_0(Y_s)}$; $f_0$ and $f_1$ here are the marginal distributions of the measurements $Y_s$ under the two hypotheses $H_0$ and $H_1$, respectively. If the acceptance threshold for $H_1$ at time $t$ is $\gamma_t$, then rare events of interest are $\{X_t < \gamma_t\}$, when $H_1$ is true (i.e., when $Y_s$ follow the distribution $f_1$) -- resulting in missed detection, and $\{X_t \geq \gamma_t\}$, when $H_0$ is true (when $Y_s$ follow the distribution $f_0$) -- causing a false alarm. When $C_{\epsilon}$ in~\eqref{eq-rate-objective} is replaced by the preceding two events, the resulting large deviations rates $\mathbf I (C_\epsilon)$ are then the well-known \emph{error exponents} that provide decay rates of the corresponding error probabilities.  

In this paper we are concerned with large deviations rates of \emph{distributed} statistical inference, where observations originate at different locations or different entities. Relevant works include algorithms such as consensus+innovations~\cite{GaussianDD},~\cite{Non-Gaussian-DD},~\cite{DDNoisy12},~\cite{DI-Directed-Networks16}, diffusion~\cite{MBMS16InfTheory},~\cite{MBMS16Refined},~\cite{MaranoSayed19OneBit}, and non-Bayesian or social learning~\cite{Jadb2012NonBayesian},~\cite{Shahrampour16DDFiniteTime},~\cite{Lalitha18SLandDHT},~\cite{Mitra21}. The common setup of the above works consists of networked nodes, each holding a local inference vector (parameter estimates, decision variables, beliefs) that is being updated over time. The updates are based on incorporating local, private signals that each agent observes over time, and then exchanging with immediate neighbors and averaging the received information through the well-known DeGroot averaging~\cite{DeGroot74} (also known as consensus). 

Asymptotic performance of distributed detection was studied in~\cite{GaussianDD}, for Gaussian observations,~\cite{Non-Gaussian-DD}, for generic observations, and in~\cite{DDNoisy12}, for networks with noisy communication links. In each of the named works, a randomly switching network topology is assumed and conditions for asymptotic equivalence of an arbitrary network node and a fusion center (with access to all observations) are studied. Reference~\cite{DI-Directed-Networks16} considers directed networks, both static and randomly varying,  and studies the rate function for the vector of states, deriving the exact rate function for the case of static networks, and providing bounds on the exponential rates for randomly switching networks. The rate function for static networks is given as the weighted combination of the local rate functions, with weights being equal to the eigenvector centralities (i.e., the left Perron vector of the consensus matrix). Reference~\cite{MBMS16InfTheory} studies distributed detection for static and symmetric networks and constant step size. For the limiting distribution of the local states, it proves the large deviations principle when the step size parameter decreases and shows that the rate function is equivalent to the centralized detector. These results are refined and extended in~\cite{MBMS16Refined} by studying non-exponential terms and directed (static) networks. Reference~\cite{MaranoSayed19OneBit} further considers distributed detection with 1-bit messages, while recent reference~\cite{Ping22} addresses optimal aggregation strategies for social learning.  

References~\cite{Jadb2012NonBayesian},~\cite{Shahrampour16DDFiniteTime},~\cite{Lalitha18SLandDHT},~\cite{Mitra21} study distributed $M$-ary hypothesis testing, where local updates are formed by applying Bayesian update on the vector of prior beliefs, based on the newly acquired local measurements. Assuming static, directed network, in~\cite{Jadb2012NonBayesian} and~\cite{Shahrampour16DDFiniteTime}, beliefs across immediate neighborhoods are merged through arithmetic average~\cite{Jadb2012NonBayesian}, while~\cite{Lalitha18SLandDHT} adopts geometric average (or, equivalently, arithmetic average on the log-beliefs). A different merging rule is proposed and analyzed in~\cite{Mitra21}, where instead of averaging, beliefs are updated by computing the minimum across the neighbors beliefs and the nodes' locally generated beliefs, showing improvement in the learning rate. Large deviations of the beliefs are addressed in~\cite{Lalitha18SLandDHT}, where it was proven that the log-ratios of beliefs with respect to the belief in the true distribution, satisfy the large deviations principle, with the rate function being equal to the eigenvector-centralities convex combination of the nodes' local rate functions, similarly as in~\cite{DI-Directed-Networks16} and~\cite{MBMS16Refined}. Through the contraction principle,~\cite{Lalitha18SLandDHT} also shows that the (log)-beliefs themselves satisfy the large deviations principle.  

\begin{figure}[thpb]%
 \centering
 \subfloat[Static topology]{\includegraphics[trim =42mm 90mm 42mm 90mm, clip, width=0.6\textwidth]{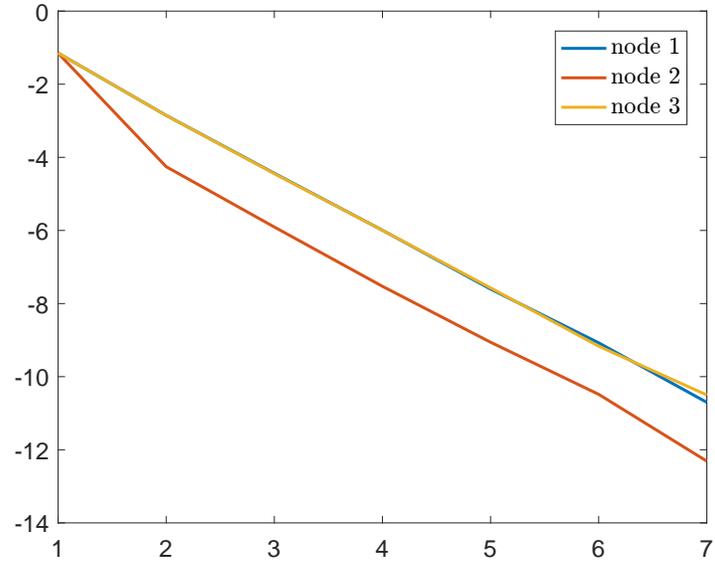}}\\%
 \subfloat[Randomly varying topology]{\includegraphics[trim =42mm 90mm 42mm 90mm, clip, width=0.6\textwidth]{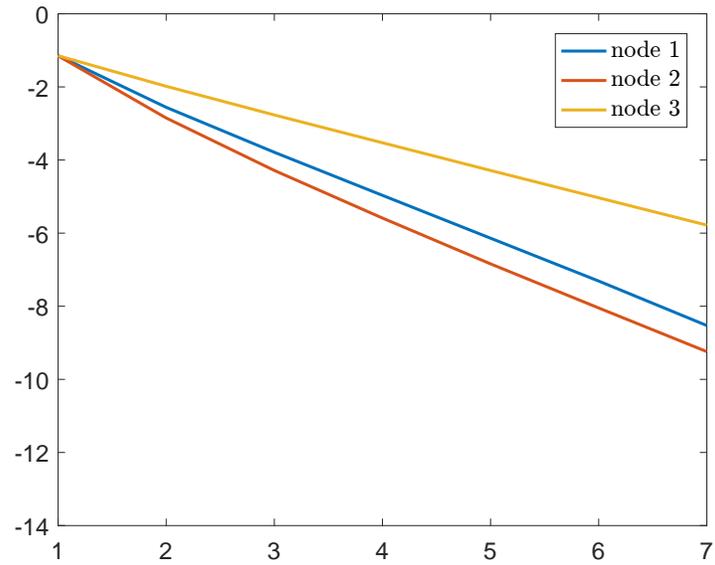}}
 \caption{Decay of the log-probabilities in~\eqref{eq-rate-objective} for a fixed set $C$ for static (top) and random (bottom) $3$ node chain network.}
 \label{Fig-Static-vs-random}%
\end{figure}

\mypar{Contributions} In contrast with the works in~\cite{MBMS16InfTheory}-\cite{Mitra21},  in this paper, we address computation of the rate function for distributed inference on \emph{random networks}. This model shift from static to random networks has fundamental implications on the large deviations performance. To explain this at an intuitive level: when the underlying network is random, consensus mixing of local inference vectors might be disabled for an arbitrary long period of time due to the lack of communications. In general, the topology can then break down into several connected components of the original network\footnote{Note that this is very different from time-varying networks that are typically modelled by the assumption of the so called bounded intercommunication interval, which guarantees that the union graph formed of all communication links occurring in this interval is connected, after a strictly finite time, e.g.,~\cite{Mitra21},~\cite{NedicSL-TV17}}. When in this regime, neither of the nodes can ``see'' the observations beyond the connected component they belong to, and hence the resulting rate function will be strictly lower than that of the full network\footnote{This is a consequence of the non-negativity of the rate function and the fact that it (roughly) scales linearly with the number of observation sources, as detailed in the paper.}. Figure~\ref{Fig-Static-vs-random} illustrates this effect with a toy example of a $3$-node chain where each node produces scalar observations of standard Gaussian distribution. 

In the top figure, we plot the logarithm of the probability in~\eqref{eq-rate-objective} for a ball complement inaccuracy set $C$, when the chain topology is fixed (static). In the bottom figure we plot the same probability, but when the two links of the chain graph alternate at random over time. We label the middle node as node 2, and we let the communication frequency between nodes 1 and 2 be higher (equal to $0.8$) than the one between higher than the one between nodes 2 and 3 (equal to $0.2$). It is clear from the figure that the static topology achieves much steeper decay, and, moreover, this decay is equal at each of the three nodes (and also equal to the decay of the hypothetical fusion center, cf. Section~\ref{sec-Main}, as predicted by the theory). In contrast, in the random case, the difference between the nodes' decays is evident: node 2 achieves the steepest decay, followed by node $1$, while node $3$ has the worst performance.

In this work, we are interested in understanding the rate function of each node in the network and analytically expressing its dependence on the system parameters. For each node, we find a lower and a family of upper bounds on the rate function. This is achieved by carrying out node-specific large deviations analyses. We show that the two bounds match in several cases, such as for pendant nodes and also for nodes in a regular network. The family of upper bounds is indexed by different induced components of the given node, and each function in this family has the form of the convex envelope between the rate function of the full network and the rate function of the respective component, lifted up by the probability of the event that induces the component. The lower bound is given as the convex envelope between the rate function of the full network and the node's local rate function lifted up by the large deviations rate of consensus, whose existence was shown in~\cite{Rate-of-consensus13}. With respect to references~\cite{GaussianDD},~\cite{Non-Gaussian-DD},~\cite{DDNoisy12}, there are several important novelties. First, we extend the results of~\cite{Non-Gaussian-DD} to the case of vector observations and vector inference state. Second, while~\cite{GaussianDD}-\cite{DDNoisy12} only provide a lower bound on the rate function, this work, as described in the above, finds also a family of upper bounds. This is achieved by carefully devising events that impact the rate function, and for which we develop novel large deviations techniques. The discovered upper bounds enable to establish, to the best of our knowledge, the first proof of the large deviations principle for nodes performing DeGroot-based distributed inference in randomly varying networks.  

As an application of particular interest to this study, we consider social learning, specifically the form with the geometric average update~\cite{Lalitha18SLandDHT}.  We show that, with appropriate transformation of the belief iterates -- namely, considering their log-ratios with respect to the belief in the true distribution, the algorithm studied in~\cite{Lalitha18SLandDHT}  exhibits full equivalence to the consensus+innovations algorithm that we analyze here. Building on this equivalence, we characterize the rate function of the beliefs in social learning and provide the first proof of the large deviations principle for social learning run over random networks. 

A closely related work to ours is~\cite{Parasnis20} that studies convergence properties of social learning over random networks. This reference shows that, almost surely, each node is able to correctly identify the true hypothesis. We similarly focus on the case of random networks, but we are additionally concerned with characterizing the \emph{rates} of probabilistic convergence  of the iterates in the sense of large deviations. Finally, we show that almost sure convergence of the beliefs follows from the obtained large deviations rates. 

From the technical perspective, this paper contributes with a novel set of techniques and approaches that could be of interest for further studies of social learning, and more generally, distributed inference in random networks.


\mypar{Notation} For arbitrary $d\in \mathbb N$ we denote by $0_d$ the $d$-dimensional vector of all zeros; by $1_d$ the $d$-dimensional vector of all ones; by $e_i$ the $i$-th canonical vector of $\mathbb R^d$ (that has value one on the $i$-th entry and the remaining entries are zero); by $I_d$ the $d$-dimensional identity matrix; by $J_d$ the $d\times d$ matrix whose all entries equal to $1/d$. For a matrix $A$, we let $[A]_{ij}$ and $A_{ij}$ denote its $i,j$ entry and  for a vector $a\in \mathbb R^d$, we denote its $i$-th entry by $a_i$, $i,j=1,...,d$. For the set of indices $C\subseteq \{1,2,...,N\}$, we let $[A]_C$ (or $A_C$) denote the submatrix of $A$ that corresponds to indices in $C$.
For a function $f:\mathbb R^d\mapsto \mathbb R$, we denote its domain by $\mathcal D_f=\left\{ x\in \mathbb R^d: -\infty <f(x)<+\infty \right\}$; for a set $D \subseteq \mathbb R$, $f^{-1}(D)$ is defined as $f^{-1}(D) = \{x\in \mathbb  R^d: f(x)\in D\}$. $\log$ denotes the natural logarithm. 
For $N\in \mathbb N$, we denote by $\Delta_{N-1}$ the probability simplex in $\mathbb R^N$ and by $\alpha$ the generic element of this set: $\Delta_{N-1}=\left\{\alpha \in \mathbb R^N: \alpha_i \geq 0, \sum_{i=1}^N \alpha_i=1\right\}$. We let $\lambda_{\max}$ and $\lambda_2$, respectively, denote the maximal and the second largest (in modulus) eigenvalue of a square matrix; $\|\cdot\|$ denotes the spectral norm. For a matrix $S\in \mathbb R^{N\times N}$, we let $\mathcal R(S)$ denote the range of $S$, $\mathcal R(S)=\left\{Sx: x\in \mathbb R^N\right\}$. An open Euclidean ball in $\mathbb R^d$ of radius $\rho$ and centered at $x$ is denoted by $B_{x}(\rho)$; the closure, the interior, and the complement of an arbitrary set $D\subseteq \mathbb R^d$ are respectively denoted by $\overline D$, $D^{\mathrm{o}}$, and $D^{\mathrm{c}}$; $\mathcal B(\mathbb R^d)$ denotes the Borel sigma algebra on $\mathbb R^d$; $\mathbb P$ and $\mathbb E$ denote the probability and the expectation operator; $\mathcal N(m,S)$ denotes Gaussian distribution with mean vector $m$ and covariance matrix $S$. For a given graph $H$, $E(H)$ denotes the set of edges of $H$.

\mypar{Paper organization} Section~\ref{sec-model} describes the system model and the algorithm and Section~\ref{sec-LD-metric} introduces the large deviations metric and defines the relevant large deviations quantities. Section~\ref{sec-Main} states the main result of the paper, important corollaries and provides illustration examples. Section~\ref{sec-SL} provides applications of the results to social learning. Proofs of the main result are given in Section~\ref{section-proofs}. 
Section~\ref{sec-Concl} concludes the paper.

\section{System model}
\label{sec-model}
This section explains the system model and the consensus+innovations distributed inference algorithm accompanied by different application examples. Section~\ref{subsec-SL} details the connection to social learning, while Section~\ref{subsec-graph-preliminaries} provides certain preliminaries. 

\mypar{Communication model} We consider a network of $N$ identical agents connected by an arbitrary communication topology. The topology is represented by an undirected graph $\overline G=(V, \overline E)$, where $V=\left\{1,2,...,N\right\}$ is the set of agents, and $\overline E\subseteq {V \choose 2}$ is the set of possible communication links between agents. We assume that during operation of the network each link $\{i,j\}\in\overline E$ may fail, and that correlations between failures of different links are possible. Realization (i.e., a snapshot) of the communication topology at time slot $t$ is denoted by $G_t=(V,E_t)$, for $t=1,2,\ldots,$ where $E_t$ is the set of links that are online at time $t$; note that $E_t\subseteq \overline E$. For an agent $i$, we let $O_{i,t}$ denote the set of neighbors of $i$ at time $t$, $O_{i,t}=\left\{j\in V: \{i,j\}\in E_t\right\}$. %

\mypar{Consensus based distributed estimation} At each time $t$, each sensor $i$ acquires a $d$-dimensional vector of measurements $Z_{i,t}\in \mathbb R^d$. We assume that the measurements $Z_{i,t}$ are independent and identically distributed across sensors and over time. The goal of each sensor is to estimate the state of nature $\theta$, which is the expected value of sensor observations $Z_{i,t}$, $\theta=\mathbb E\left[Z_{i,t}\right]$. To achieve this, an agent $i$ holds a local estimate, called also the state, $X_{i,t}$ and iteratively updates it over time slots $t$. At each slot $t$, agent $i$ performs two steps: 1) the innovation step; and 2) the consensus step. In the innovation step, $i$ acquires $Z_{i,t}$ and incorporates it into the current state $X_{i,t-1}$, by computing the following convex combination, forming an intermediate state:
\begin{equation}
\label{alg-1}
\widehat X_{i,t}= \frac{t-1}{t}X_{i,t-1}+\frac{1}{t} Z_{i,t}.
\end{equation}
It then subsequently transmits $\widehat X_{i,t}$ to (possibly, a subset of) its neighbors in $\overline G$, and, at the same time, receives the intermediate states $\widehat X_{j,t}$, $j\in O_{i,t}$,  from its current neighbors. In the second, consensus, step, agent $i$ computes the convex combination (DeGroot averaging) between its own and the neighbors estimates:
\begin{equation}
\label{alg-2}
X_{i,t}= \sum_{j\in O_{i,t}\cup\{i\}} W_{ij,t} \widehat X_{j,t},
\end{equation}
where $W_{ij,t}$ is the weight that agent $i$ at time $t$ assigns to the estimate of agent $j$. For neat exposition, the weights of all nodes are collected in an $N$ by $N$ matrix $W_t$, such that the $i,j$ entry of $W_t$ equals $W_{ij,t}$, when $j\in O_{i,t}\cup\{i\}$, and equals zero otherwise. Thus, $W_t$ respects the sparsity pattern of $G_t$: if $\{i,j\}\notin E_t$, then $[W_t]_{ij}=[W_t]_{ji}=0$. Also, since the weights at each node form a convex combination, matrix $W_t$ is stochastic. In addition, we assume that, at any time $t$, for any $i,j$, the weights are symmetric at each link, i.e., $W_{ij,t}=W_{ji,t}$, implying that $W_t$ is symmetric.

Denoting by $\Phi(t,s)=W_t\cdots W_s$ for $1\leq s\leq t$, algorithm~\eqref{alg-1}-\eqref{alg-2} can be written as:
\begin{equation}
\label{alg-compact}
X_{i,t}=\frac{1}{t}\,\sum_{s=1}^t \sum_{j=1}^N [\Phi(t,s)]_{i,j}\, Z_{j,s}.
\end{equation}
We analyse algorithm~\eqref{alg-1}-\eqref{alg-2} under the following assumptions on the matrices~$W_t$ and observations~$Z_{i,t}$.
\begin{assumption}[Network and observations random model]
\label{ass-network-and-observation-model}
\leavevmode
\begin{enumerate}
\item Observations $Z_{i,t}$, $i=1,\ldots,N$, $t=1,2,\ldots$ are independent, identically distributed (i.i.d.) across nodes and over time;
\item \label{ass-W-t}
The sequence of matrices $W_t$, $t=1,2,\ldots$ is i.i.d. and for each $t$, every realization of $W_t$ is stochastic, symmetric and has positive diagonals;
\item \label{ass-connected}
$\lambda_2\left( \mathbb E\left[W_t\right]\right)<1,$ or, equivalently, the induced graph $\overline G$ of $\mathbb E\left[W_t\right]$ is connected.
\item Weight matrices $W_t$ are independent from the nodes' observations $Z_{i,s}$ for all $i$, $s$, $t$.
\end{enumerate}
\end{assumption}
We now present different application examples of algorithm~\eqref{alg-1}-\eqref{alg-2}.

\begin{example} [Estimating the distribution of opinions by social sampling]
\label{ex-SOC}
Consider the scenario where a group of $N$ agents wishes to discover the distribution of opinions (e.g., about an event or phenomenon) across a certain, large population. To achieve this, agents continuously poll the population and register responses of individuals. We assume that the respondents' opinions are quantized to $d$ preset opinion summaries:~$\left\{r_1,...,r_d\right\}$. We let $\mathcal R_{i,t}$ denote the opinion (summary) of the person that agent $i$ interviewed at time $t$. Also, let $p_l$ be the probability that the response of a person chosen uniformly at random is $r_l$.  Consider now algorithm~\eqref{alg-1}-\eqref{alg-2} and define the innovation vector $Z_{i,t}$ to be the vector of opinion indicators, $Z_{i,t}=\left( 1_{\{\mathcal R_{i,t}=r_1\}},...,1_{\{\mathcal R_{i,t}=r_d\}}\right)^\top$; again, let the $W_t$'s be arbitrary stochastic matrices. Then, the states of all agents converge to the true opinion distribution, $\left(p_1,\ldots,p_d\right)$, as we show in Section~\ref{sec-Main}, i.e., algorithm~\eqref{alg-1}-\eqref{alg-2} is able to correctly identify the distribution of opinions across a given population, while the rates of this convergence will prove to be highly dependent on the frequency of agents' interactions and interaction patterns.
\end{example}

\begin{example}[Distributed event detection]
\label{ex-D-DET}
Suppose that a wireless sensor network is deployed in a certain area to detect in which of the two possible states the environment is. This problem can be modeled as a binary hypothesis testing problem, where under the state of nature (hypothesis) $\mathbf H_1$, the sensors measurements follow the distribution $f_1$, and similarly for $f_0$, where $f_1$ and $f_0$ are assumed known. We let $Y_{i,t}$ denote the measurement of sensor $i$ at time $t$. We assume that $Y_{i,t}$'s are independent both over time and across different sensors. This hypothesis testing problem can be solved by algorithm~\eqref{alg-1}-\eqref{alg-2} as follows. For each $i$ and $t$, define the innovation $Z_{i,t}$ as the log-likelihood ratio of the node $i$'s measurement at time $t$: $Z_{i,t}= \log \frac{f_1\left(Y_{i,t}\right)} {f_0\left(Y_{i,t}\right)}$. Then, any sensor in the system can, at any given time, make a decision simply by comparing its state $X_{i,t}$ against a prescribed threshold $\gamma$:
\begin{equation}
X_{i,t}\stackrel[\mathcal H_0]{\mathcal H_1}{ \gtreqless}\gamma.
\end{equation}
For further details on distributed detection application, see also~\cite{Non-Gaussian-DD}.
\end{example}

A generalization of the preceding example to $M$-ary hypothesis testing and an application to social learning is given in the next subsection.

\subsection{Social learning}
\label{subsec-SL}
The idea of social learning is for a group of people to distinguish between $M$ different hypotheses, potentially indistinguishable by any given individual,  through local Bayesian updates and collaborative information exchange. Each node $i$ over time draws observations $Y_{i,t}$ from (the true) distribution $f_{i,M}$ (hypothesis $\mathbf H_M$); the remaining $M-1$ candidate distributions that compete  at node $i$ in hypothesis testing are $f_{i,m}$ (hypothesis $\mathbf H_m$), $m=1,...,M-1$. It is assumed that, conditioned on the true hypothesis $\mathbf H_M$, observations at each node are independent over time, and they are also independent from the observations that are generated at any different node. 

We consider here the algorithm for social learning proposed in~\cite{Lalitha18SLandDHT}.  Each node $i$ maintains over time two sets of values (vectors), $q_{i,t}\in \mathbb R^M$ and $b_{i,t}\in \mathbb R^M$, called, respectively, \emph{private} and \emph{public belief} vectors, quantifying node $i$'s beliefs in each of the $M$ hypotheses. The $m$-th entry of $q_{i,t}$, denoted by $q_{i,t}^m\in \mathbb R$, corresponds to the private belief of node $i$ in the $m$-th hypothesis; similarly, the $m$-th entry of $b_{i,t}$, denoted by $b_{i,t}^m\in \mathbb R$, corresponds to the public belief of node $i$ in the $m$-th hypothesis. The values of both public and private belief vectors are between $0$ and $1$: the closer an entry of a belief vector is to $1$ ($0$), the stronger (weaker) is the confidence of the respective node that the corresponding hypothesis is true;  e.g., if for some $m$, $b_{i,t}^m$ equals $1$, this means that node $i$ is fully confident that hypothesis $\mathbf H_m$ is true. 

The algorithm starts at each node with initial private beliefs $q_{i,t}^m>0$, $m=1,...,M-1$. Upon receiving new local observation $Y_{i,t}$, each node $i$ updates its $m$-th public belief as follows: 
\begin{equation}
\label{alg-SL-1}
b_{i,t}^m= \frac{f_{i,m}(Y_{i,t}) q_{i,t-1}^{m}}{\sum_{l=1}^M f_{i,l}(Y_{i,t}) q_{i,t-1}^{l}},    \end{equation}
for each $m=1,...,M$. The node then sends its updated  public belief vector $b_{i,t} = (b_{i,t}^1,...,b_{i,t}^M)^\top$ to all of its neighbors $O_{i,t}$. Upon receiving the neighbors' (public) beliefs, the node updates its private beliefs as follows:
\begin{equation}
\label{alg-SL-2}
 q_{i,t}^{m} = \frac{e^{\sum_{j\in O_{i,t}} W_{ij,t} \log b_{j,t}^{m}}}{ \sum_{l=1}^M e^{ \sum_{j\in O_{i,t}} W_{ij,t} b_{j,t}^{l} }},   
\end{equation}
for each $m=1,...,M$. 

It is easy to verify that both $q_{i,t}$ and $b_{i,t}$ represent valid probability vectors, i.e., $q_{i,t},\,b_{i,t}\in \Delta_{M-1}$.  

\mypar{Connection with algorithm~\eqref{alg-1}-\eqref{alg-2}} Consider the update for the private belief $q_{i,t}^m$ in~\eqref{alg-SL-2}. Computing the log-ratios of $q_{i,t}^{m}$ with ${q_{i,t}^{M}}$ (belief in the true hypothesis $\mathbf H_M$),  the recursion in~\eqref{alg-SL-2} transforms into:
\begin{equation}
\label{alg-SL-2-transformed}
\log \frac {q_{i,t}^{m}}{{q_{i,t}^{M}}} = \sum_{j\in O_{i,t}} W_{ij,t} \log \frac {b_{j,t}^{m}}{{b_{j,t}^{M}}}.   
\end{equation}
Similarly, it is easy to see that the log-ratios of the public beliefs $b_{j,t}^{m}$ with $b_{j,t}^{M}$ can be expressed as:
\begin{equation}
\label{alg-SL-1-transformed}
\log \frac {b_{i,t}^{m}}{{b_{i,t}^{M}}} =  \log \frac {q_{i,t-1}^{m}}{{q_{i,t-1}^{M}}} +  \log \frac{f_{i,m}(Y_{i,t})}{f_{i,M}(Y_{i,t})}.  
\end{equation}
Dividing both sides in~\eqref{alg-SL-2-transformed} and~\eqref{alg-SL-1-transformed} by $t$, we recognize the form in~\eqref{alg-1}-\eqref{alg-2}. Further, denoting, for each $m=1,...,M-1$, 
\begin{align}
\label{eq-Z-i-t-SL}
L_{i,t}^m &= \log \frac{f_{i,m}(Y_{i,t})}{f_{i,M}(Y_{i,t}}\\
\widehat {X}_{i,t}^m & = \frac{1}{t} \log \frac {q_{i,t}^{m}}{{q_{i,t}^{1}}}\\
\label{eq-X-i-t-SL}
X_{i,t}^m & = \frac{1}{t} \log \frac {b_{i,t}^{m}}{{b_{i,t}^{1}}}
\end{align}
 and stacking the per-hypothesis quantities in vector form: $L_{i,t} = \left( L_{i,t}^1,...,L_{i,t}^{M-1}\right)\in \mathbb R^{M-1}$, and $\widehat {X}_{i,t} = \left( \widehat {X}_{i,t}^1,..., \widehat {X}_{i,t}^{M-1}\right)\in \mathbb R^{M-1}$, and $X_{i,t} = \left( X_{i,t}^1,...,X_{i,t}^{M-1}\right)\in \mathbb R^{M-1}$, the exact form in~\eqref{alg-1}-\eqref{alg-2} is obtained, where the innovation vectors $Z_{i,t}$ that algorithm~\eqref{alg-1}-\eqref{alg-2} is fed with are the log-likelihood ratio vectors $L_{i,t}$; note also that, in this application instance, $d=M-1$. Thus,  the generic algorithmic form~\eqref{alg-1}-\eqref{alg-2}  subsumes also the social learning algorithm from~\eqref{alg-SL-1}-\eqref{alg-SL-2} through the described variable transformation. Section~\ref{sec-SL} shows how results of this paper can be used to characterize convergence of beliefs and large deviations rates of social learning, specifically for the case when the weights $W_{ij,t}$ (neighborhoods $O_{i,t}$) in~\eqref{alg-SL-2} are random.  

\subsection{Probabilistic rate of consensus $\mathcal J$}
\label{subsec-graph-preliminaries}
We next define certain concepts and quantities pertinent to the underlying graph process that are needed for later analyses.

\mypar{Components in union graphs} Since the sequence of matrices $W_t$ is i.i.d., the sequence $G_t$ of their underlying topologies is i.i.d. as well. We let $\mathcal G$ denote the set of all topologies on $V$ that have non-zero probability of occurrence at a given time $t$, i.e., $\mathcal G=\left\{ (V,E): \mathbb P\left( G_t=(V,E)\right)>0\right\}$. For convenience, for any undirected, simple graph $H$ on the set of vertices $V$ we denote $p_H=\mathbb P\left(G_t=H\right)$. Thus, for any $H\in \mathcal G$, $p_H>0$. It will also be of interest to consider different subsets of the set of feasible graphs $\mathcal G$. For a collection of undirected simple graphs $\mathcal H$ on $V$ we let $\Gamma_{\mathcal H}=(V,E_{\mathcal H})$ denote the corresponding union graph, that is, $\Gamma_{\mathcal H}$ is the graph with the set of vertices $V$ and whose edge set $E_{\mathcal H}$ is the union of edge sets of all the graphs in $\mathcal H$, $E_{\mathcal H}= \cup_{H \in \mathcal H} E(H)$. We let $p_{\mathcal H}$ denote the probability that $G_t$ belongs to $\mathcal H$, \[p_{\mathcal H}=\sum_{H\in \mathcal H} p_H.\]
We also introduce  -- what we refer to as -- the component of a node in $\mathcal H$.
\begin{definition}[Node component in union graph]
\label{def-union-component}
Let $\mathcal H$ be a given collection of undirected simple graphs on $V$ and let $C_1,...,C_L$ be the components of the union graph $\Gamma(\mathcal H)$. Then, the component of node $i$ in $\mathcal H$, denoted by $C_{i,\mathcal H}$, is the component of $\Gamma(\mathcal H)$ that contains $i$: i.e., if $i\in C_l$, then $C_{i,\mathcal H}=C_l$.
\end{definition}

\mypar{Probabilistic rate of consensus $\mathcal J$} We recall here the rate of consensus, associated with a sequence of random stochastic symmetric matrices, introduced in~\cite{GaussianDD} and subsequently analyzed in~\cite{Rate-of-consensus13}. In~\cite{GaussianDD} and~\cite{Non-Gaussian-DD} we showed that the quantity $\mathcal J$ below, termed the rate of consensus\footnote{The rate of consensus $\mathcal J$ (in~\eqref{def-rate-of-consensus}) is defined slightly differently than the corresponding quantity from~\cite{GaussianDD} and~\cite{Non-Gaussian-DD}. In~\cite{GaussianDD} and~\cite{Non-Gaussian-DD}, in the event $\|W_t\cdots W_1-J_N\|>1/t$, the probability of which we wish to compute, there is a constant $\varepsilon\in (0,1]$ in the place of $1/t$ . However, as we show in~\cite{Rate-of-consensus13}, the two rate quantities coincide when the weight matrices are i.i.d., which is the case that we consider here.}, captures well how the weight matrices $W_t$ affect performance of the estimates $X_{i,t}$ when one is concerned with large deviations metrics:
\begin{equation}
\label{def-rate-of-consensus}
\mathcal J:= - \limsup_{t\rightarrow +\infty}\, \frac{1}{t}\,\log \mathbb P\left( \left\| W_t\cdots W_1-J\right\| > \frac{1}{t} \right).
\end{equation}
%
Rate of consensus $\mathcal J$ is computed exactly in~\cite{Rate-of-consensus13}.
\begin{theorem}[\cite{Rate-of-consensus13}]
\label{theorem-compute-rate-of-consensus}
Let Assumption~\ref{ass-network-and-observation-model}, part~\ref{ass-W-t} hold. Then the $\limsup$ in~\eqref{def-rate-of-consensus} is in fact a limit and the rate of consensus $\mathcal J$ is found by
 \[\mathcal J = |\log p_{\mathcal H^\star}|,\]
 where $p_{\max}$ is the probability of the most likely collection of feasible graphs whose union graph is disconnected,
\begin{equation}
\mathcal H^\star=\arg \max_{\mathcal H\subseteq \mathcal G: \,\Gamma_{\mathcal H}\;\mathrm{disc.}} p_{\mathcal H}.
\end{equation}
\end{theorem}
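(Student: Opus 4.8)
The plan is to identify, from both sides, the exponential decay rate of $\pi_t:=\mathbb P\left(\|W_t\cdots W_1-J\|>1/t\right)$, showing $\lim_{t\to\infty}\tfrac1t\log\pi_t=\log p_{\mathcal H^\star}$. This proves both parts of Theorem~\ref{theorem-compute-rate-of-consensus} at once: the $\limsup$ in~\eqref{def-rate-of-consensus} becomes a genuine limit, and $\mathcal J=-\log p_{\mathcal H^\star}=|\log p_{\mathcal H^\star}|$. The heuristic behind the value is that $W_t\cdots W_1$ can stay a fixed distance away from $J$ only while the instantaneous topologies $G_1,\dots,G_t$ keep accumulating into a \emph{disconnected} union graph, and that the cheapest way of sustaining that for $t$ slots is to draw every $G_s$ from the most likely disconnecting family $\mathcal H^\star$, at total cost $p_{\mathcal H^\star}^{\,t}$.

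\emph{Lower bound on $\pi_t$ (the easy half, giving $\mathcal J\le|\log p_{\mathcal H^\star}|$).} Restrict to the event $\{G_s\in\mathcal H^\star,\ s=1,\dots,t\}$, of probability $p_{\mathcal H^\star}^{\,t}$ by the i.i.d.\ hypothesis (Assumption~\ref{ass-network-and-observation-model}, part~\ref{ass-W-t}). On it the union of $G_1,\dots,G_t$ is a subgraph of the disconnected graph $\Gamma_{\mathcal H^\star}$, so some nontrivial partition $V=A\sqcup B$ carries no edge of any $G_s$; hence every $W_s$, and therefore $W_t\cdots W_1$, is block diagonal with doubly stochastic diagonal blocks for this partition. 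Evaluating $W_t\cdots W_1-J$ at the unit vector that is constant on $A$, constant on $B$, and orthogonal to $1_N$ gives $\|W_t\cdots W_1-J\|\ge 1>1/t$ for $t\ge 2$. Hence $\pi_t\ge p_{\mathcal H^\star}^{\,t}$ and $\liminf_t\tfrac1t\log\pi_t\ge\log p_{\mathcal H^\star}$.

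\emph{Upper bound on $\pi_t$ (the substantive half, giving $\mathcal J\ge|\log p_{\mathcal H^\star}|$).} Fix a block length $k$, cut $\{1,\dots,t\}$ into $m=\lfloor t/k\rfloor$ consecutive length-$k$ blocks, and let $\xi_1,\dots,\xi_m$ be the i.i.d.\ copies of $\|W_k\cdots W_1-J\|$ they produce. Since $\|W_s-J\|\le 1$ and $(W-J)(W'-J)=WW'-J$, submultiplicativity gives $\|W_t\cdots W_1-J\|\le\prod_{j=1}^m\xi_j$, so Markov's inequality yields $\pi_t\le\mathbb P(\prod_{j=1}^m\xi_j\ge 1/t)\le t^{\lambda}(\mathbb E[\xi_1^{\lambda}])^{m}$ for every $\lambda>0$; taking $\tfrac1t\log(\cdot)$ and letting $t\to\infty$ gives $\limsup_t\tfrac1t\log\pi_t\le\tfrac1k\log\mathbb E[\xi_1^{\lambda}]$, and then, optimizing over $\lambda$ and using $\xi_1\in[0,1]$ with dominated convergence, $\limsup_t\tfrac1t\log\pi_t\le\tfrac1k\log\!\left(\inf_{\lambda>0}\mathbb E[\xi_1^{\lambda}]\right)=\tfrac1k\log\mathbb P(\xi_1=1)$. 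The key deterministic fact — valid realization by realization because every $W_s$ has positive diagonal — is the dichotomy: $\|W_k\cdots W_1-J\|=1$ if $\Gamma(\{G_1,\dots,G_k\})$ is disconnected, and $<1$ otherwise (in the connected case, $\|W_k\cdots W_1\,v\|=\|v\|$ for $v\perp 1_N$ would force $v$ to be constant on the components of every $G_s$, hence on the connected union, so $v=0$). Thus $\mathbb P(\xi_1=1)=q_k:=\mathbb P(\Gamma(\{G_1,\dots,G_k\})\text{ disconnected})$. Finally $q_k\ge p_{\mathcal H^\star}^{\,k}$ is immediate, while picking a cut avoided in any disconnected union and summing over the at most $2^{N}$ cuts gives $q_k\le 2^{N}p_{\mathcal H^\star}^{\,k}$, because the family of feasible graphs avoiding a fixed cut has disconnected union, so its probability is $\le p_{\mathcal H^\star}$. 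Hence $\tfrac1k\log q_k\to\log p_{\mathcal H^\star}$ as $k\to\infty$, so $\limsup_t\tfrac1t\log\pi_t\le\log p_{\mathcal H^\star}$; with the lower bound, the limit exists and equals $\log p_{\mathcal H^\star}$.

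\emph{Main obstacle.} The delicate point is extracting the \emph{exact} exponent in the upper bound. Working at the single-slot level ($\|W_t\cdots W_1-J\|\le\prod_s|\lambda_2(W_s)|$) only recovers $\mathcal J\ge|\log\mathbb P(G_1\text{ disconnected})|$, which is in general strictly weaker than $|\log p_{\mathcal H^\star}|$ — a single disconnected $G_s$ already makes the per-slot factor equal to $1$, whereas the \emph{union} over a block can still mix — so one must gauge by the block quantity $\|W_k\cdots W_1-J\|$, push $\lambda\to\infty$ in the Chernoff bound \emph{before} sending $k\to\infty$, and establish the deterministic dichotomy together with the bipartition union bound on $q_k$. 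This is precisely where the positive-diagonal assumption and the interpretation of $p_{\mathcal H^\star}$ as the largest probability of a feasible family that misses some cut are used; assembling these estimates (along the lines of~\cite{Rate-of-consensus13}) is the core of the argument.
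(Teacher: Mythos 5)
Theorem~\ref{theorem-compute-rate-of-consensus} is stated in the manuscript as a result imported from~\cite{Rate-of-consensus13}, and the paper itself provides no proof; there is therefore no in-text proof to compare against, so your proposal must be judged on its own. On those terms it is correct and complete, and it recovers the exact exponent. The lower bound $\pi_t\ge p_{\mathcal H^\star}^{\,t}$ is obtained by restricting to $\{G_s\in\mathcal H^\star,\ s\le t\}$ and exhibiting a cut vector (constant on $A$, constant on $V\setminus A$, orthogonal to $1_N$) that is fixed by the block-diagonal product, forcing $\|W_t\cdots W_1-J\|=1$. The matching upper bound correctly assembles the four essential ingredients: (i) the telescoping identity $(W_t-J)\cdots(W_1-J)=W_t\cdots W_1-J$ grouped into $k$-blocks gives $\|W_t\cdots W_1-J\|\le\prod_{j=1}^{m}\xi_j$ with i.i.d.\ $\xi_j$; (ii) the Chernoff/Markov step with the order of limits $\lambda\to\infty$ before $k\to\infty$, using $\xi_1\in[0,1]$ and monotone convergence to identify $\inf_\lambda\mathbb E[\xi_1^\lambda]=\mathbb P(\xi_1=1)$; (iii) the per-realization dichotomy $\xi_1=1\Leftrightarrow\Gamma(\{G_1,\dots,G_k\})$ disconnected; and (iv) the bound $p_{\mathcal H^\star}^{\,k}\le q_k\le 2^N p_{\mathcal H^\star}^{\,k}$ via a union bound over bipartitions, which gives $\tfrac1k\log q_k\to\log p_{\mathcal H^\star}$. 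Combining, the $\limsup$ and $\liminf$ of $\tfrac1t\log\pi_t$ agree at $\log p_{\mathcal H^\star}$, yielding the stated limit and $\mathcal J=|\log p_{\mathcal H^\star}|$.

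The one place that deserves a sentence more is the ``only if'' direction of the dichotomy. You correctly flag that the positive-diagonal assumption is doing the work, but the mechanism should be stated: for a symmetric stochastic $W_s$ with positive diagonal, $-1$ is excluded from the spectrum, so the equality chain $\|W_s v_{s-1}\|=\|v_{s-1}\|$ (with $v_{s-1}\perp 1_N$) forces $v_{s-1}$ into the eigenvalue-$1$ eigenspace, i.e.\ $W_s v_{s-1}=v_{s-1}$, and hence $v$ is constant on each component of every $G_s$ and thus on the connected union, a contradiction. Without ruling out $-1$, one would only get $W_s^2 v_{s-1}=v_{s-1}$, which is weaker. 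With that detail filled in, the argument stands.
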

In the next example we consider an important special case when links in $\overline G$ fail independently at random.

\begin{example}[Random topologies with i.i.d. link failures]
\label{ex-Iid-failures}
Consider the random model for $W_t$ defined by Assumption~\ref{ass-network-and-observation-model}.\ref{ass-W-t} where each link in $\overline G$ fails independently from other links with probability $1-p$. Applying Theorem~\ref{theorem-compute-rate-of-consensus}, it can be shown that 
\begin{equation}
    \mathcal J = \mathrm{min\,cut}\, (\overline G) |\log(1-p)|,
\end{equation} where $\mathrm{min\,cut}\,(\overline G)$ is the minimum edge cut of the graph $\overline G$; for example, if $\overline G$ is a chain, then $\mathrm{min\,cut}\,(\overline G)=1$. The details of this derivation can be found in~\cite{Rate-of-consensus13}. 
\end{example}

For finite time analyses, of relevance is the following variant of~\eqref{def-rate-of-consensus}: for any $\epsilon>0$, there exists a positive constant $K_{\epsilon}$ such that for all $t$,
\begin{equation}
\label{eq-rate-of-consensus-epsilon}
\mathbb P\left( \left\| W_t\cdots W_s-J_N\right\| > \frac{1}{t} \right) \leq K_{\epsilon}e^{- (t-s)\,(\mathcal J-\epsilon)}.
\end{equation}

\section{Problem formulation: The metric of large deviations}
\label{sec-LD-metric}
Section~\ref{sec-model} illustrate uses of algorithm~\eqref{alg-1}-\eqref{alg-2} for several applications: multi-agent polling with cooperation, in Example~\ref{ex-SOC}, fully distributed hypothesis testing, in Example~\ref{ex-D-DET}, and social learning, in Section~\ref{subsec-SL}. We now introduce the rates of large deviations that we adopt as performance metric for applications of algorithm~\eqref{alg-1}-\eqref{alg-2}.

\mypar{Rate function~$I$ and the large deviations principle}
\begin{definition}[Rate function $I$~\cite{DemboZeitouni93}]
\label{def-Rate-function}
Function $I:\mathbb R^d\mapsto [0,+\infty]$ is called a \emph{rate function} if it is lower semicontinuous, or,
equivalently, if its level sets are closed. If, in addition, the level sets of $I$ are compact (i.e., closed and bounded), then $I$ is called a good rate function.
\end{definition}

\begin{definition}[The large deviations principle~\cite{DemboZeitouni93}]
\label{def-LDP}
Suppose that $I:\mathbb R^d\mapsto [0,+\infty]$ is lower semicontinuous. A sequence of measures $\mu_t$ on $\left(\mathbb R^d,\mathcal B\left(\mathbb R^d\right)\right)$, $t\geq 1$, is said to satisfy the large deviations principle (LDP) with rate function~$I$ if, for any measurable set $D\subseteq \mathbb R^d$, the following two conditions hold:
\begin{enumerate}
\item\label{eqn-LDP-UB}
$\displaystyle \limsup_{t\rightarrow +\infty}\,\frac{1}{t}\,\log\mu_t(D)\leq -\,\inf_{x\in \overline D} I(x);$
\item\label{eqn-LDP-LB}
$\displaystyle \liminf_{t\rightarrow +\infty}\,\frac{1}{t}\,\log\mu_t(D)\geq -\,\inf_{x\in D^{\mathrm{o}}} I(x).$
\end{enumerate}
\end{definition}

Differently than with the case of static topologies, when topologies and/or weight matrices $W_t$ are random, finding the rate function of an arbitrary node performing distributed inference is a very difficult problem~\cite{Non-Gaussian-DD,Soummya-LDRiccati14}. (In fact, even the existence of the LDP is not known a priori.) Our approach is to find functions $\overline I_i$ and $\underline I_i:\mathbb R^d\mapsto \mathbb R$, such that, for any measurable set $D$:
\begin{align}
\label{eq-rate-UB}
\limsup_{t\rightarrow +\infty}\,\frac{1}{t}\,\log\,\mathbb P\left(X_{i,t} \in D  \right) &\leq - \inf_{x\in \overline D} \underline I_i(x),\\
\label{eq-rate-LB}
\liminf_{t\rightarrow +\infty}\,\frac{1}{t}\,\log\,\mathbb P\left(X_{i,t} \in D  \right) &\geq - \inf_{x\in D^{\mathrm{o}}} \overline I_i(x).
\end{align}
At a high level, this is analytically achieved by carefully constructing events the probabilities of which upper and lower bound the probability of the event of interest in~\eqref{eq-rate-UB} and~\eqref{eq-rate-LB}. We remark that functions $\underline I_i$ and $\overline I_i$ that we seek should satisfy~\eqref{eq-rate-UB} and~\eqref{eq-rate-LB} for any given set $D$, i.e., similarly as with the rate function $I_i$, to find bounds on the exponential rates for a given rare event $\{X_{i,t}\in D\}$, it suffices to perform minimizations of $\underline I_i$ and $\overline I_i$ over $D$. This property is very important, as once $\underline I_i$ and $\overline I_i$ are discovered, any inaccuracy rate can be easily estimated without the need to do any (further) large deviations analyses.  

As we show in Appendix~\ref{app:rate-function-bounds}, if for some node $i$ the LDP holds and~\eqref{eq-rate-UB} and~\eqref{eq-rate-LB} are satisfied for any $D$, then
\begin{equation}
\label{eq-rate-function-bounds}
\underline I_i(x)\leq I_i(x)\leq \overline I_i(x),\;x\in \mathbb R^d,
\end{equation}
i.e., the graph of the LDP rate function $I_i$ lies between the graphs of $\overline I_i$ and $\underline I_i$.

\mypar{Log-moment generating function of observations~$Z_{i,t}$ and its conjugate} We proceed standardly by introducing the log-moment generating function of the observation vectors~$Z_{i,t}$, which we denote by~$\Lambda$. The log-moment generating function $\Lambda:\,{\mathbb R}^d \rightarrow \mathbb R \cup \{+\infty\}$ corresponding to $Z_{i,t}$ is defined by:
\begin{equation}
\label{def-Lmgf}
\Lambda(\lambda)=\log \mathbb E\left[ e^{\lambda^\top Z_{i,t}}\right],\:\:\mathrm{for\:\:}\lambda \in \mathbb R^d.
\end{equation}
We make the assumption that $\Lambda$ is finite at all points.%
\begin{assumption}
\label{ass-finite-at-all-points}
$\mathcal D_{\Lambda}= {\mathbb R}^d$, i.e., $\Lambda(\lambda)<+\infty$ for all $\lambda \in \mathbb R^d$.
\end{assumption}

Besides the log-moment generating function $\Lambda$, the second key object in large deviations analysis is the Fenchel-Legendre transform, or the conjugate, of~$\Lambda$, defined by
\begin{equation}
\label{def-Conjugate}
I(x)=\sup_{\lambda\in \mathbb R^d} x^\top \lambda - \Lambda(\lambda),\:\:\mathrm{for\:\:}x\in \mathbb R^d.
\end{equation}
Log-moment generating function and its conjugate enjoy many nice properties, such as convexity and differentiability in the interior of the function's domain~\cite{DemboZeitouni93},~\cite{Hollander}. We list the properties that are relevant for the current analysis in the next lemma. Recall that $\theta = \mathbb E[Z_{i,t}]$.

\begin{lemma}[Properties of $\Lambda$ and $I$]
\label{lemma-properties}
\leavevmode
\begin{enumerate}
    \item \label{part-Lambda-cvxity} $\Lambda$ is convex and differentiable on $\mathbb R^d$;
    \item \label{part-Lambda-at-0} $\Lambda(0)=0$ and $\nabla \Lambda (0)=\theta$;
    \item \label{part-I-cvxity} $I$ is strictly convex; 
    \item \label{part-Lambda-I-gradient-connection} if $x=\nabla \Lambda(\lambda)$ for some $\lambda\in \mathbb R^d$, then $I(x)= \lambda^\top x - \Lambda(\lambda)$;
    \item \label{part-I-at-m} $I (x)\geq 0$ with equality if and only if $x=\theta$. 
\end{enumerate}
\end{lemma}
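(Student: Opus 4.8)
The plan is to establish each of the five parts by invoking standard convex-analytic facts about log-moment generating functions and their Fenchel--Legendre conjugates, using crucially Assumption~\ref{ass-finite-at-all-points} ($\mathcal D_\Lambda = \mathbb R^d$), together with the fact that $Z_{i,t}$ is a genuine random vector (so that its distribution is not concentrated on a lower-dimensional affine subspace in a degenerate way --- or, if it is, the analysis is carried out on that subspace). I would organize the argument in the order the parts are stated, since later parts lean on earlier ones.

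\emph{Part~1.} Convexity of $\Lambda$ follows from H\"older's inequality applied to $\mathbb E[e^{(\kappa\lambda_1 + (1-\kappa)\lambda_2)^\top Z}]$, splitting the integrand as a product of $e^{\kappa \lambda_1^\top Z}$ and $e^{(1-\kappa)\lambda_2^\top Z}$ with conjugate exponents $1/\kappa$ and $1/(1-\kappa)$. Differentiability on all of $\mathbb R^d$ is where Assumption~\ref{ass-finite-at-all-points} does the work: since $\Lambda$ is finite on an open neighborhood of every $\lambda$, dominated convergence lets one differentiate under the expectation, giving $\nabla\Lambda(\lambda) = \mathbb E[Z e^{\lambda^\top Z}]/\mathbb E[e^{\lambda^\top Z}]$, and one checks this is continuous (indeed smooth) by the same domination argument. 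I would cite~\cite{DemboZeitouni93} (the steepness/essential-smoothness discussion, e.g.\ their Lemma on $\Lambda$) rather than redo this.

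\emph{Part~2.} $\Lambda(0) = \log\mathbb E[e^0] = \log 1 = 0$ is immediate, and $\nabla\Lambda(0) = \mathbb E[Z\,e^{0}]/\mathbb E[e^0] = \mathbb E[Z] = \theta$ from the gradient formula in Part~1. \emph{Part~4} is the standard duality identity: if $x = \nabla\Lambda(\lambda)$ then $\lambda$ is a maximizer of $\mu \mapsto x^\top\mu - \Lambda(\mu)$ (its gradient $x - \nabla\Lambda(\mu)$ vanishes at $\mu=\lambda$, and the objective is concave by Part~1), so the supremum defining $I(x)$ is attained at $\lambda$, giving $I(x) = \lambda^\top x - \Lambda(\lambda)$. \emph{Part~5} then combines Parts~2 and~4: by Jensen, $\Lambda(\lambda) = \log\mathbb E[e^{\lambda^\top Z}] \ge \mathbb E[\lambda^\top Z] = \lambda^\top\theta$, so $I(\theta) = \sup_\lambda (\lambda^\top\theta - \Lambda(\lambda)) \le 0$; and $I \ge 0$ always since $\lambda = 0$ gives $x^\top 0 - \Lambda(0) = 0$ as a lower bound on the sup. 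For the equality case, $I(x) = 0$ together with $x = \nabla\Lambda(\lambda)$ for the optimal $\lambda$ forces, via strict convexity (Part~3), that $\lambda = 0$ (the unique minimizer of $\mu\mapsto \Lambda(\mu) - x^\top\mu$ when the minimum value is $0 = \Lambda(0) - x^\top 0$ after checking $\lambda=0$ is a critical point, which needs $x=\theta$); alternatively, equality in Jensen forces $\lambda^\top Z$ to be a.s.\ constant, and ranging over $\lambda$ in a spanning set forces $Z = \theta$ a.s., contradicting nondegeneracy unless $x = \theta$.

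\emph{Part~3} --- strict convexity of $I$ --- is the part I expect to be the main obstacle, because in general the conjugate of a finite convex function need only be convex, not strictly convex; strictness requires $\Lambda$ to be \emph{essentially smooth} and \emph{differentiable everywhere with the gradient map being injective}, equivalently $\Lambda$ strictly convex, equivalently the covariance matrix $\mathrm{Cov}(Z)$ (which equals the Hessian $\nabla^2\Lambda(0)$, and more generally the Hessian of $\Lambda$ is the covariance under the tilted measure) is positive definite. So the honest statement needs the standing nondegeneracy assumption that $Z_{i,t}$ is not supported on a proper affine subspace of $\mathbb R^d$; I would either point out this is implicit in the model or add it as a remark. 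Granting that, $\nabla^2\Lambda(\lambda) = \mathrm{Cov}_{\lambda}(Z) \succ 0$ for all $\lambda$, so $\Lambda$ is strictly convex, hence $\nabla\Lambda$ is a bijection from $\mathbb R^d$ onto the (open) interior of the convex hull of the support, and the conjugate $I$ is differentiable there with $\nabla I = (\nabla\Lambda)^{-1}$, whose Jacobian $(\nabla^2\Lambda)^{-1} \succ 0$; this gives strict convexity of $I$ on the interior of its domain, and lower semicontinuity (automatic for a conjugate) extends it. I would present this via the standard Legendre-transform duality between essentially smooth and essentially strictly convex functions~\cite{DemboZeitouni93},~\cite{Hollander} rather than computing Hessians by hand.
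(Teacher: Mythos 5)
Your arguments for parts~1, 2, 4 and~5 are correct and track the paper's approach closely: the paper likewise cites~\cite{DemboZeitouni93} for the standard facts, obtains $\nabla\Lambda(\lambda)=\mathbb E[Z_{i,t}\,e^{\lambda^\top Z_{i,t}}]/\mathbb E[e^{\lambda^\top Z_{i,t}}]$ by differentiating under the integral, derives $I(\theta)=0$ by combining parts~2 and~4, and then invokes strict convexity to conclude that $\theta$ is the unique minimizer, giving the ``if and only if'' in part~5.

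The gap is in part~3. You assert that strict convexity of $I$ ``requires $\Lambda$ to be essentially smooth and differentiable everywhere with the gradient map being injective, equivalently $\Lambda$ strictly convex,'' and on this basis you want to add a nondegeneracy hypothesis (positive-definite covariance of $Z_{i,t}$). This misstates the Legendre--Fenchel duality and is not what the paper assumes. The duality (Rockafellar's theorem on essentially smooth/essentially strictly convex conjugate pairs; see also~\cite{DemboZeitouni93},~\cite{Hollander}) pairs essential \emph{smoothness} of $\Lambda$ with essential strict \emph{convexity} of $\Lambda^\star$: strict convexity of the conjugate $I$ is dual to smoothness of $\Lambda$, not to strict convexity of $\Lambda$, so injectivity of $\nabla\Lambda$ is not needed. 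Under Assumption~\ref{ass-finite-at-all-points} alone, $\Lambda$ is finite and differentiable on all of $\mathbb R^d$ and hence essentially smooth (the boundary condition is vacuous), and $I$ is strictly convex on its effective domain with no nondegeneracy hypothesis; when $Z_{i,t}$ is supported on a proper affine subspace, $I\equiv+\infty$ off the affine hull and is strictly convex on it, exactly the behavior displayed in Example~\ref{example-Guassian-diff-means-equal-vars}. The paper gets this directly from the cited reference~\cite{Vysotsky21}, which establishes strict convexity of the rate function of a random vector precisely under the everywhere-finiteness condition in Assumption~\ref{ass-finite-at-all-points}. Keeping that generality matters: the social-learning instantiation in Section~\ref{sec-SL} produces innovation vectors $L_{i,t}$ concentrated on a line in $\mathbb R^{M-1}$, which the extra hypothesis you propose would rule out.
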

 
 Proofs of~\ref{part-Lambda-cvxity}-\ref{part-I-at-m} (with a weaker form of the claim in part~\ref{part-I-cvxity} -- with strict convexity replaced by convexity, and with non-negativity only in part~\ref{part-I-at-m}) can be found in~\cite{DemboZeitouni93}. The proof of strict convexity of $I$ under Assumption~\ref{ass-finite-at-all-points} can be found in~\cite{Vysotsky21}.  We briefly comment on properties~\ref{part-Lambda-at-0} and~\ref{part-I-at-m}, to give some (mathematical) intuition as to why these properties hold, where we note that of particular, practical relevance is~\ref{part-I-at-m}.   Plugging in $\lambda=0$ in the defining equation of $\Lambda$,~\eqref{def-Lmgf}, it is easy to see that $\Lambda(0)=0$. Similarly, it can be shown that, for any $\lambda$, $\nabla \Lambda(\lambda) = \mathbb E[Z_{i,t} e^{\lambda^\top Z_{i,t}}]/\mathbb E [e^{\lambda^\top Z_{i,t}}]$. Evaluating at $\lambda=0$, the property  $\nabla \Lambda (0)=\theta$ follows. Property~\ref{part-I-at-m} has a very intuitive meaning: the rate function is non-negative and also equals zero at the mean value. To see why the latter holds, it suffices to invoke properties from part~\ref{part-Lambda-at-0} in~\ref{part-Lambda-I-gradient-connection}; note also that, since $I$ is non-negative, $\theta$ is a minimizer of $I$. The if and only if part then follows from strict convexity of $I$, which implies uniqueness of its minimizer $\theta$. We will show practical implications of this property when considering large deviations rate of the sequence $X_{i,t}$. 

The following result, proven in~\cite{DI-Directed-Networks16}, gives fundamental large deviations upper and lower bound for the inference sequence $X_{i,t}$. The result holds for arbitrary stochastic weight matrices $W_t$ and, in particular, for directed topologies as well. This result will be invoked when proving tightness and optimality of our rate function bounds for certain classes of networks, in Section~\ref{subsec-Interpretations-and-Corollaries}.
\begin{lemma}[Fundamental distributed inference bounds]
\label{lemma-fundamental-bounds}
Consider algorithm~\eqref{alg-1}-\eqref{alg-2} under Assumptions~\ref{ass-network-and-observation-model} and~\ref{ass-finite-at-all-points}. Then~\eqref{eq-rate-UB} and~\eqref{eq-rate-LB} hold with $\overline I_i=NI$ and $\underline I_i=I$, for all $i$.
\end{lemma}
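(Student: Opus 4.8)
The two inequalities are established by separate arguments: the upper bound (\eqref{eq-rate-UB} with $\underline I_i=I$) will follow from a domination, uniform over all network realizations, of the moment generating function of $X_{i,t}$ by that of a single observation, and the lower bound (\eqref{eq-rate-LB} with $\overline I_i=NI$) from an exponential change of measure applied to all $Nt$ observation streams at once. Throughout I work from the closed form~\eqref{alg-compact}, using that $\Phi(t,s)=W_t\cdots W_s$ is a product of row-stochastic matrices and hence itself row-stochastic, so that each row $\bigl([\Phi(t,s)]_{i1},\dots,[\Phi(t,s)]_{iN}\bigr)$ is a probability vector and $X_{i,t}$ is a convex combination of $\{Z_{j,s}:\,j\le N,\ s\le t\}$.

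\emph{Upper bound.} Fix $i$ and condition on $\mathcal W_t:=\sigma(W_1,\dots,W_t)$; by Assumption~\ref{ass-network-and-observation-model} the observations stay i.i.d.\ and independent of $\mathcal W_t$, so the conditional moment generating function factorizes and $\tfrac1t\log\mathbb E\bigl[e^{t\lambda^{\top}X_{i,t}}\mid\mathcal W_t\bigr]=\tfrac1t\sum_{s=1}^{t}\sum_{j=1}^{N}\Lambda\bigl([\Phi(t,s)]_{ij}\lambda\bigr)$. For $w\in[0,1]$ convexity of $\Lambda$ and $\Lambda(0)=0$ (Lemma~\ref{lemma-properties}\ref{part-Lambda-cvxity}--\ref{part-Lambda-at-0}) give $\Lambda(w\lambda)\le w\Lambda(\lambda)$, so summing over $j$ at each $s$ yields $\sum_j\Lambda([\Phi(t,s)]_{ij}\lambda)\le\Lambda(\lambda)$, whence $\tfrac1t\log\mathbb E[e^{t\lambda^{\top}X_{i,t}}\mid\mathcal W_t]\le\Lambda(\lambda)$ for every $\lambda$ and every realization of the weights. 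Taking expectations, $\tfrac1t\log\mathbb E[e^{t\lambda^{\top}X_{i,t}}]\le\Lambda(\lambda)$ for all $t$. Since $\Lambda$ is finite everywhere (Assumption~\ref{ass-finite-at-all-points}), this domination supplies exponential tightness and, by the standard Chernoff/covering (G\"artner--Ellis-type) upper bound~\cite{DemboZeitouni93}, $\limsup_t\tfrac1t\log\mathbb P(X_{i,t}\in D)\le-\inf_{x\in\overline D}\Lambda^{*}(x)=-\inf_{x\in\overline D}I(x)$, which is~\eqref{eq-rate-UB} with $\underline I_i=I$.

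\emph{Lower bound.} It suffices to prove $\liminf_t\tfrac1t\log\mathbb P(X_{i,t}\in B_x(\rho))\ge-NI(x)$ for every $\rho>0$ and every $x$ with $I(x)<\infty$; a general open $D$ then follows by picking $x\in D^{\mathrm o}$ with $I(x)$ within $\delta$ of $\inf_{D^{\mathrm o}}I$, where, sliding $x$ slightly toward $\theta=\nabla\Lambda(0)$ and using convexity of $I$, I may additionally assume $x=\nabla\Lambda(\lambda^{\star})$ for some $\lambda^{\star}$, so that $I(x)=\lambda^{\star\top}x-\Lambda(\lambda^{\star})$ by Lemma~\ref{lemma-properties}\ref{part-Lambda-I-gradient-connection}. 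I would then introduce the measure $\widetilde{\mathbb P}$ under which the $Z_{j,s}$ are i.i.d.\ with the tilted distribution $dP_{\lambda^{\star}}(z)=e^{\lambda^{\star\top}z-\Lambda(\lambda^{\star})}dP(z)$ ($P$ the common law of the observations) and the weight matrices keep their original law, independent of the observations. Under $\widetilde{\mathbb P}$ the observation mean equals $\nabla\Lambda(\lambda^{\star})=x$ and Assumptions~\ref{ass-network-and-observation-model}--\ref{ass-finite-at-all-points} still hold, so both the pooled empirical mean $\bar Z_t:=\tfrac1{Nt}\sum_{s=1}^{t}\sum_{j=1}^{N}Z_{j,s}$ and the state $X_{i,t}$ converge to $x$ in $\widetilde{\mathbb P}$-probability (the latter by the consensus+innovations convergence established in Section~\ref{sec-Main}); hence $\widetilde{\mathbb P}(X_{i,t}\in B_x(\rho),\ \bar Z_t\in B_x(\rho'))\to1$ for every $\rho'>0$. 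On the $\sigma$-algebra generated by $\{Z_{j,s},W_u:\,s,u\le t\}$ the likelihood ratio is $\tfrac{d\mathbb P}{d\widetilde{\mathbb P}}=\prod_{s,j}e^{-\lambda^{\star\top}Z_{j,s}+\Lambda(\lambda^{\star})}=e^{-Nt(\lambda^{\star\top}\bar Z_t-\Lambda(\lambda^{\star}))}$, which on the event $\{\bar Z_t\in B_x(\rho')\}$ is at least $e^{-Nt(I(x)+\rho'\|\lambda^{\star}\|)}$ by the identity $\lambda^{\star\top}x-\Lambda(\lambda^{\star})=I(x)$. Therefore
\[
\mathbb P(X_{i,t}\in B_x(\rho))\ \ge\ e^{-Nt(I(x)+\rho'\|\lambda^{\star}\|)}\;\widetilde{\mathbb P}\bigl(X_{i,t}\in B_x(\rho),\ \bar Z_t\in B_x(\rho')\bigr);
\]
taking $\tfrac1t\log$, letting $t\to\infty$ and then $\rho'\to0$ gives the claim, i.e.\ \eqref{eq-rate-LB} with $\overline I_i=NI$.

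\emph{Main obstacle.} On the upper-bound side the essential point is the uniform-in-weights domination $\tfrac1t\log\mathbb E[e^{t\lambda^{\top}X_{i,t}}\mid\mathcal W_t]\le\Lambda(\lambda)$: it is what forces an arbitrary node to do no worse, exponentially, than one processing a single observation per step, \emph{however poorly connected the random network happens to be}, and it uses nothing beyond row-stochasticity of $\Phi(t,s)$ and convexity of $\Lambda$. On the lower-bound side the two delicate points I expect are (i) that the tilt must be applied to \emph{all} $N$ streams simultaneously — this is precisely where the multiplier $N$ enters, since the likelihood ratio then carries $Nt$ tilted observations — and (ii) that intersecting with $\{\bar Z_t\in B_x(\rho')\}$ must be shown to cost nothing exponentially, which is handled above because that event has $\widetilde{\mathbb P}$-probability tending to one; a minor additional care places $x$ in the interior of $\mathcal D_I$ so that the tilt $\lambda^{\star}$ exists.
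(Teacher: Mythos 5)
Your proposal is correct and takes the classical Cram\'er-type route: conditional moment-generating-function domination for the upper bound and an all-streams exponential tilt for the lower bound. The paper does not reproduce a proof of this lemma at all --- it is imported from~\cite{DI-Directed-Networks16} --- and, instead, its own machinery (Theorem~\ref{theorem-nice-tight-bounds}, together with Remark~\ref{remark-I-NI-I_i-implied-bounds}) recovers the lemma only as a byproduct of the sharper bounds $I^\star$ and $I_{i,\mathcal H}$, which require conditioning on topology events $\mathcal E_\theta^t$ and a G\"artner--Ellis argument. Your upper bound is exactly the right-hand inequality of Lemma~\ref{simple-lemma} ($\sum_j\Lambda(\alpha_j\lambda)\le\Lambda(\lambda)$) summed over $s$, followed by the finite-cover Chernoff argument that in the paper appears as Lemma~\ref{lemma-UB-skeleton-d>1} with $f=\Lambda$; your lower bound is the standard tilt applied simultaneously to all $Nt$ observations, which is where the factor $N$ correctly enters. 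This is a more direct and self-contained proof for this particular lemma than what the paper's general apparatus would yield (take $\mathcal H=\mathcal G$, so that $|C_{i,\mathcal H}|=N$ and $|\log p_{\mathcal H}|=0$, to collapse $I_{i,\mathcal H}$ to $NI$); what your approach loses is precisely the refinement --- the effect of intermittent connectivity, encoded in $\mathcal J$ and $|\log p_{\mathcal H}|$ --- which is the paper's actual contribution and which your uniform bound $\tfrac1t\log\mathbb E[e^{t\lambda^\top X_{i,t}}\mid\mathcal W_t]\le\Lambda(\lambda)$ discards by design.

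Two small points you should tighten. First, invoking ``the consensus+innovations convergence established in Section~\ref{sec-Main}'' for the $\widetilde{\mathbb P}$-convergence of $X_{i,t}$ is awkward (not circular, since Theorem~\ref{theorem-almost-sure-convergence} does not rely on this lemma, but it does appear later in the paper); a cleaner and genuinely self-contained route is to apply your own already-proved upper bound under $\widetilde{\mathbb P}$ with the tilted rate function $\widetilde I$, which vanishes only at $x=\nabla\Lambda(\lambda^\star)$ and is coercive, giving $\widetilde{\mathbb P}(X_{i,t}\in B_x(\rho)^{\mathrm{c}})\to 0$ directly. Second, the ``slide $x$ toward $\theta$'' reduction to exposed points is standard but is stated loosely: under Assumption~\ref{ass-finite-at-all-points}, $\nabla\Lambda$ maps $\mathbb R^d$ onto the relative interior of $\mathcal D_I$, so the reduction needs the observation that for open $D$ the infimum of $I$ over $D^{\mathrm{o}}$ is achieved (up to $\delta$) at points of $\mathrm{ri}\,\mathcal D_I$; spelling this out, or restricting to the nondegenerate full-dimensional case and disposing of the degenerate case separately, would make the argument airtight.
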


\mypar{Closed convex hull of a function}
We recall the definitions of the epigraph and closed convex hull of a function.
\begin{definition}[Epigraph and closed convex hull of a function,~\cite{Urruty}]
\label{def-cvx-hull}
Let $f : \mathbb R^d \mapsto \mathbb R\cup \{+ \infty\}$ be a given function.
\begin{enumerate}
\item The epigraph of $f$, denoted by $\mathrm{epi} f$, is defined by
\begin{equation}
\mathrm{epi} f = \left\{ (x,r): r\geq f(x), x\in \mathbb R^d\right\}.
\end{equation}
\item Consider the closed convex hull $\overline{ \mathrm {co}}\, \mathrm{epi}\, f$\footnote{The convex hull of a set $A$, where $A$ is a subset of some Euclidean space, is defined as the set of all convex combinations of points in $A$~\cite{Urruty}.} of the epigraph of $f$. The closed convex hull of $f$, denoted by $\overline{\mathrm {co}}f$, is defined by:
\begin{equation}
\overline{\mathrm {co}} f(x):= \inf\{r: (x,r) \in \overline{\mathrm {co}}\, \mathrm{epi}\, f\}.
\end{equation}
\end{enumerate}
\end{definition}
Hence, for a given function $f$, epigraph of $f$ is the area above the graph of $f$. Closed convex hull of $f$ is then constructed from $\mathrm{epi} f$ by first finding the closed convex hull of the epigraph, $\overline{\mathrm {co}}\, \mathrm{epi}\, f$. Then, $\overline{\mathrm {co}} f$ is defined as the function the epigraph of which matches $\overline{\mathrm {co}}\, \mathrm{epi}\, f$. Intuitively,  $\overline{\mathrm {co}}f$ is the best convex and lower semi-continuous (closed) approximation of $f$, as its epigraph contains (besides $\mathrm{epi}\, f$) only those points that are needed for ``convexification'' and closure. Figure~\ref{fig-cvx-hull-Gaussian} further ahead gives an illustration of $\overline{\mathrm {co}}f$, while construction of $\overline{\mathrm {co}}f$ is explained in Section~\ref{subsec-a-closer-look-at-conjugate-functions-from-the-theorem}.

%

\section{Main result}
\label{sec-Main}
The main result of this section, Theorem~\ref{theorem-nice-tight-bounds}, finds functions $\underline I_i$ and $\overline I_i$ from~\eqref{eq-rate-UB} and~\eqref{eq-rate-LB}. These functions enable computation of bounds on the exponential decay rate of an arbitrary rare event and, in the case of the existence of the LDP, by~\eqref{eq-rate-function-bounds}, provide approximations to the rate function $I_i$. A number of important corollaries of Theorem~\ref{theorem-nice-tight-bounds} is then presented in Subsection~\ref{subsec-Interpretations-and-Corollaries}, including the large deviations principle for regular networks and for pendant nodes. 
Section~\ref{sec-SL} then studies application of the derived results to distributed hypothesis testing and social learning. 

\begin{theorem}
\label{theorem-nice-tight-bounds}
Consider distributed inference algorithm~\eqref{alg-1}-\eqref{alg-2} under Assumptions~\ref{ass-network-and-observation-model} and~\ref{ass-finite-at-all-points}. Then, for each node~$i$, for any measurable set $D$:
\begin{enumerate}
\mathitem \label{part-upper-bound}
\begin{align}
\label{eq-UB-Theorem}
& \limsup_{t\rightarrow +\infty}\,\frac{1}{t}\,\log \mathbb P\left(X_{i,t}\in D\right) \leq - \inf_{x\in \overline D} I^\star (x),
\end{align}
where $I^\star (x) = \overline{\mathrm {co}}\,\inf\left\{I(x)+\mathcal J, N I(x)\right\}$;
\item \label{part-lower-bound} for any collection $\mathcal H$ of graphs on $V$:
\begin{align}
\label{eq-LB-Theorem}
& \liminf_{t\rightarrow +\infty}\,\frac{1}{t}\,\log \mathbb P\left(X_{i,t}\in D\right) \geq - \inf_{x\in D^{\mathrm{o}}} I_{i,\mathcal H}(x),
\end{align}
where $I_{i,\mathcal H} (x) = \overline{\mathrm {co}}\,\inf \left\{ |C_{i,\mathcal H}| I(x)+ \left|\log p_{\mathcal H}\right|, N I(x)\right\}$.
\end{enumerate}
\end{theorem}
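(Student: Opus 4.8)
The plan is to establish the two bounds separately, in each case by constructing auxiliary events that sandwich the probability $\mathbb{P}(X_{i,t}\in D)$ and whose large deviations behavior we can control using Lemma~\ref{lemma-fundamental-bounds} (the fundamental bounds with $NI$ and $I$), the rate of consensus $\mathcal{J}$ from Theorem~\ref{theorem-compute-rate-of-consensus} (and its finite-time form~\eqref{eq-rate-of-consensus-epsilon}), and the properties of $\Lambda$ and $I$ in Lemma~\ref{lemma-properties}. The recurring structural phenomenon is that the node-$i$ state behaves either like the fusion center (rate $NI$) on the ``good'' event that consensus has essentially happened, or like a reduced network (rate $|C|I$ plus the cost $|\log p|$ of forcing the relevant topological event) on the ``bad'' event; taking the better of the two and then convexifying — because large deviations rates for $\frac1t\sum$-type averages are intrinsically convexified by time-sharing between the two mechanisms — yields the stated $\overline{\mathrm{co}}\,\inf\{\cdot,\cdot\}$ forms. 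The appearance of $\overline{\mathrm{co}}$ is exactly the operation of splitting the time horizon $[1,t]$ into a fraction $\beta t$ where one mechanism drives the deviation and $(1-\beta)t$ where the other does, optimizing over $\beta\in[0,1]$; lower semicontinuity and the closure come for free from standard subadditivity/limsup arguments.

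For the upper bound, part~\ref{part-upper-bound}: fix $\epsilon>0$ and split the horizon at time $s=\lfloor \beta t\rfloor$. On the event $A_t(s)=\{\|W_t\cdots W_{s+1}-J_N\|\le 1/t\}$, the averaging matrix $\Phi(t,1)$ in~\eqref{alg-compact} is, up to $O(1/t)$ perturbation, equal to $J_N$ applied to the tail block, so $X_{i,t}$ is within $o(1)$ of the fusion-center average $\frac1t\sum_{s'=1}^t \frac1N\sum_j Z_{j,s'}$ — whose rate function is $NI$ by Cramér's theorem in $\mathbb{R}^d$ (this is the $NI$ branch of Lemma~\ref{lemma-fundamental-bounds}). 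On the complementary event $A_t(s)^{\mathrm c}$, which by~\eqref{eq-rate-of-consensus-epsilon} has probability at most $K_\epsilon e^{-(t-s)(\mathcal J-\epsilon)}=e^{-(1-\beta)t(\mathcal J-\epsilon)+o(t)}$, I bound $\mathbb{P}(X_{i,t}\in D, A_t(s)^{\mathrm c})$ by further conditioning: during $[1,s]$ the first-block partial average (a true empirical mean of $s$ i.i.d.\ vectors) must land near some point $y$ at rate-$s\,I(y)$ cost $=\beta t\, I(y)$, while the tail is harmless, giving the branch $\beta I(y)+(1-\beta)\mathcal J$. Uniting over the finitely many (after discretizing $\beta$) contributions and taking $\limsup$, then $\epsilon\downarrow0$, then optimizing over $\beta$ and $y$ with $x=\beta y+(1-\beta)\theta$ (using $I(\theta)=0$ from Lemma~\ref{lemma-properties}.\ref{part-I-at-m} for the tail), produces exactly $\overline{\mathrm{co}}\,\inf\{I(x)+\mathcal J,\,NI(x)\}$ after the standard identification of the infimal convolution / time-sharing optimization with the closed convex hull of the pointwise infimum.

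For the lower bound, part~\ref{part-lower-bound}: fix the collection $\mathcal H$, let $C=C_{i,\mathcal H}$ be node $i$'s component in the union graph $\Gamma_{\mathcal H}$, and again split at $s=\lfloor\beta t\rfloor$. The key event is $\mathcal E_t=\{G_{s+1},\dots,G_t\in\mathcal H\}$, which has probability $p_{\mathcal H}^{\,t-s}=e^{-(1-\beta)t|\log p_{\mathcal H}|+o(t)}$ by i.i.d.ness. On $\mathcal E_t$, during the tail block all communication stays within $\Gamma_{\mathcal H}$, so node $i$ can only mix information from within its component $C$; restricting the dynamics to $C$ and invoking Lemma~\ref{lemma-fundamental-bounds} \emph{on the subnetwork indexed by $C$} gives that, conditioned on $\mathcal E_t$ and on consensus occurring within $C$ over the tail, $X_{i,t}$ concentrates near $\frac1t\sum_{s'=1}^t \frac1{|C|}\sum_{j\in C}Z_{j,s'}$ with rate $|C|\,I$; meanwhile over $[1,s]$ I force the within-$C$ empirical mean to hit a target $y$ at cost $\beta t\,I(y)$ and the other nodes' contributions are irrelevant for node $i$'s final state on this event. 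Choosing interior target points and using the LDP lower bound (which only needs open sets and the Cramér lower bound), I get $\liminf \frac1t\log\mathbb P(X_{i,t}\in D)\ge -(\beta I(y)+(1-\beta)(|C|\,?)\dots)$; more precisely one obtains the two competing branches $NI(x)$ (taking $\beta=0$, pure fusion, ignoring $\mathcal E_t$) and $|C|I(x)+|\log p_{\mathcal H}|$ (the tail-restricted mechanism, with the $|\log p_{\mathcal H}|$ coming from the cost of $\mathcal E_t$), and optimizing the time-share $\beta$ convexifies their pointwise infimum into $\overline{\mathrm{co}}\,\inf\{|C|I(x)+|\log p_{\mathcal H}|,\,NI(x)\}$.

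The main obstacle is the rigorous handling of the ``$X_{i,t}$ is close to the relevant block average'' step in both directions — i.e., controlling the $O(1/t)$ perturbation of $\Phi(t,s)$ uniformly enough to transfer a large deviations \emph{lower} bound (for part~\ref{part-lower-bound}) and not merely an upper bound, since the lower bound requires exhibiting an explicit likely path and showing the perturbation does not push $X_{i,t}$ out of the open set $D^{\mathrm o}$. This requires combining the consensus event with a genuine Cramér-type change of measure on the first block so that the target is hit with room to spare, and carefully checking that conditioning on the (probability-$p_{\mathcal H}^{t-s}$ or $A_t(s)^{\mathrm c}$) topology events does not disturb the observation statistics — which is exactly where Assumption~\ref{ass-network-and-observation-model}, part (4) (independence of $W_t$ from the $Z_{i,s}$) is essential. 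A secondary technical point is justifying that the time-sharing optimization over $\beta$ genuinely reproduces $\overline{\mathrm{co}}$ rather than just a convex lower bound; this is a convex-analytic lemma (the closed convex hull of a pointwise infimum of two convex functions equals the closure of their infimal convolution composed with the scaling), which I would isolate and prove separately, likely deferring it to the appendix referenced as Appendix~\ref{app:rate-function-bounds}.
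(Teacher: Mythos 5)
Your high-level intuition — that the lower bound is a time-sharing/conditioning argument and that $\overline{\mathrm{co}}$ is related to splitting the horizon into two regimes — matches the paper's strategy for the lower bound. But there are concrete errors in both halves, and the upper bound takes an essentially different route in the paper.

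For the lower bound, your description of what happens on the head block $[1,s]$ is wrong. You write that over $[1,s]$ you ``force the within-$C$ empirical mean to hit a target $y$ at cost $\beta t\, I(y)$,'' but this would give a branch $|C|I$ (or worse, $I$) for the head contribution, not $NI$. The reason you get an $NI$ branch is that the head-block observations reach node $i$ through \emph{full-network} mixing: the paper's conditioning event $\mathcal E^t_\theta$ in~\eqref{eqn-def-of-sets-E} explicitly requires $\|\Phi(\lceil\theta t\rceil,\lceil\theta t\rceil - o_t) - J_N\|\le 1/t$, i.e.\ full consensus across all $N$ nodes right before the tail block begins, so that every observation in $[1,\lceil\theta t\rceil-o_t]$ enters $X_{i,t}$ with weight $\approx 1/N$. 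Your sketch conflates the two regimes. A second substantive gap: you say ``invoke Lemma~\ref{lemma-fundamental-bounds} on the subnetwork indexed by $C$,'' but the paper does not reuse that lemma on a subnetwork; instead it proves a precise pointwise limit of the conditional log-MGF (Lemma~\ref{lemma-sequence-Upsilon-t-has-a-limit}, with two $o_t\approx\log t$ buffer zones to absorb the transition) and invokes G\"{a}rtner--Ellis for the conditional law $\nu^\theta_t$. Then the $\inf_\theta \sup_\lambda$ is converted to $\sup_\lambda\inf_\theta$ via Sion's minimax theorem, and the $\overline{\mathrm{co}}$ drops out via Lemma~\ref{lemma-conjugate-calculus}.\ref{part-conj-of-the-max}, i.e.\ conjugate duality — not via a separate ``infimal-convolution-equals-convex-hull'' lemma, and certainly not the content of Appendix~\ref{app:rate-function-bounds}, which just proves the sandwich~\eqref{eq-rate-function-bounds}.

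For the upper bound, the paper does not time-share at all. It establishes (replicating the Chernoff-type argument of~\cite{Non-Gaussian-DD}, without a $\beta$-parameterized split) the bound $\limsup\frac1t\log\mathbb P(X_{i,t}\in D)\le\max\{N\Lambda(\lambda/N),\,\Lambda(\lambda)-\mathcal J\}-\inf_{x\in D}\lambda^\top x$ for every $\lambda$, and then invokes Lemma~\ref{lemma-UB-skeleton-d>1} (a finite-cover/exponential-tightness argument that upgrades such a Chernoff family to a rate-function upper bound) together with exponential tightness from~\cite{DI-Directed-Networks16}. Again the $\overline{\mathrm{co}}\inf\{NI, I+\mathcal J\}$ is obtained as the conjugate of the $\max$. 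Your proposed alternative — decomposing on $A_t(\lfloor\beta t\rfloor)$ and its complement, then ``conditioning on what the first block averages to'' — is not a clean estimate: on $A_t(s)^{\mathrm c}$ you know nothing about the weights, so the only usable bound on the conditional MGF is the worst-case one from Lemma~\ref{simple-lemma}, i.e.\ $\Lambda(\lambda)$, and the $\beta$ optimization collapses; you would end up re-deriving the Chernoff bound without the time-split doing work. In short: the lower bound's $\overline{\mathrm{co}}$ really is a time-share in disguise, the upper bound's $\overline{\mathrm{co}}$ is pure conjugate duality, and your proposal misses the head-block full-network consensus condition that makes the $NI$ branch appear on the lower-bound side.
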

In words, Theorem~\ref{theorem-nice-tight-bounds} asserts that, for a fixed set $D$, for any node $i$, the probabilities $\mathbb P\left(X_{i,t}\in D\right)$ decay exponentially fast over iterations $t$ and it also finds bounds on the rate of this decay. We now make a couple of additional remarks and such that aim at gaining further insights and intuition about this result and the relevant quantities.
 
\begin{remark}
Consider an arbitrary disconnected collection $\mathcal H$. By the construction of  $C_{i,\mathcal H}$, for any node $i$, there holds $\{i\}\subseteq C_{i,\mathcal H}$ and, by non-negativity of $I$, it follows that $I \leq |C_{i,\mathcal H}| I$ (point-wise). Further, from Theorem~\ref{theorem-compute-rate-of-consensus} we know that $\mathcal J= |\log p_{\mathcal H^\star}|\leq |\log p_{\mathcal H}|$. Therefore, we have that for any disconnected collection $\mathcal H$, $I+\mathcal J \leq |C_{i,\mathcal H}| I + |\log p_{\mathcal H}|$. The latter obviously implies $I^\star \leq I_{i,\mathcal H}$, serving as a first feasibility check for~\eqref{eq-rate-function-bounds} (and also~\eqref{eq-rate-UB} and~\eqref{eq-rate-LB}).  
\end{remark} 

Comparing the upper bound from Theorem~\ref{theorem-nice-tight-bounds} with~\eqref{eq-rate-UB}, we see that~\eqref{eq-rate-UB} is satisfied for
\begin{equation}
\underline I_i\equiv I^\star, \mbox{\;for\; all\;} i\in V.
\end{equation}
That is, we have a uniform (lower) bound $I^\star$ on each of the nodes' rate functions $I_i$, $i\in V$.

With respect to the lower bound from Theorem~\ref{theorem-nice-tight-bounds}, there is in fact a whole family of functions $\overline I_i$, one per each collection of graphs $\mathcal H$, that validate~\eqref{eq-rate-LB}. To find the best bound for a given $D$, we might optimize the right hand side of~\eqref{eq-LB-Theorem} over all collections $\mathcal H$. This, however, might be computationally infeasible. Instead, we can focus only on those collections $\mathcal P\subseteq \mathcal G$ that have a certain property, e.g., $\mathcal P=\left\{ \mathcal H: |C_{i,\mathcal H}| =n\right\}$, for some $n$, $1\leq n\leq N$. Then, $\overline I_i$ from~\eqref{eq-rate-LB} can be found by finding $\mathcal H\in \mathcal P$ that yields uniformly lowest (i.e., closest to $I_i$) $I_{i,\mathcal H}$:
\begin{equation}
\overline I_i =  \inf_{\mathcal H\in \mathcal P}  I_{i,\mathcal H}, \mbox{\;for\;} i\in V.
\end{equation}
The following corollary follows directly from~\eqref{eq-rate-function-bounds} and the definition of LDP.
\begin{corollary}
\label{corollary-sandwich-I-i}
\begin{enumerate}
\item If, for a given $i$, the sequence $X_{i,t}$, $t=1,2,...$ satisfies the LDP with rate function $I_i$, then, for any collection of graphs $\mathcal H$,
\begin{equation}
\label{eq-sandwich-I-i}
I^\star \leq I_i \leq I_{i,\mathcal H}.
\end{equation}
\item If, for a given $i$, for some $\mathcal P$ (possibly, a single element set $\mathcal P=\left\{\mathcal H\right\}$), $I^\star \equiv \inf_{\mathcal H\in \mathcal P}  I_{i,\mathcal H}$, then the sequence $X_{i,t}$, $t=1,2,...$ satisfies the LDP with rate function $I_i=I^\star \equiv \inf_{\mathcal H\in \mathcal P}  I_{i,\mathcal H}$.
\end{enumerate}
\end{corollary}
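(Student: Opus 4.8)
The plan is to obtain both parts directly from Theorem~\ref{theorem-nice-tight-bounds}, from the implication~\eqref{eq-rate-function-bounds} established in Appendix~\ref{app:rate-function-bounds}, and from Definition~\ref{def-LDP}; no new large deviations estimate is needed. A preliminary observation I would record first is that $I^\star$ and, for every collection $\mathcal H$, $I_{i,\mathcal H}$ are closed convex hulls of nonnegative functions (the nonnegativity of $I$, hence of $NI$ and of $|C_{i,\mathcal H}|I+|\log p_{\mathcal H}|$, comes from Lemma~\ref{lemma-properties}.\ref{part-I-at-m}, and the infimum of nonnegative functions is nonnegative; the epigraph of a nonnegative function sits in $\mathbb R^d\times[0,+\infty)$, and so does its closed convex hull). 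Consequently $I^\star$ and $I_{i,\mathcal H}$ are convex, lower semicontinuous, and $[0,+\infty]$-valued, hence genuine rate functions in the sense of Definition~\ref{def-Rate-function}, and therefore admissible as the functions $\underline I_i$, $\overline I_i$ appearing in~\eqref{eq-rate-UB}--\eqref{eq-rate-LB}.

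For part~1, Theorem~\ref{theorem-nice-tight-bounds}.\ref{part-upper-bound} says exactly that~\eqref{eq-rate-UB} holds for every measurable $D$ with $\underline I_i=I^\star$, and Theorem~\ref{theorem-nice-tight-bounds}.\ref{part-lower-bound} says that, for the given collection $\mathcal H$,~\eqref{eq-rate-LB} holds for every measurable $D$ with $\overline I_i=I_{i,\mathcal H}$. Since by hypothesis $X_{i,t}$ satisfies the LDP with rate function $I_i$, the hypotheses of~\eqref{eq-rate-function-bounds} are met, and that statement yields $I^\star(x)\le I_i(x)\le I_{i,\mathcal H}(x)$ for all $x\in\mathbb R^d$. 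As $\mathcal H$ was arbitrary, this is~\eqref{eq-sandwich-I-i}.

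For part~2, set $\widetilde I_i:=I^\star$; by hypothesis $\widetilde I_i=\inf_{\mathcal H\in\mathcal P}I_{i,\mathcal H}$, and by the preliminary observation $\widetilde I_i$ is lower semicontinuous, hence a legitimate rate function. I would then verify the two defining inequalities of Definition~\ref{def-LDP} for $\widetilde I_i$. Condition~\ref{eqn-LDP-UB} is immediate from Theorem~\ref{theorem-nice-tight-bounds}.\ref{part-upper-bound}, since $I^\star=\widetilde I_i$. For condition~\ref{eqn-LDP-LB}, fix a measurable $D$; Theorem~\ref{theorem-nice-tight-bounds}.\ref{part-lower-bound} gives, for every $\mathcal H\in\mathcal P$, $\liminf_{t\to\infty}\frac1t\log\mathbb P(X_{i,t}\in D)\ge-\inf_{x\in D^{\mathrm o}}I_{i,\mathcal H}(x)$. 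The left-hand side does not depend on $\mathcal H$, so taking the supremum over $\mathcal H\in\mathcal P$ on the right and then interchanging the two infima gives $\liminf_{t\to\infty}\frac1t\log\mathbb P(X_{i,t}\in D)\ge\sup_{\mathcal H\in\mathcal P}\bigl(-\inf_{x\in D^{\mathrm o}}I_{i,\mathcal H}(x)\bigr)=-\inf_{x\in D^{\mathrm o}}\inf_{\mathcal H\in\mathcal P}I_{i,\mathcal H}(x)=-\inf_{x\in D^{\mathrm o}}\widetilde I_i(x)$. Both conditions of Definition~\ref{def-LDP} thus hold with $\widetilde I_i$, so $X_{i,t}$ satisfies the LDP with rate function $I_i=I^\star=\inf_{\mathcal H\in\mathcal P}I_{i,\mathcal H}$.

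There is no substantial obstacle: the statement is a bookkeeping consequence of Theorem~\ref{theorem-nice-tight-bounds}. The only points deserving a line of care are (i) confirming, as above, that the convex-hull constructions $I^\star$ and $I_{i,\mathcal H}$ are genuine rate functions, so that~\eqref{eq-rate-function-bounds} and Definition~\ref{def-LDP} apply to them; (ii) the harmless interchange $\sup_{\mathcal H}\bigl(-\inf_x f_{\mathcal H}(x)\bigr)=-\inf_x\inf_{\mathcal H}f_{\mathcal H}(x)$ used in part~2, which is merely a reordering of infima and is valid whether or not $\mathcal P$ is finite; and (iii) noting that the limiting function in part~2 inherits lower semicontinuity only because it is assumed to coincide with $I^\star$ — an arbitrary pointwise infimum of the $I_{i,\mathcal H}$ need not be lower semicontinuous, which is precisely why the hypothesis $I^\star\equiv\inf_{\mathcal H\in\mathcal P}I_{i,\mathcal H}$ is what unlocks the LDP.
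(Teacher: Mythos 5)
Your proof is correct and fills in precisely the details the paper leaves implicit (the paper simply states that the corollary "follows directly from~\eqref{eq-rate-function-bounds} and the definition of LDP"). The care you take in points (i)--(iii) -- verifying that $I^\star$ and $I_{i,\mathcal H}$ are lower semicontinuous, the interchange of infima, and the observation that the hypothesis $I^\star\equiv\inf_{\mathcal H\in\mathcal P}I_{i,\mathcal H}$ is what guarantees lower semicontinuity of the candidate rate function -- is exactly the bookkeeping needed, and matches the intended argument.
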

In the next remark, through simple convex analyses, we make a connection between Corollary~\ref{corollary-sandwich-I-i} (Theorem~\ref{theorem-nice-tight-bounds}) and Lemma~\ref{lemma-fundamental-bounds}, completing the established bounds in~\eqref{eq-sandwich-I-i} with the general bounds from Lemma~\ref{lemma-fundamental-bounds}, hence establishing a coherent view of the derived results. 

\begin{remark} [Recovery of fundamental bounds in Lemma~\ref{lemma-fundamental-bounds}]
\label{remark-I-NI-I_i-implied-bounds}
From the point-wise non-negativity of $I$ and non-negativity of $\mathcal J$, it is easy to see that $I\leq NI$ and $I \leq I+\mathcal J$. Thus, $\mathrm{epi} \inf\{NI, I+ \mathcal J\}\subseteq \mathrm{epi} I$. Since $I$ is closed and convex, $\overline{\mathrm {co}}\, \mathrm{epi} I = \mathrm{epi} I$, thus implying $\overline{\mathrm {co}}\,\mathrm{epi} \inf\{NI, I+ \mathcal J\}\subseteq \mathrm{epi} I$. The latter directly implies $I \leq I^\star$. Similarly, we have $NI \geq \inf\{NI, |C_{i,\mathcal H}|I+ |\log p_{\mathcal H}|\}$, where the latter holds for any disconnected collection $\mathcal H$. Thus $\mathrm{epi} NI \subseteq \mathrm{epi} \inf\{NI, |C_{i,\mathcal H}|I+ |\log p_{\mathcal H}|\}$, which in turn implies $\overline{\mathrm {co}}\, \mathrm{epi} NI \subseteq \overline{\mathrm {co}}\, \mathrm{epi} \inf\{NI, |C_{i,\mathcal H}|I+ |\log p_{\mathcal H}|\}$. Since $NI$ is convex and closed (the properties inherited from $I$), $\overline{\mathrm {co}}\, \mathrm{epi} NI = \mathrm{epi} NI$, and therefore $\mathrm {epi} NI = \overline{\mathrm {co}}\, \mathrm{epi} NI \subseteq \overline{\mathrm {co}}\, \mathrm{epi} \inf\{NI, |C_{i,\mathcal H}|I+ |\log p_{\mathcal H}|\}$. The latter implies $NI \geq I_{i,\mathcal H}$. Combining with~\eqref{eq-sandwich-I-i} establishes: 
\begin{equation}
\label{eq-sandwich-I-i-fundamental}
I \leq I^\star \leq I_i \leq I_{i,\mathcal H}\leq NI.
\end{equation}
The above chain of inequalities is a capture of the so far established bounds in the literature on the large deviations rate function for consensus+innovations distributed inference iterates on random networks. 

As a byproduct, we note in passing that~\eqref{eq-sandwich-I-i-fundamental} verifies  Lemma~\ref{lemma-fundamental-bounds} for the special case of stochastic symmetric weight matrices.
\end{remark}

\begin{remark} [Zero rate at $\theta$]
\label{remark-zero-rate-at-theta}
 Since $I$ is non-negative, both $NI$ and $|C_{i,\mathcal H}|I(\theta) + |\log p_H|$ are also non-negative, implying $I_{i,\mathcal H}\geq 0$. Further, from Lemma~\ref{lemma-properties}, we have $I(\theta)=0,$ and noting now that $NI(\theta)=0 < |C_{i,\mathcal H}|I(\theta) + |\log p_H|$, it follows that $I_{i,\mathcal H}(\theta)=0$.  It can be similarly shown that $I^\star(\theta)=0$. From the preceding properties it follows that for any set $C$ containing the mean value $\theta$
 \begin{equation}
\inf_{x \in C} I^\star (x) = \inf_{x \in C} I_{i,\mathcal H}(x)=0.       
 \end{equation}
 It follows that $\mathbf I_i(C)=0$, i.e., the inaccuracy rate for any $C$ containing $\theta$ equals zero. This means that probabilities of events that $X_{i,t}$ belong to $C$ do not exhibit an exponential decay -- specifically, for any norm ball centered at $\theta$, and of an arbitrary radius $\rho>0$, $B_{\theta}(\rho)>0$, there holds
\begin{equation}
\label{eq-zero-rate}
\lim_{t\rightarrow +\infty}\,\frac{1}{t}\,\log \mathbb P\left(X_{i,t}\in B_\theta(\rho)\right) =0.
\end{equation}
Observing the form of the algorithm, eq.~\eqref{alg-1}-\eqref{alg-2}, where innovations $Z_{i,t}$ -- the mean vector of which is $\theta$, are incorporated and mixed via weighted averaging (both over time and across nodes), it is intuitive to expect that $X_{i,t}$ will converge to $\theta$ (consider the ideal averaging case -- $W_t = J_d$, for which $X_{i,t} = \frac{1}{t} \sum_{s=1}^t \sum_{j=1}^N \frac{1}{N} Z_{j,s}$, which converges to $\theta$ by the law of large numbers). Hence, the zero decay in~\eqref{eq-zero-rate} is intuitive, i.e., the probabilities that $X_{i,t}$ belongs to a neighborhood of $\theta$ should not vanish with $t$.   
\end{remark}

We use the result of Theorem~\ref{theorem-nice-tight-bounds}, together with the uniqueness of the minimizer of $I$, property~\ref{part-I-at-m} from Lemma~\ref{lemma-properties}, to establish a sort of a converse to~\eqref{eq-zero-rate} - i.e., whenever we seek the inaccuracy rate $\mathbf I_i(C)$ for a set $C$ not containing $\theta$, this rate will be strictly positive. Practical relevance of this (technical) property is given in Theorem~\ref{theorem-almost-sure-convergence} below, where almost sure convergence of $X_{i,t}$ to $\theta$ is formally established.

\begin{remark} [Strictly non-zero rate at $x \neq \theta$]
Consider an arbitrary point $x\neq \theta$. From Lemma~\ref{lemma-properties}, part~\ref{part-I-at-m} we know that $I(x)>0$ for any $x\neq \theta$. 

Consider now an arbitrary set $C$ such that $\theta \notin C$. By strict convexity of $I$ and uniqueness of the minimizer of $I$, it follows that $I$ is coercive~\cite{Rockafellar2015}. Pick an arbitrary point $x_0 \in C$ and let $\alpha=I(x_0)$. Define $S_{\alpha}= \{x\in \mathbb R^d: I(x)\leq \alpha\}$, i.e., $S_{\alpha}$ is the $\alpha$-level set of $I$. By coercivity of $I$, it follows that $S_{\alpha}$ is compact. We now note   
\begin{equation}
\label{eq-infimum-grt-0-1}
\inf_{x\in C} I(x) = \inf_{x \in C\cap S_{\alpha}} I(x)=:a.
\end{equation}
Compactness of $S_{\alpha}$ implies compactness of $C\cap S_{\alpha}$ and since $I$ is continuous and strictly greater than $0$, it follows by the Weierstrass theorem that the infimum of $I$ over $C$ is strictly greater than zero, $a=\inf_{x \in C\cap S_{\alpha}} I(x)>0$. Finally, By the fact that $I^\star \geq I$ (the left-hand side inequality in~\eqref{eq-sandwich-I-i-fundamental}), we in turn obtain:
\begin{equation}
\label{eq-infimum-grt-0-2}
\inf_{x\in C} I^\star(x) \geq \inf_{x\in C} I(x) = a >0.
\end{equation}
Therefore, for any set $C$ such that $\theta \notin C$, we have
\begin{equation}
\label{eq-nonzero-rate-a}
\limsup_{t\rightarrow +\infty}\,\frac{1}{t}\,\log \mathbb P\left(X_{i,t}\in C\right) \leq - a<0,
\end{equation}
where the constant $a$ bounding the exponential decay rate depends on the chosen set $C$.   
\end{remark}

With preceding considerations at hand, almost sure convergence of nodes' iterates $X_{i,t}$ follows by standard arguments. 
\begin{theorem} [Almost sure convergence of $X_{i,t}$]
\label{theorem-almost-sure-convergence}
Consider distributed inference algorithm~\eqref{alg-1}-\eqref{alg-2} under Assumptions~\ref{ass-network-and-observation-model} and~\ref{ass-finite-at-all-points}. Then, for each node~$i$, the state vectors $X_{i,t}$ converge almost surely to $\theta=E[Z_{i,t}].$
\end{theorem}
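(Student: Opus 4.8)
The plan is to derive almost sure convergence from the exponential tail estimate~\eqref{eq-nonzero-rate-a} of the preceding remark via the first Borel--Cantelli lemma, followed by a countable intersection of full-measure events --- a by-now-standard route from an exponential large-deviations upper bound to a.s.\ convergence.

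First I would fix a radius $\rho>0$ and apply~\eqref{eq-nonzero-rate-a} to the \emph{closed} set $C_\rho:=B_\theta(\rho)^{\mathrm c}$, which does not contain $\theta$ since $B_\theta(\rho)$ is open. This yields a constant $a=a(\rho)>0$ such that $\limsup_{t\rightarrow +\infty}\frac1t\log \mathbb P(\|X_{i,t}-\theta\|\geq \rho)\leq -a<0$. Choosing any $\epsilon\in(0,a)$ then provides a time $T(\rho)$ with $\mathbb P(\|X_{i,t}-\theta\|\geq \rho)\leq e^{-(a-\epsilon)t}$ for all $t\geq T(\rho)$, and since $a-\epsilon>0$ the series $\sum_{t\geq 1}\mathbb P(\|X_{i,t}-\theta\|\geq \rho)$ converges.

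Next I would invoke the first Borel--Cantelli lemma to conclude that, with probability one, $\|X_{i,t}-\theta\|<\rho$ holds for all but finitely many $t$; call this full-measure event $A_\rho$. Taking $\rho=1/k$ over $k\in\mathbb N$ and setting $A:=\bigcap_{k\in\mathbb N}A_{1/k}$, a countable intersection of probability-one events, gives $\mathbb P(A)=1$; and on $A$ we have $X_{i,t}\rightarrow \theta$ because for every $k$ eventually $\|X_{i,t}-\theta\|<1/k$. As the node $i$ is arbitrary, this proves the theorem.

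There is no real obstacle here: the substantive work --- establishing that $\inf_{x\in C}I^\star(x)>0$ whenever $\theta\notin C$, which rests on coercivity and strict convexity of $I$ from Lemma~\ref{lemma-properties}, part~\ref{part-I-at-m} --- has already been carried out in the remark leading to~\eqref{eq-nonzero-rate-a}. The only points I would be careful about are to apply the tail bound to the complement of an \emph{open} ball (so that the set is closed and still excludes $\theta$, matching the hypothesis behind~\eqref{eq-infimum-grt-0-2}), and to use a \emph{countable} family of radii so that the intersection of the good events remains of full measure.
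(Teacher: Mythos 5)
Your proposal is correct and takes essentially the same route as the paper's proof: both fix a radius, invoke the exponential upper bound~\eqref{eq-nonzero-rate-a} to obtain a summable tail, and apply the first Borel--Cantelli lemma. You are slightly more careful in spelling out two standard details that the paper glosses over --- namely, that the set $B_\theta(\rho)^{\mathrm{c}}$ is closed (so the upper-bound hypothesis applies directly) and that a countable family of radii $\rho=1/k$ is needed to pass from ``eventually within distance $\rho$'' for each fixed $\rho$ to genuine almost sure convergence --- but these are cosmetic refinements, not a different argument.
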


\begin{proof}
Fix node $i\in V$. Pick an arbitrary $\epsilon>0$ and consider $C = \mathbb B^{\mathrm{c}}_{\theta}(\epsilon)$. We start by noting that inequality in~\eqref{eq-nonzero-rate-a} implies existence of a finite $t_0=t_0(C)$ such that, for all $t\geq t_0$, $\mathbb P(X_{i,t} \in C) \leq e^{-t \frac{a}{2}}$.  Then, for all $t\geq t_0$, we have
\begin{equation}
\mathbb P\left(\|X_{i,t}-\theta\| \geq \epsilon \right) \leq e^{-t\frac{a}{2}}.    
\end{equation}
Thus,
\begin{equation}
\mathbb P\left(\|X_{i,t}-\theta\| > \epsilon, \mathrm{\,i.o.} \right) \leq \sum_{t=1}^\infty e^{-t\frac{a}{2}}<\infty,    
\end{equation}
where the last inequality follows from strict positivity of $a$. Applying the Borel-Cantelli lemma~\cite{Karr}, the claim of the theorem follows.
\end{proof}

\subsection{A closer look at functions $I^\star$ and $I_{i,\mathcal H}$}
\label{subsec-a-closer-look-at-conjugate-functions-from-the-theorem}

This subsection finds closed form expressions for the functions $I^\star$ and $I_{i,\mathcal H}$ for the case when $Z_{i,t}$ is a Gaussian vector, and provides a graphical interpretation of the obtained result.

\begin{lemma}
\label{lemma-Gaussian-design}
Let $Z_{i,t}$ be Gaussian with mean vector $m$ and covariance matrix $S$. Then
\begin{equation}
\label{eqn-characterization-of-I-star}
I^\star(x)=\left\{
 \begin{array}{ll}
NI(x),& x \in \mathcal R_1^\star \\
N \sqrt{2 c_1} \,H(x)- N c_1, & x \in \mathcal R_2^\star \\
I(x)+\mathcal J, & x \in \mathcal R_3^\star
\end{array}\right.,
\end{equation}
where $\mathcal R_1^\star = \left\{x: N I(x)\leq  c_1\right\}$, $\mathcal R_2^\star = \left\{x: c_1<I(x)\leq  N c_1\right\}$,  and $\mathcal R_3^\star = \left\{x: I(x)>  N c_1\right\}$, $I(x)=\frac{1}{2}(x-m)^\top S^{-1}(x-m)$, $H(x)= \sqrt{(x-m)^\top S^{-1}(x-m)}$, and $c_1=\frac{\mathcal J}{N(N-1)}$. Also, for any fixed collection of graphs $\mathcal H$
\begin{equation}
\label{eqn-characterization-of-I-i-mathcal-H}
I_{i,\mathcal H}(x)=\left\{
 \begin{array}{ll}
NI(x),& x \in \mathcal R_1^{i,\mathcal H} \\
N \sqrt{2 c_2}\, H(x)- N c_2 , & x \in \mathcal R_2^{i,\mathcal H}\\
|C_{i,\mathcal H}|\,I(x)+|\log p_{\mathcal H}|, & x\in \mathcal R_3^{i,\mathcal H}
\end{array}\right.,
\end{equation}
where $\mathcal R_1^{i,\mathcal H}= \left\{x: \frac{N}{|C_{i,\mathcal H}|} I(x)\leq  c_2\right\}$, $\mathcal R_2^{i,\mathcal H}= \left\{x: c_2<I(x)\leq  \frac{N}{|C_{i,\mathcal H}|} c_2\right\}$, $\mathcal R_3^{i,\mathcal H}= \left\{x: I(x)>  \frac{N}{|C_{i,\mathcal H}|} c_2 \right\}$,  and $c_2=\frac{|C_{i,\mathcal H}| |\log p_{\mathcal H}|}{N(N-|C_{i,\mathcal H}|)}$.
\end{lemma}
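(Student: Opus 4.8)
The plan is to reduce both closed-convex-hull computations to a single scalar optimization, by passing through Fenchel conjugates and exploiting the rotational symmetry of the Gaussian rate function. For $Z_{i,t}\sim\mathcal N(m,S)$ with $S\succ 0$ one has $\Lambda(\lambda)=m^\top\lambda+\frac12\lambda^\top S\lambda$, hence $I(x)=\frac12(x-m)^\top S^{-1}(x-m)=\frac12 H(x)^2$, where $H(x)=\sqrt{(x-m)^\top S^{-1}(x-m)}$ is the Mahalanobis norm. Thus in both cases the function to convexify is $g:=\inf\{a\,I+b,\;N\,I\}$, a pointwise minimum of two convex quadratics with $b\ge 0$ and $a\le N$: for $I^\star$ we take $a=1$, $b=\mathcal J$; for $I_{i,\mathcal H}$ we take $a=|C_{i,\mathcal H}|$, $b=|\log p_{\mathcal H}|$ (if $a=N$, i.e.\ $|C_{i,\mathcal H}|=N$, then $g=NI$ is already convex and $I_{i,\mathcal H}=NI$, which is why $c_2$ is only defined for $|C_{i,\mathcal H}|<N$; so assume $a<N$). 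Since $g$ is continuous, proper and nonnegative, $\overline{\mathrm{co}}\,g=g^{**}$ (cf.~\cite{Urruty}), so it suffices to compute the biconjugate explicitly.

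First I would compute $g^*$. Using $(\inf\{f_1,f_2\})^*=\max\{f_1^*,f_2^*\}$ and the conjugate of a scaled Gaussian quadratic, $(cI)^*(\lambda)=m^\top\lambda+\frac{1}{2c}\lambda^\top S\lambda$ for $c>0$, one gets the explicit convex function
\[
g^*(\lambda)=m^\top\lambda+\max\!\left\{\tfrac{1}{2a}\lambda^\top S\lambda-b,\ \tfrac{1}{2N}\lambda^\top S\lambda\right\}.
\]
The second, main, step is to conjugate back: $g^{**}(x)=\sup_\lambda\{\lambda^\top(x-m)-\max\{\tfrac{1}{2a}\lambda^\top S\lambda-b,\tfrac{1}{2N}\lambda^\top S\lambda\}\}$. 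Substituting $\mu=S^{1/2}\lambda$ and $z=S^{-1/2}(x-m)$ (so $\mu^\top z=\lambda^\top(x-m)$, $\lambda^\top S\lambda=\|\mu\|^2$ and $\|z\|=H(x)$) turns this into $\sup_\mu\{\mu^\top z-\Phi(\|\mu\|)\}$ with $\Phi(r)=\max\{\tfrac{r^2}{2a}-b,\tfrac{r^2}{2N}\}$, nondecreasing in $r$. Hence the optimal $\mu$ must be collinear with $z$ (projecting $\mu$ onto $\mathrm{span}(z)$ does not decrease $\mu^\top z$ and does not increase $\Phi(\|\mu\|)$), and the problem collapses to the scalar concave maximization
\[
g^{**}(x)=\sup_{s\ge 0}\big\{s\,H(x)-\Phi(s)\big\},\qquad \Phi(s)=\tfrac{s^2}{2N}\ (s\le s_\star),\quad \Phi(s)=\tfrac{s^2}{2a}-b\ (s\ge s_\star),
\]
where $s_\star^2=\frac{2abN}{N-a}$ is the break point of $\Phi$. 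Examining the three cases — the left-branch interior maximizer $s=N H(x)$ (active when $H(x)\le s_\star/N$), the right-branch interior maximizer $s=a H(x)$ (active when $H(x)\ge s_\star/a$), and the boundary maximizer $s=s_\star$ in the intermediate range — yields the three pieces $N I(x)$, then $N\sqrt{2\hat c}\,H(x)-N\hat c$ with $\hat c:=\frac{ab}{N(N-a)}$, then $a I(x)+b$, the transitions occurring at the tangency levels $I(x)=\hat c$ and $I(x)=\frac{N^2}{a^2}\hat c$ (equivalently $H(x)=\sqrt{2\hat c}$ and $H(x)=\frac{N}{a}\sqrt{2\hat c}$). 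Setting $(a,b)=(1,\mathcal J)$, so $\hat c=c_1$, produces~\eqref{eqn-characterization-of-I-star}; setting $(a,b)=(|C_{i,\mathcal H}|,|\log p_{\mathcal H}|)$, so $\hat c=c_2$, produces~\eqref{eqn-characterization-of-I-i-mathcal-H}. A check that the three branches agree at the two thresholds confirms the function is well defined and continuous.

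As a conjugacy-free cross-check, one can instead posit the piecewise function above and verify that (i) it is convex and lower semicontinuous, (ii) it lies pointwise below $g$, and (iii) it dominates every convex minorant $h$ of $g$; for (iii) the only nontrivial region is the middle one, where a point $x$ with $\sqrt{2\hat c}<H(x)<\frac{N}{a}\sqrt{2\hat c}$ is a convex combination, \emph{along the ray through $m$ towards $x$}, of a point at Mahalanobis radius $\sqrt{2\hat c}$ (where $NI$ meets the middle branch) and one at radius $\frac{N}{a}\sqrt{2\hat c}$ (where $aI+b$ meets it), the middle branch being precisely the common supporting cone of the paraboloids $NI$ and $aI+b$ along that ray, so convexity of $h$ forces $h(x)$ below the corresponding affine value.

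The hard part will be the second step: reducing the $d$-dimensional conjugation to the scalar problem via the collinearity argument, and the case bookkeeping that pins down the two tangency thresholds. The remaining ingredients — the Gaussian formulas for $\Lambda$ and $I$, the identity $\overline{\mathrm{co}}\,g=g^{**}$, the conjugate of a quadratic, and $(\inf\{f_1,f_2\})^*=\max\{f_1^*,f_2^*\}$ — are standard.
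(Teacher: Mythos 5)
Your proof is correct but follows a genuinely different route from the paper's. The paper posits the piecewise formula $F$ directly, then verifies (1) that $F$ is convex by establishing differentiability and checking nonnegativity of the generalized second-order directional derivatives $F^{\prime\prime}_{+}(x;d)$ (invoking a result of Cvx-generalized-conds type), and (2) that $\mathrm{epi}\,F\subseteq\mathrm{co}(S_1\cup S_2)$ by restricting to rays $m+\rho v$ through the mean and reducing to a one-dimensional convex-hull computation. You instead compute the biconjugate $g^{**}$ explicitly: take $g^*$ via $(\inf\{f_1,f_2\})^*=\max\{f_1^*,f_2^*\}$ and the conjugate of a shifted/scaled quadratic, then conjugate back after the change of variables $\mu=S^{1/2}\lambda$, $z=S^{-1/2}(x-m)$, which collapses the $d$-dimensional supremum to the scalar concave problem $\sup_{s\ge 0}\{sH(x)-\Phi(s)\}$ with $\Phi$ piecewise quadratic. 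Your approach is more constructive (the piecewise formula falls out rather than being guessed and verified) and sidesteps the tedious second-order differentiability check; the paper's approach avoids any mention of biconjugation and is closer in spirit to a direct geometric picture. Both rely on the same rotational-invariance reduction at the key step. One substantive point in your favour: your computation gives the tangency thresholds $I(x)=\hat c$ and $I(x)=N^2\hat c/a^2$; a direct continuity check shows the middle and outer branches agree exactly at $I=N^2\hat c/a^2$ but \emph{not} at $I=N\hat c/a$, so the region boundaries as printed in the lemma (the extraneous $N$ in $\mathcal R_1^\star$ and the exponent on $N/a$ in $\mathcal R_3^\star$, similarly for the $\mathcal H$-indexed case) appear to contain typos that your derivation silently corrects; the paper's own Appendix~B mirrors these typos (it verifies differentiability at $I(x)=Nc_1$ rather than $I(x)=N^2 c_1$) and also omits the third step of its own plan, namely $F\le\inf\{NI,I+\mathcal J\}$, which is needed to conclude $\mathrm{co}(S_1\cup S_2)\subseteq\mathrm{epi}\,F$.
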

Proof of Lemma~\ref{lemma-Gaussian-design} is given in Appendix~\ref{app:compute-I-star}.

\mypar{Three regions of $I^\star$} We provide a graphical illustration for ${I^\star}$ in Figure~\ref{fig-cvx-hull-Gaussian}. We consider an instance of algorithm~\eqref{alg-1}-\eqref{alg-2} running on a $N=3$-node chain, with i.i.d. link failures of probability $(1-p)=e^{-5}$, and where the observations $Z_{i,t}$ are standard Gaussian (zero mean and variance equal to one). For standard Gaussian, $I(x)=\frac{1}{2}x^2$, and we obtain from Example~\ref{ex-Iid-failures} that the rate of consensus equals $\mathcal J=|\log (1-p)|=5$.  The more curved blue dotted line plots the function $NI(x)= \frac{1}{2}N x^2$, the less curved blue dotted line plots the function $I(x)+\mathcal J= \frac{1}{2}x^2+5$, and the solid red line plots ${I^\star}$. Observing the figure and the corresponding formula~\eqref{eqn-characterization-of-I-star}, we see that  ${I^\star}$ is defined by three regions. In the region around the zero mean, $\mathcal R_{1}^\star$, ${I^\star}$ matches the optimal rate function $NI$. On the other hand, in the outer region, $\mathcal R_3^\star$, where values of $x$ are sufficiently large, ${I^\star}$ follows the slower growing function, $I+\mathcal J$. Finally, in the middle region, $\mathcal R_2^\star$, ${I^\star}$ is linear (more generally, when $d>1$, $I^\star$ will exhibit linear intervals over any direction that crosses the mean value). This linear part is the tangent line that touches both the epigraph of $NI(\cdot)$ and the epigraph of $I+\mathcal J$ and is responsible for the convexification of the point-wise infimum  $\inf\left\{I+\mathcal J, NI \right\}$. Function $I_{i,\mathcal H}$ has similar properties.
\begin{figure}[thpb]
  \centering
  \includegraphics[trim=5.5cm 4cm 4cm 4cm, clip=true, totalheight=7cm]{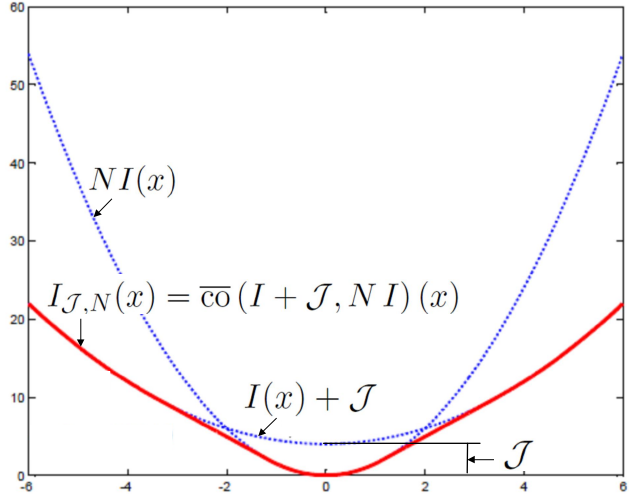}
  \caption{Illustration of ${I^\star}$ for a chain network of size $N=3$, with $\mathcal J=5$, and $Z_{i,t}\sim\mathcal N(0,1)$. The more curved blue dotted line plots $NI(x)= \frac{1}{2}N x^2$, the less curved blue dotted line plots $I(x)+\mathcal J= \frac{1}{2}x^2+\mathcal J$. The solid red line plots ${I^\star}=\mathrm{co} \left( NI, I+\mathcal J \right)$.}
\label{fig-cvx-hull-Gaussian}
\end{figure}

\subsection{Illustrations and LDP for special cases}
\label{subsec-Interpretations-and-Corollaries}

In this subsection, we use Theorem~\ref{theorem-nice-tight-bounds} to establish the LDP for certain classes of random models. As explained in the remarks after Theorem~\ref{theorem-nice-tight-bounds}, to prove the LDP at some node $i$, it is sufficient to show that $I^\star$ and $I_{i,\mathcal H}$ coincide for some collection $\mathcal H$.

The first corollary of Theorem~\ref{theorem-nice-tight-bounds} asserts that if every realization of the network topology is connected, then, for any node $i$, the sequence of states $X_{i,t}$ satisfies the LDP with rate function $N I$. In our recent work~\cite{DI-Directed-Networks16}, we prove that $NI$ is the best (highest) possible rate function for any distributed inference algorithms of the form~\eqref{alg-1}-\eqref{alg-2} with $N$ nodes. It is also the rate function of a hypothetical fusion node that has access to all the observations. Thus, when every instance of the network topology is connected, then each node in the network is, in the asymptotic sense, effectively acts as a fusion center. Corollary~\ref{corollary-every-topology-connected} was, for the special case of Gaussian observations, previously proved in~\cite{Allerton11}.
\begin{corollary}
\label{corollary-every-topology-connected}
Let, for each $t$, $G_t$ be connected. Then, for any $i\in V$, $X_{i,t}$ satisfies the large deviations principle with rate function $N I$.
\end{corollary}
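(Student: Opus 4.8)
The plan is to obtain the corollary as a direct consequence of Theorem~\ref{theorem-nice-tight-bounds} and Corollary~\ref{corollary-sandwich-I-i}, part~2. By that corollary it suffices to exhibit a single collection of graphs $\mathcal H$ for which the upper envelope $I^\star$ and the lower envelope $I_{i,\mathcal H}$ coincide. The natural candidate is $\mathcal H=\mathcal G$, the set of \emph{all} feasible topologies, and I will show that under the hypothesis ``$G_t$ connected for every $t$'' both functions equal $NI$; then Corollary~\ref{corollary-sandwich-I-i}, part~2, gives that $X_{i,t}$ satisfies the LDP with rate function $NI$.

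First I would evaluate $I_{i,\mathcal G}$. Since every realization of the topology is connected, the union graph $\Gamma_{\mathcal G}$ is connected (a union of connected graphs on the common vertex set $V$ is connected), hence it has a single component equal to $V$, so $|C_{i,\mathcal G}|=N$ for every node $i$; moreover $p_{\mathcal G}=\mathbb P(G_t\in\mathcal G)=1$, so $|\log p_{\mathcal G}|=0$. Substituting into the formula from Theorem~\ref{theorem-nice-tight-bounds}, part~2, gives $I_{i,\mathcal G}=\co\,\inf\{NI,\,NI\}=\co\, NI$. Because $I$ is convex (Lemma~\ref{lemma-properties}, part~1) and lower semicontinuous (being the Fenchel--Legendre conjugate of $\Lambda$, cf.~\eqref{def-Conjugate}), the scaled function $NI$ is convex and closed as well, so $\co\, NI=NI$, i.e.\ $I_{i,\mathcal G}=NI$.

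Next I would show $I^\star=NI$. Under the hypothesis no sub-collection $\mathcal H\subseteq\mathcal G$ has a disconnected union graph, so the maximization in Theorem~\ref{theorem-compute-rate-of-consensus} is over the degenerate (empty) disconnected family, whose probability is $0$; hence $\mathcal J=|\log 0|=+\infty$. Consequently, for every $x$, $\inf\{I(x)+\mathcal J,\,NI(x)\}=NI(x)$ (using that $I(x)=+\infty$ forces $NI(x)=+\infty$), and taking closed convex hull again leaves $NI$ unchanged, so $I^\star=\co\, NI=NI$. Combining the two computations, $I^\star\equiv NI\equiv I_{i,\mathcal G}$, and Corollary~\ref{corollary-sandwich-I-i}, part~2, with $\mathcal P=\{\mathcal G\}$ completes the argument.

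The only point that needs a word of care is the identification $\mathcal J=+\infty$ when every topology is connected: I would justify it through the degenerate instance of Theorem~\ref{theorem-compute-rate-of-consensus} described above, and, if a more self-contained remark is wanted, note that by~\eqref{def-rate-of-consensus} the probability $\mathbb P(\|W_t\cdots W_1-J\|>1/t)$ decays faster than any exponential because the contraction factor contracts geometrically along every trajectory. Everything else is routine convex analysis already encapsulated in the cited lemmas, so I do not expect any genuine obstacle; the work is entirely in reading off the right collection $\mathcal H$ and the value of $\mathcal J$.
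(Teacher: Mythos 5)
Your proposal is correct and takes essentially the same approach as the paper: both proofs hinge on showing $\mathcal J=+\infty$ (hence $I^\star\equiv NI$) and then matching this against a lower bound of $NI$. The only difference is organizational: the paper cites Theorem~2 of~\cite{DI-Directed-Networks16} (Lemma~\ref{lemma-fundamental-bounds}) for the lower bound, whereas you derive the identical bound by taking $\mathcal H=\mathcal G$ in Theorem~\ref{theorem-nice-tight-bounds}, part~\ref{part-lower-bound}, and noting $I_{i,\mathcal G}\equiv NI$ (which, incidentally, holds under Assumption~\ref{ass-network-and-observation-model} alone, since $\overline G$ is already connected by Assumption~\ref{ass-network-and-observation-model}.\ref{ass-connected}).
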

\begin{proof}
By Theorem 2 from~\cite{DI-Directed-Networks16}, we know that, for any node $i$ and for any set $D$
\begin{equation}
\liminf_{t\rightarrow +\infty} \,\frac{1}{t}\log \mathbb P\left( X_{i,t}\in D \right) \geq - \inf_{x\in D^{\mathrm{o}}} N I(x).
\end{equation}
Comparing with the conditions for LDP in Definition~\ref{def-LDP}, we see that we only need to prove that $I^\star\equiv N I$. For the latter identity it suffices to show that $\mathcal J=+\infty$, because then $\inf\{ NI, I+\mathcal J\}\equiv NI$, and since $NI$ is closed and convex, we obtain $I^\star=\co (NI)= NI$. Suppose for the sake of contradiction that there exists a disconnected collection of graphs $\mathcal H$ such that $p_{\mathcal H}>0$. Then, there must be a graph $H\in \mathcal H$ such that both $H$ is disconnected and $p_{H}>0$. But this contradicts the assumption that every possible (i.e., non-zero probability) topology is connected. Thus, it must be that for every disconnected collection $p_{\mathcal H}=0$ implying $\mathcal J=+\infty$, and proving the claim.
\end{proof}
In particular, Corollary~\ref{corollary-every-topology-connected} implies that if the nodes' interactions are deterministic, i.e., $W_t\equiv A$, for some stochastic symmetric $A$, and $A$ is such that $\left|\lambda_2(A)\right|<1$, then, for each $i$, $X_{i,t}$ satisfy the LDP with the optimal rate function $N I$. This recovers the large deviations principle for deterministic networks, established in~\cite{DI-Directed-Networks16}, for the special case of symmetric networks (cf. Theorem~1 in~\cite{DI-Directed-Networks16}).

\mypar{LDP for critical nodes} Consider now a situation when there exists a node $i$ such that $\mathcal J= \left| \log p_{i,\mathrm{isol}}\right|$, where $p_{i,\mathrm{isol}}$ denotes the probability that $i$ operates in isolation due to network randomness, $p_{i,\mathrm{isol}}=\mathbb P\left( O_{i,t}=\emptyset\right)$. Comparing with Theorem~\ref{theorem-compute-rate-of-consensus}, this means that the most likely way to disconnect $\overline G$ is to isolate~$i$, i.e., 
\begin{equation}
p_{\max}=\sum_{H \in \mathcal H_{i,\mathrm{isol}}} p_H,
\end{equation}
where $\mathcal H_{i,\mathrm{isol}}= \left\{ H:p_{H}>0,\,\,C_{i,H}=\{i\}\right\}$. Since $C_{i,\mathcal H_{i,\mathrm{isol}}}=\{i\}$, we have $\left|C_{i,\mathcal H_{i,\mathrm{isol}}}\right|=1$. Consider now the lower bound in~\eqref{eq-LB-Theorem} for $\mathcal H=\mathcal H_{i,\mathrm{isol}}$. Noting that $\left|p_{\mathcal H_{i,\mathrm{isol}}}\right|= \mathcal J$, we see that the two functions $I^\star$ and $I_{i,\mathcal H_{i,\mathrm{isol}}}$ coincide, thus implying the LDP for node $i$. This is formally stated in the next corollary.
\begin{corollary}[LDP for critical nodes]
\label{corollary-aLDP-for-the-critical-node}
Suppose that for some $i$, $\mathcal J= \left| \log p_{i,\mathrm{isol}}\right|$. Then, the sequence of states $X_{i,t}$ satisfies the LDP with the rate function $\co\left\{ N I(x), I(x)+ \left|\log p_{i,\mathrm{isol}}\right| \right\}$.
\end{corollary}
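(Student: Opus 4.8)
The plan is to derive Corollary~\ref{corollary-aLDP-for-the-critical-node} directly from Theorem~\ref{theorem-nice-tight-bounds} and Corollary~\ref{corollary-sandwich-I-i}, part 2, by exhibiting a single collection $\mathcal H$ for which the general upper bound function $I^\star$ and the lower bound function $I_{i,\mathcal H}$ coincide. The natural candidate, as the text preceding the statement already hints, is $\mathcal H = \mathcal H_{i,\mathrm{isol}} = \{H : p_H>0,\ C_{i,H}=\{i\}\}$, the collection of all feasible topologies in which node $i$ is isolated.

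First I would verify that $\mathcal H_{i,\mathrm{isol}}$ is a legitimate collection to plug into part~\ref{part-lower-bound} of Theorem~\ref{theorem-nice-tight-bounds} and compute the two quantities appearing in $I_{i,\mathcal H}$. Since every graph $H\in\mathcal H_{i,\mathrm{isol}}$ has $C_{i,H}=\{i\}$, the union graph $\Gamma_{\mathcal H_{i,\mathrm{isol}}}$ still leaves $i$ isolated (a vertex isolated in every graph of the collection is isolated in the union), so $C_{i,\mathcal H_{i,\mathrm{isol}}}=\{i\}$ and hence $|C_{i,\mathcal H_{i,\mathrm{isol}}}|=1$. Next, by definition $p_{\mathcal H_{i,\mathrm{isol}}} = \sum_{H\in\mathcal H_{i,\mathrm{isol}}} p_H = p_{i,\mathrm{isol}}$, the probability that $O_{i,t}=\emptyset$. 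Therefore $|\log p_{\mathcal H_{i,\mathrm{isol}}}| = |\log p_{i,\mathrm{isol}}|$, and the hypothesis of the corollary gives $|\log p_{i,\mathrm{isol}}| = \mathcal J$. Substituting $|C_{i,\mathcal H_{i,\mathrm{isol}}}|=1$ and $|\log p_{\mathcal H_{i,\mathrm{isol}}}| = \mathcal J$ into the formula for $I_{i,\mathcal H}$ in the theorem yields
\[
I_{i,\mathcal H_{i,\mathrm{isol}}}(x) = \overline{\mathrm{co}}\,\inf\{ I(x)+\mathcal J,\ NI(x)\} = I^\star(x),
\]
which is exactly the definition of $I^\star$ from part~\ref{part-upper-bound}.

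Having established $I^\star \equiv I_{i,\mathcal H_{i,\mathrm{isol}}}$, I would then invoke Corollary~\ref{corollary-sandwich-I-i}, part 2, with $\mathcal P = \{\mathcal H_{i,\mathrm{isol}}\}$: the condition $I^\star \equiv \inf_{\mathcal H\in\mathcal P} I_{i,\mathcal H}$ is met, so the sequence $X_{i,t}$ satisfies the LDP with rate function $I_i = I^\star$. It remains only to rewrite $I^\star$ in the form stated in the corollary, i.e.\ to note that $I^\star(x) = \overline{\mathrm{co}}\,\inf\{NI(x),\ I(x)+\mathcal J\} = \co\{NI(x),\ I(x)+|\log p_{i,\mathrm{isol}}|\}$, using once more the identity $\mathcal J = |\log p_{i,\mathrm{isol}}|$ and the paper's convention that $\co$ of a finite family of functions means the closed convex hull of their pointwise infimum. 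That completes the argument.

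The main (and essentially only) obstacle is the bookkeeping check that $C_{i,\mathcal H_{i,\mathrm{isol}}}=\{i\}$ and $p_{\mathcal H_{i,\mathrm{isol}}}=p_{i,\mathrm{isol}}$ — that is, confirming that the collection built from all isolating topologies really is the maximizing collection $\mathcal H^\star$ of Theorem~\ref{theorem-compute-rate-of-consensus}, which is forced precisely by the hypothesis $\mathcal J=|\log p_{i,\mathrm{isol}}|$; no genuine analytic difficulty arises beyond this, since everything else is a direct substitution into results already proved. One subtlety worth a sentence is that $\mathcal H_{i,\mathrm{isol}}$ must be nonempty for the hypothesis to be meaningful, but this is automatic: if $\mathcal J$ is finite there is some feasible disconnected collection, and the hypothesis identifies its probability with that of isolating $i$, so at least one feasible isolating topology exists.
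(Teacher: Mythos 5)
Your proof is correct and follows exactly the same route as the paper: choose $\mathcal H = \mathcal H_{i,\mathrm{isol}}$, verify $|C_{i,\mathcal H_{i,\mathrm{isol}}}| = 1$ and $|\log p_{\mathcal H_{i,\mathrm{isol}}}| = \mathcal J$, observe that this makes $I_{i,\mathcal H_{i,\mathrm{isol}}}$ coincide with $I^\star$, and then apply Corollary~\ref{corollary-sandwich-I-i}, part 2. Your write-up is if anything slightly more explicit than the paper's about the two bookkeeping identities and the nonemptiness of $\mathcal H_{i,\mathrm{isol}}$.
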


In the next two corollaries we assume the random model from Assumption~\ref{ass-network-and-observation-model}.\ref{ass-W-t} where each link in the graph $\overline G$ fails independently with the same probability $1-p$, $p\in [0,1]$.

\begin{corollary}[LDP for pendant nodes]
\label{cor-pendant}
Suppose that the random model for $W_t$ is such that all links in $\overline E$ fail independently from each other with probability $1-p$. Then, for any node~$i$ whose degree in $\overline G$ is equal to one, its sequence of states $X_{i,t}$ satisfies the LDP with the rate function $\co\left\{ N I(x), I(x)+ \left|\log(1-p)\right| \right\}$.
\end{corollary}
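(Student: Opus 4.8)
The plan is to invoke part 2 of Corollary~\ref{corollary-sandwich-I-i}: it suffices to exhibit a single collection $\mathcal H$ of graphs for which $I_{i,\mathcal H}$ coincides with $I^\star$, since then the lower bound from part~\ref{part-lower-bound} of Theorem~\ref{theorem-nice-tight-bounds} meets the universal upper bound $I^\star$ from part~\ref{part-upper-bound}, forcing the LDP with rate function $I_i = I^\star$. Looking at the closed-form expressions in Lemma~\ref{lemma-Gaussian-design} (or, more robustly, at the defining formulas $I^\star = \co\,\inf\{I+\mathcal J,\,NI\}$ and $I_{i,\mathcal H} = \co\,\inf\{|C_{i,\mathcal H}|I + |\log p_{\mathcal H}|,\,NI\}$), the two functions are literally the same convex-hull construction as soon as $|C_{i,\mathcal H}| = 1$ and $|\log p_{\mathcal H}| = \mathcal J$. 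So the entire task reduces to: \emph{find a disconnected collection $\mathcal H$ whose union graph isolates $i$ and whose probability equals $p_{\mathcal H^\star}$}, i.e., show that isolating a pendant node is (one of) the most likely way(s) to disconnect $\overline G$ under i.i.d.\ link failures.

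First I would specialize the min-cut formula from Example~\ref{ex-Iid-failures}: under i.i.d.\ link failures with failure probability $1-p$, we have $\mathcal J = \mathrm{min\,cut}(\overline G)\,|\log(1-p)|$. A pendant node $i$ has exactly one incident edge $e = \{i,j\}$ in $\overline G$, so the singleton cut $\{e\}$ has size one; hence $\mathrm{min\,cut}(\overline G) = 1$ and $\mathcal J = |\log(1-p)|$. Next I would identify the corresponding collection explicitly: let $\mathcal H_{i,\mathrm{isol}} = \{H \in \mathcal G : C_{i,H} = \{i\}\}$ be the set of feasible topologies in which $i$ is isolated — equivalently, those realizations in which edge $e$ fails. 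Since the failure of $e$ is independent of the states of all other edges, $p_{\mathcal H_{i,\mathrm{isol}}} = \mathbb P(e \text{ fails}) = 1-p$, so $|\log p_{\mathcal H_{i,\mathrm{isol}}}| = |\log(1-p)| = \mathcal J$. The union graph $\Gamma_{\mathcal H_{i,\mathrm{isol}}}$ is $\overline G$ with edge $e$ deleted, in which $i$ is an isolated vertex, so $\Gamma_{\mathcal H_{i,\mathrm{isol}}}$ is disconnected and $C_{i,\mathcal H_{i,\mathrm{isol}}} = \{i\}$, giving $|C_{i,\mathcal H_{i,\mathrm{isol}}}| = 1$. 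This is exactly the setting of Corollary~\ref{corollary-aLDP-for-the-critical-node} (with $p_{i,\mathrm{isol}} = 1-p$), and the claimed rate function $\co\{NI(x),\, I(x) + |\log(1-p)|\}$ drops out directly.

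I would then close the argument by plugging these values into $I_{i,\mathcal H_{i,\mathrm{isol}}}$: we get $I_{i,\mathcal H_{i,\mathrm{isol}}} = \co\,\inf\{I + \mathcal J,\, NI\} = I^\star$, so by part 2 of Corollary~\ref{corollary-sandwich-I-i} the sequence $X_{i,t}$ satisfies the LDP with rate function $I^\star = \co\{NI(x),\, I(x) + |\log(1-p)|\}$. The only genuinely substantive step is confirming $\mathrm{min\,cut}(\overline G) = 1$ for a pendant node and, relatedly, that $\mathcal H_{i,\mathrm{isol}}$ achieves the maximizer in Theorem~\ref{theorem-compute-rate-of-consensus}; but a pendant vertex automatically yields a size-one cut and the min-cut cannot be smaller than one (for a connected $\overline G$ on $\geq 2$ vertices), so $\mathrm{min\,cut}(\overline G) = 1$ is immediate — no real obstacle remains. (A minor care point: if $p = 0$, then $\mathcal J = +\infty$ and both $I^\star$ and $I_{i,\mathcal H_{i,\mathrm{isol}}}$ degenerate to $NI$, which is consistent with Corollary~\ref{corollary-every-topology-connected} — I would note this boundary case in passing but it needs no separate treatment since the convex-hull identity still holds with the convention $|\log 0| = +\infty$.)
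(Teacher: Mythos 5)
Your proof is correct and follows essentially the same route as the paper: both reduce to invoking Corollary~\ref{corollary-aLDP-for-the-critical-node} after observing that the min-cut of $\overline G$ equals one for a pendant node, so that $\mathcal J = |\log(1-p)| = |\log p_{i,\mathrm{isol}}|$. One slip in your closing parenthetical: the degenerate case $\mathcal J = +\infty$ (i.e., $|\log(1-p)| = +\infty$) arises when $p=1$, not $p=0$; for $p=0$ the failure probability is $1-p=1$, giving $\mathcal J = 0$, and in any case $p>0$ is forced by Assumption~\ref{ass-network-and-observation-model}, part~\ref{ass-connected}.
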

\begin{proof}
Suppose that $i$ is a degree one node. By Corollary~\ref{corollary-aLDP-for-the-critical-node}, it suffices to show that $\mathcal J=|\log (1-p)|$. From Example~\ref{ex-Iid-failures}, we know that $\mathcal J$ equals $|\log (1-p)|$ times the minimum edge cut of $\overline G$. In this case, minimum edge cut equals one (and is achieved, for instance, when the edge adjacent to $i$ is removed from the network), which proves the result.
\end{proof}

\begin{corollary}[LDP for regular networks]
\label{cor-regular}
Suppose that $\overline G$ is a circulant network in which each node is connected to $d/2$ nodes on the left and $d/2$ nodes on the right, where $d\leq N-1$ is even. We assume that each link, independently of all other links, fails with probability $1-p$. Then, for any node $i$ its sequence of states $X_{i,t}$ satisfies the LDP with the rate function $\co\left\{ N I, I+ d \log \left|1-p\right| \right\}$.
\end{corollary}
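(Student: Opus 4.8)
The plan is to reduce the claim to Corollary~\ref{corollary-aLDP-for-the-critical-node} by showing that, for the $d$-regular circulant network with i.i.d.\ link failures, the rate of consensus $\mathcal J$ is attained by isolating a single node, i.e., $\mathcal J = |\log p_{i,\mathrm{isol}}|$ for every $i\in V$. Once this identity is in place, Corollary~\ref{corollary-aLDP-for-the-critical-node} immediately gives that $X_{i,t}$ satisfies the LDP with rate function $\co\left\{NI(x),\, I(x)+|\log p_{i,\mathrm{isol}}|\right\}$, so it only remains to identify $|\log p_{i,\mathrm{isol}}|$ with $d\,|\log(1-p)|$. Equivalently, one may argue directly from Corollary~\ref{corollary-sandwich-I-i}(2): taking $\mathcal H = \mathcal H_{i,\mathrm{isol}}$ (all feasible graphs in which $i$ is isolated) one has $|C_{i,\mathcal H}|=1$ and $|\log p_{\mathcal H}| = \mathcal J$, whence $I_{i,\mathcal H}\equiv I^\star$ and the LDP follows.

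First I would compute $\mathcal J$ via Example~\ref{ex-Iid-failures}, which under i.i.d.\ link failures yields $\mathcal J = \mathrm{min\,cut}\,(\overline G)\,|\log(1-p)|$. Thus the task becomes purely graph-theoretic: establish that the minimum edge cut of the $d$-regular circulant graph equals $d$. The upper bound $\mathrm{min\,cut}\,(\overline G)\le d$ is immediate, since deleting the $d$ edges incident to any single vertex disconnects that vertex from the rest of the network. For the matching lower bound $\mathrm{min\,cut}\,(\overline G)\ge d$, I would invoke the classical fact that a connected vertex-transitive graph has edge connectivity equal to its (minimum) degree; a circulant graph is vertex-transitive and $d$-regular, so this gives $\mathrm{min\,cut}\,(\overline G)=d$, and therefore $\mathcal J = d\,|\log(1-p)|$.

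Next I would evaluate $p_{i,\mathrm{isol}} = \mathbb P\left(O_{i,t}=\emptyset\right)$. Since node $i$ has exactly $d$ incident edges in $\overline G$ and the links fail independently with probability $1-p$, the event $\{O_{i,t}=\emptyset\}$ is the intersection of $d$ independent link-failure events, so $p_{i,\mathrm{isol}} = (1-p)^d$ and $|\log p_{i,\mathrm{isol}}| = d\,|\log(1-p)| = \mathcal J$. This is precisely the hypothesis of Corollary~\ref{corollary-aLDP-for-the-critical-node}, which then delivers the LDP at node $i$ with rate function $\co\left\{NI(x),\, I(x)+d\,|\log(1-p)|\right\}$, as claimed; since $i$ was arbitrary, this holds for every node.

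The main obstacle is the lower bound on the edge connectivity: ruling out a vertex bipartition $(S, V\setminus S)$ with fewer than $d$ crossing edges. A bare-hands argument would require controlling the number of ``wrap-around'' edges of the circulant graph cut by an arbitrary index set $S$, which is somewhat delicate; leaning on the vertex-transitive edge-connectivity theorem is the clean way to bypass this counting. Everything else --- the reduction to Corollary~\ref{corollary-aLDP-for-the-critical-node}, the computation of $p_{i,\mathrm{isol}}$ from independence, and the final identification of the rate function --- is routine.
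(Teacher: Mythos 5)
Your proposal is correct and follows the same route as the paper: reduce to Corollary~\ref{corollary-aLDP-for-the-critical-node} by showing $p_{i,\mathrm{isol}}=(1-p)^d$ and $\mathcal J=d\,|\log(1-p)|$, the latter via Example~\ref{ex-Iid-failures} and the fact that the minimum edge cut equals $d$. The only difference is that the paper asserts $\mathrm{min\,cut}(\overline G)=d$ without comment, whereas you justify it via the vertex-transitivity/edge-connectivity theorem, which is a clean and correct way to supply that detail.
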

\begin{proof}
Note that $p_{i,\mathrm{isol}}=(1-p)^d$ for any $i$. Hence, by Corollary~\ref{corollary-aLDP-for-the-critical-node}, it suffices to show that $\mathcal J=d |\log(1-p)|$. Observing that the minimum cut in this case equals $d$, the result follows.
\end{proof}

\section{Application to distributed hypothesis testing and social learning}
\label{sec-SL}

In this subsection we show how results from Section~\ref{sec-Main} can be used to characterize large deviations rates of distributed hypothesis testing and social learning that are run over random networks. We recall the algorithm and relevant quantities defined in Section~\ref{subsec-SL}. We assume that the measurement distributions corresponding to the same hypothesis are equal across all nodes, i.e., when hypothesis $\mathbf H_m$ is true, the measurements at all nodes are drawn from the same distribution $f_m$: $Y_{i,t}\sim f_{i,m}\equiv f_m$, for all $i$.  

Following the identified role of the vector of log-likelihood ratios $L_{i,t}$ as the innovation vector $Z_{i,t}$ in~\eqref{alg-1}-\eqref{alg-2}, we introduce the log-moment generating function $\Lambda_M$ of $L_{i,t}$ at node $i$, when the measurements are drawn from $f_M$ (hypothesis $\mathbf H_M$ is true):
\begin{align}
\label{def-Lambda-M}
\Lambda_M(\lambda) & = \mathbb E\left[ \left.e^{\lambda^\top L_{i,t}}\right|\mathbf H=\mathbf H_{M}\right]\\
& = \mathbb E\left[ \left.e^{\sum_{m=1}^{M-1} \lambda_m \log\frac{f_{m}(Y_{i,t})}{f_{M}(Y_{i,t})}}\right|\mathbf H=\mathbf H_{M}\right],    \end{align}
for $\lambda = (\lambda_1,...,\lambda_{M-1})^\top\in \mathbb R^{M-1}$; we note that index $M$ in $\Lambda_M$ indicates the dependence on the assumed true distribution $f_M$. Similarly as in Section~\ref{sec-LD-metric}, the conjugate of $\Lambda_M$ is denoted by $I_M$. We assume that $\Lambda_M$ satisfies Assumption~\ref{ass-finite-at-all-points}. 

\subsection{Large deviations rates of the belief log-ratios}

The following result follows as a direct application of Theorem~\ref{theorem-nice-tight-bounds} to the log-ratios $X_{i,t}$ of public beliefs, defined in Section~\ref{subsec-SL}, eq.~\eqref{eq-X-i-t-SL}, $X_{i,t}^m = \frac{1}{t}\log \frac{b_{i,t}^m}{b_{i,t}^M}$, $m=1,...,M-1$. 
\begin{theorem}
Consider the social learning algorithm~\eqref{alg-SL-1}-\eqref{alg-SL-2} under Assumptions~\ref{ass-network-and-observation-model} and~\ref{ass-finite-at-all-points}, for $\Lambda=\Lambda_M$. Then, when $\mathbf H=\mathbf H_M$, for each node~$i$, for any measurable set $D$,
\begin{enumerate}
\mathitem
\begin{align}
\label{eq-UB-SL}
& \limsup_{t\rightarrow +\infty}\,\frac{1}{t}\,\log \mathbb P\left(X_{i,t}\in D\right) \leq - \inf_{x\in \overline D} I^\star_M (x),
\end{align}
where $I^\star_M (x) = \overline{\mathrm {co}}\,\inf\left\{I_M(x)+\mathcal J, N I_M(x)\right\}$;
\item for any collection $\mathcal H$ of graphs on $V$:
\begin{align}
\label{eq-LB-SL}
& \liminf_{t\rightarrow +\infty}\,\frac{1}{t}\,\log \mathbb P\left(X_{i,t}\in D\right) \geq - \inf_{x\in D^{\mathrm{o}}} I_{i,\mathcal H; M}(x),
\end{align}
where $I_{i,\mathcal H; M} (x) = \overline{\mathrm {co}}\,\inf\left\{|C_{i,\mathcal H}|I_M(x)+|\log p_{\mathcal H}|, N I_M(x)\right\}$.
\end{enumerate}
\end{theorem}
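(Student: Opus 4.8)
The plan is to obtain the statement as a direct corollary of Theorem~\ref{theorem-nice-tight-bounds}, by recognizing the belief log-ratios as an instance of the consensus+innovations iterates~\eqref{alg-1}--\eqref{alg-2} and then substituting $\Lambda=\Lambda_M$ throughout (hence $I=I_M$, $I^\star=I^\star_M$, and $I_{i,\mathcal H}=I_{i,\mathcal H;M}$); note that $\mathcal J$, $|C_{i,\mathcal H}|$ and $p_{\mathcal H}$ depend only on the graph process and are therefore unchanged. First I would invoke the reduction already carried out in Section~\ref{subsec-SL}: dividing~\eqref{alg-SL-2-transformed} and~\eqref{alg-SL-1-transformed} by $t$ and using the identifications~\eqref{eq-Z-i-t-SL}--\eqref{eq-X-i-t-SL}, the $(M-1)$-dimensional vectors of belief log-ratios taken relative to the true hypothesis $\mathbf H_M$ satisfy~\eqref{alg-1}--\eqref{alg-2} with innovation vectors $Z_{i,t}=L_{i,t}$ (the log-likelihood-ratio vectors of~\eqref{def-Lambda-M}), up to an additive term stemming from the initial private beliefs $q_{i,0}$; this term equals $\tfrac1t\,\Phi(t,1)\,c$ with $c=\big(\log(q_{j,0}^m/q_{j,0}^M)\big)_{j,m}$ fixed and $\Phi(t,1)$ a product of stochastic matrices (non-expansive in the sup-norm), hence is deterministically of order $O(1/t)$.

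Next I would verify the hypotheses of Theorem~\ref{theorem-nice-tight-bounds} for this driven system. The conditions on the weight matrices and the network in Assumption~\ref{ass-network-and-observation-model}, as well as Assumption~\ref{ass-finite-at-all-points} for $\Lambda_M$, are assumed verbatim in the theorem's hypotheses, so the only item to check is that the innovations $Z_{i,t}=L_{i,t}$ are i.i.d.\ across nodes and over time. This holds under $\mathbf H_M$: by assumption, conditioned on $\mathbf H_M$ the observations $Y_{i,t}$ are independent over time and across nodes, and the per-hypothesis densities are common to all nodes ($f_{i,m}\equiv f_m$), so $L_{i,t}=\big(\log\tfrac{f_m(Y_{i,t})}{f_M(Y_{i,t})}\big)_{m=1}^{M-1}$ are i.i.d.\ across $i$ and $t$ with common log-moment generating function $\Lambda_M$. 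With $d=M-1$ and $\Lambda=\Lambda_M$, Theorem~\ref{theorem-nice-tight-bounds} then yields exactly~\eqref{eq-UB-SL}--\eqref{eq-LB-SL} for the iterates to which it literally applies.

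The remaining step, and the one I expect to require care, is transferring those bounds to the belief log-ratios $X_{i,t}$ in the statement. These differ from the generic iterates of~\eqref{alg-1}--\eqref{alg-2} by the deterministic $O(1/t)$ initial-belief term above and by at most one innovation/consensus step (according to whether one tracks public or private beliefs, which correspond to the pre- and post-consensus iterates, respectively); since $\widehat X_{i,t}=\tfrac{t-1}{t}X_{i,t-1}+\tfrac1t Z_{i,t}$, one has $\|\widehat X_{i,t}-X_{i,t-1}\|=\tfrac1t\|Z_{i,t}-X_{i,t-1}\|$. I would dispense with both discrepancies by an exponential-equivalence argument: the $O(1/t)$ term vanishes deterministically, and $\tfrac1t\|Z_{i,t}-X_{i,t-1}\|$ exceeds any fixed $\delta>0$ with probability decaying faster than exponentially in $t$ --- by a Chernoff estimate on $\|L_{i,t}\|$ (all of whose coordinates have moment generating functions finite everywhere, since $\mathcal D_{\Lambda_M}=\mathbb R^{M-1}$) together with the crude bound on $\tfrac1t\|X_{i,t-1}\|$ furnished by~\eqref{alg-compact}. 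Hence $X_{i,t}$ is exponentially equivalent to the iterates covered by Theorem~\ref{theorem-nice-tight-bounds} and satisfies the same upper and lower large deviations bounds. Everything else is a verbatim reading of Theorem~\ref{theorem-nice-tight-bounds}; this exponential-equivalence bookkeeping is the main, though routine, obstacle.
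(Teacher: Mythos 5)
Your proposal is correct and follows the same route as the paper --- a direct instantiation of Theorem~\ref{theorem-nice-tight-bounds} under the substitution $\Lambda=\Lambda_M$ (hence $I=I_M$, $I^\star=I^\star_M$, $I_{i,\mathcal H}=I_{i,\mathcal H;M}$), with the graph-side quantities $\mathcal J$, $|C_{i,\mathcal H}|$, $p_{\mathcal H}$ unchanged. The paper in fact offers no proof beyond the one-line assertion that the result ``follows as a direct application,'' so the interest of your write-up is precisely the bookkeeping you supply. Two of your observations are genuine fixes rather than redundancies. First, the paper's closed form~\eqref{alg-compact} implicitly starts from zero initial state, whereas the private beliefs start at arbitrary $q_{i,0}^m>0$; this does inject the deterministic term $\tfrac1t\sum_j[\Phi(t,1)]_{ij}\log(q_{j,0}^m/q_{j,0}^M)$, which you rightly note is $O(1/t)$ uniformly (stochasticity of $\Phi(t,1)$) and therefore irrelevant to the exponential rates. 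Second, and more subtly, the paper's identifications~\eqref{eq-Z-i-t-SL}--\eqref{eq-X-i-t-SL} assign $X\leftrightarrow b$ (public) and $\widehat X\leftrightarrow q$ (private), but comparing the recursions~\eqref{alg-SL-2-transformed}--\eqref{alg-SL-1-transformed} with~\eqref{alg-1}--\eqref{alg-2} shows the matching that makes the form exact is the opposite one, $X\leftrightarrow q$, $\widehat X\leftrightarrow b$: the public-belief log-ratio $\tfrac1t\log(b_{i,t}^m/b_{i,t}^M)$ is the \emph{pre-consensus} iterate $\widehat X_{i,t}$, while Theorem~\ref{theorem-nice-tight-bounds} is stated for the post-consensus $X_{i,t}$. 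You correctly bridge the gap by noting $\widehat X_{i,t}-X_{i,t-1}=\tfrac1t(Z_{i,t}-X_{i,t-1})$ and invoking exponential equivalence; the superexponential tail of $\tfrac1t\|Z_{i,t}\|$ follows from $\mathcal D_{\Lambda_M}=\mathbb R^{M-1}$, and $\tfrac1t\|X_{i,t-1}\|$ concentrates because $X_{i,t-1}$ is a stochastic average of the $L_{j,s}$. One cosmetic remark: the bound being transferred is not a full LDP but the separate upper/lower estimates of Theorem~\ref{theorem-nice-tight-bounds}; the exponential-equivalence argument (Dembo--Zeitouni Thm.~4.2.13 style) applies unchanged to each bound individually, so the transfer is legitimate. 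Overall, same approach as the paper, but you make rigorous what the paper leaves implicit.
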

Consequently, all considerations, corollaries and results from Section~\ref{sec-Main} also carry over without any changes for the log-ratios $X_{i,t}$ of beliefs  in social learning. In particular, the LDP results for regular networks and pendant nodes also carry over to the social learning setup. 

\begin{theorem} [Almost sure convergence of $X_{i,t}$ in social learning]
\label{theorem-almost-sure-convergence-SL}
Consider the social learning algorithm~\eqref{alg-SL-1}-\eqref{alg-SL-2} under Assumptions~\ref{ass-network-and-observation-model} and~\ref{ass-finite-at-all-points}, for $\Lambda=\Lambda_M$. Then, for each node~$i$, for each $m=1,...,M-1$, $\frac{1}{t}\log \frac{b_{i,t}^m}{b_{i,t}^M}$ converges almost surely to $-D_{KL}(f_M||f_m) = - \mathbb E\left[\left. \log \frac{f_m(Y_{i,t})}{f_M(Y_{i,t}}\right| \mathbf H=\mathbf H_M\right].$
\end{theorem}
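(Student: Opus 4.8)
The plan is to obtain this statement as an immediate corollary of Theorem~\ref{theorem-almost-sure-convergence}, by invoking the equivalence between the social learning recursion~\eqref{alg-SL-1}--\eqref{alg-SL-2} and the generic consensus+innovations algorithm~\eqref{alg-1}--\eqref{alg-2} established in Section~\ref{subsec-SL}. Concretely, I would first recall that, after taking log-ratios with respect to the true hypothesis $\mathbf H_M$ and dividing by $t$, the public-belief log-ratios $X_{i,t}^m=\tfrac{1}{t}\log\tfrac{b_{i,t}^m}{b_{i,t}^M}$ stacked into the vector $X_{i,t}\in\mathbb R^{M-1}$ evolve exactly according to~\eqref{alg-1}--\eqref{alg-2} with the innovation vector $Z_{i,t}$ replaced by the log-likelihood-ratio vector $L_{i,t}=(L_{i,t}^1,\dots,L_{i,t}^{M-1})^\top$ of~\eqref{eq-Z-i-t-SL}; the normalizing denominators appearing in the Bayesian update~\eqref{alg-SL-1} and in the consensus update~\eqref{alg-SL-2} cancel when the log-ratios are formed, so this reduction is exact, not approximate.

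Next I would verify that all hypotheses of Theorem~\ref{theorem-almost-sure-convergence} hold for this choice of innovations, conditionally on $\mathbf H=\mathbf H_M$. The i.i.d. requirement on the observations (Assumption~\ref{ass-network-and-observation-model}, part~1) follows from the social learning model: conditioned on $\mathbf H_M$, the $Y_{i,t}$ are independent over time and across nodes, and, since we assumed $f_{i,m}\equiv f_m$, they are also identically distributed, hence so are the $L_{i,t}$; the requirements on $W_t$ (parts~2--4) and the independence of $W_t$ from the observations are assumed directly; and Assumption~\ref{ass-finite-at-all-points} is assumed to hold for $\Lambda_M$. Therefore Theorem~\ref{theorem-almost-sure-convergence} applies to $X_{i,t}$ and yields $X_{i,t}\to\theta$ almost surely, where $\theta=\mathbb E[L_{i,t}\mid\mathbf H_M]=\nabla\Lambda_M(0)$ by part~2 of Lemma~\ref{lemma-properties}.

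It then remains to identify the coordinates of $\theta$. The $m$-th entry is $\mathbb E[L_{i,t}^m\mid\mathbf H_M]=\mathbb E\big[\log\tfrac{f_m(Y_{i,t})}{f_M(Y_{i,t})}\,\big|\,\mathbf H_M\big]$, and since under $\mathbf H_M$ the variable $Y_{i,t}$ has density $f_M$, this equals $-\mathbb E\big[\log\tfrac{f_M(Y_{i,t})}{f_m(Y_{i,t})}\,\big|\,\mathbf H_M\big]=-D_{KL}(f_M\|f_m)$. Reading off the $m$-th coordinate of the almost-sure limit $\theta$ gives exactly the claimed convergence of $\tfrac{1}{t}\log\tfrac{b_{i,t}^m}{b_{i,t}^M}$.

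Since the whole argument is a reduction, I do not expect a genuine obstacle; the only point requiring a line of care is well-definedness of the coordinate mean and of $D_{KL}(f_M\|f_m)$. This is supplied by Assumption~\ref{ass-finite-at-all-points}: finiteness of $\Lambda_M(\lambda)$ for all $\lambda$, in particular at $\lambda=\pm e_m$, forces $\mathbb E[e^{\pm L_{i,t}^m}\mid\mathbf H_M]<\infty$, hence $\mathbb E[|L_{i,t}^m|\mid\mathbf H_M]<\infty$, so the coordinate mean --- and therefore the KL divergence --- is finite and the identification $\theta_m=-D_{KL}(f_M\|f_m)$ is legitimate.
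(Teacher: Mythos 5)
Your proof matches the paper's argument: the paper also establishes Theorem~\ref{theorem-almost-sure-convergence-SL} by directly invoking Theorem~\ref{theorem-almost-sure-convergence} with the innovations instantiated as the log-likelihood ratios $L_{i,t}$ from~\eqref{eq-Z-i-t-SL} and then identifying $\mathbb E[L_{i,t}^m\mid\mathbf H_M]=-D_{KL}(f_M\|f_m)$. Your added remark on finiteness of the coordinate mean via Assumption~\ref{ass-finite-at-all-points} is a sensible extra check but does not change the route.
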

The result follows as a direct application of Theorem~\ref{theorem-almost-sure-convergence} for the case when the innovations $Z_{i,t}$ in~\eqref{alg-1}-\eqref{alg-2} are instantiated by the log-likelihood ratios $L_{i,t}$  defined in~\eqref{eq-Z-i-t-SL}, $L_{i,t}^m = \log \frac{f_m(Y_{i,t})}{f_M(Y_{i,t}}$, for $m=1,...,M-1$,  and by recognizing that the expected value of $ \log \frac{f_m(Y_{i,t})}{f_M(Y_{i,t}}$ under distribution $f_M$ is the negative of the KL divergence between $f_m$ and $f_M$.

 To illustrate the setup and the relevant quantities, we consider the example of $M$ scalar Gaussian distributions of different mean values and equal variances.
 
\begin{example}[Gaussian case: different mean values and equal variances]  
\label{example-Guassian-diff-means-equal-vars} 
Let $Y_{i,t}$ be Gaussian scalars, with mean value $\mu_m$ under hypothesis $m$, and (equal) variance $\sigma^2$. It is easy to show that, for this case, $L_{i,t}$ is computed as:
\begin{equation}
\label{eq-L-i-t-Gaussian}
L_{i,t} =\frac{1}{\sigma^2} \left(Y_{i,t}-\mu_M\right) d - D_{\mathrm{KL}}, 
\end{equation}
where $d=(d_1,...,d_{M-1})^\top$, and each $d_m = \mu_m-\mu_M$ is the difference between the mean value for the $m$-th hypothesis and the mean value for the true hypothesis, and $D_{\mathrm{KL}} = (D_{\mathrm{KL},1},...,D_{\mathrm{KL},M-1})^\top$, where $D_{\mathrm{KL},m}= \frac{(\mu_m-\mu_M)^2}{2\sigma^2}$ is the KL divergence between the distribution $f_m$ and the true distribution $f_M$, $m=1,...,M-1$. It is easy to see that, for each $i=1,...,N$ and each $t$,  $L_{i,t}$ is Gaussian with mean vector $-D_{\mathrm{KL}}$ and covariance matrix $\frac{1}{\sigma^2} d d^\top$. Using the standard formula for the log-moment generating function of multivariate Gaussian distribution, we get:
\begin{equation}
\Lambda_M(\lambda) = - \lambda^\top D_{\mathrm{KL}} + 
\frac{(\lambda^\top d)}{2 \sigma^2}.  
\end{equation}
Simple calculus shows that the conjugate function $I_M$ is given by:
\begin{equation}
I_M(x) = \left\{
\begin{array}{cc}
\frac{\zeta^2}{2\sigma^2},     & \mathrm{if\,} x = \frac{\zeta}{2\sigma^2} d - D_{\mathrm{KL}},\mathrm{\,for\,some\,}\zeta\in \mathbb R\\
+\infty,     & \mathrm{if\,}x+D_{\mathrm{KL}}\notin \mathrm{span}(d)
\end{array}.
\right.    
\end{equation}
Thus, $I_M$ is essentially a one-dimensional quadratic function that changes only along the direction $-D_{KL} + \alpha d$, $\alpha \in \mathbb R$, while being equal to $+\infty$ in the rest of the $\mathbb R^d$ space. This is intuitive as the log-likelihood ratios for different $m$ are coupled through a common (scalar) variable $Y_{i,t},$  and hence the events that vector $L_{i,t}$ lies outside of the line $-D_{KL} + \alpha d$ must have zero probability (and thus rate function equal to $+\infty$). The convex conjugates of $I_M$ from Theorem~\ref{theorem-nice-tight-bounds},  $I^\star_M$ and $I_{i,\mathcal H; M}$, can be found similarly as in Section~\ref{subsec-a-closer-look-at-conjugate-functions-from-the-theorem}.   
\end{example}

\subsection{Large deviations rates for beliefs in social learning}
For each $m=1,..,M-1$, define $g_m:\mathbb R^{M-1} \mapsto \mathbb R$ as $g_m(x)= x_m- \max\{0, x_1,...,x_{M-1}\},$ for $x\in \mathbb R^d$.
\begin{theorem}
\label{theorem-LD-rates-for-beliefs-in-SL}
Consider the social learning algorithm~\eqref{alg-SL-1}-\eqref{alg-SL-2} under Assumptions~\ref{ass-network-and-observation-model} and~\ref{ass-finite-at-all-points}, for $\Lambda=\Lambda_M$. Then, for each node~$i\in V$ and hypothesis $m=1,...,M-1$, for any given interval $F\subseteq \mathbb R$: 
\begin{enumerate}
\item 
\begin{align}
\label{eq-UB-SL-beliefs}
& \limsup_{t\rightarrow +\infty}\,\frac{1}{t}\,\log \mathbb P\left( \frac{1}{t}\log b_{i,t}^m \in F\right) \leq - \inf_{x: g_m(x) \in F} I^\star_M (x);
\end{align}
\item for any disconnected collection $\mathcal H$,  
\begin{align}
\label{eq-LB-SL-beliefs}
& \liminf_{t\rightarrow +\infty}\,\frac{1}{t}\,\log \mathbb P\left(\frac{1}{t}\log b_{i,t}^m \in F\right) \geq - \inf_{x: g_m(x)\in F} I_{i,\mathcal H; M}(x).
\end{align}
\end{enumerate}

\end{theorem}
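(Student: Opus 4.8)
The plan is to derive the statement directly from the large deviations bounds~\eqref{eq-UB-SL}--\eqref{eq-LB-SL} already established for the belief log-ratio vector $X_{i,t}=(X_{i,t}^1,\dots,X_{i,t}^{M-1})$, $X_{i,t}^l=\tfrac1t\log(b_{i,t}^l/b_{i,t}^M)$, via an elementary change of variables that uses only the fact that $b_{i,t}$ is a probability vector. First I would record the exact link between $\tfrac1t\log b_{i,t}^m$ and $X_{i,t}$: since $\sum_{l=1}^M b_{i,t}^l=1$, dividing numerator and denominator of $b_{i,t}^m=b_{i,t}^m/\sum_{l=1}^M b_{i,t}^l$ by $b_{i,t}^M$ and writing $X_{i,t}^M\equiv 0$ gives $b_{i,t}^m=e^{tX_{i,t}^m}\big/\big(1+\sum_{l=1}^{M-1}e^{tX_{i,t}^l}\big)$, hence
\[
\tfrac1t\log b_{i,t}^m=X_{i,t}^m-\tfrac1t\log\Big(1+\textstyle\sum_{l=1}^{M-1}e^{tX_{i,t}^l}\Big).
\]
Sandwiching the log-sum-exp term between $\max\{0,X_{i,t}^1,\dots,X_{i,t}^{M-1}\}$ and $\max\{0,X_{i,t}^1,\dots,X_{i,t}^{M-1}\}+\tfrac{\log M}{t}$ yields the deterministic bound $g_m(X_{i,t})-\tfrac{\log M}{t}\le \tfrac1t\log b_{i,t}^m\le g_m(X_{i,t})$, with $g_m$ exactly the (Lipschitz, hence continuous) map in the statement. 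Thus $\tfrac1t\log b_{i,t}^m$ equals the continuous image $g_m(X_{i,t})$ up to a nonrandom $O(1/t)$ error.

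For the upper bound, fix $\eta>0$ and let $F^{[\eta]}$ be the closed $\eta$-neighbourhood of $F$. For all $t$ with $\tfrac{\log M}{t}<\eta$ the sandwich gives the event inclusion $\{\tfrac1t\log b_{i,t}^m\in F\}\subseteq\{X_{i,t}\in g_m^{-1}(F^{[\eta]})\}$, and $g_m^{-1}(F^{[\eta]})$ is closed by continuity of $g_m$; applying~\eqref{eq-UB-SL} to this set gives $\limsup_t\tfrac1t\log\mathbb P(\tfrac1t\log b_{i,t}^m\in F)\le-\inf_{x:\,g_m(x)\in F^{[\eta]}}I^\star_M(x)$. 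Letting $\eta\downarrow0$, the sets $g_m^{-1}(F^{[\eta]})$ decrease to $g_m^{-1}(\overline F)$ and the infima increase; since $I^\star_M$ is a good rate function (it inherits superlinearity from $I_M$, which is superlinear as the conjugate of the everywhere-finite $\Lambda_M$, superlinearity being preserved by pointwise minima and by the closed convex hull, so its level sets are compact), the infima converge to $\inf_{x:\,g_m(x)\in\overline F}I^\star_M(x)$, giving~\eqref{eq-UB-SL-beliefs} (which matches the displayed form for closed $F$; in general the bound holds with $\overline F$ in place of $F$, as in Definition~\ref{def-LDP}). The lower bound is dual: for $\eta>0$ let $F_\eta\subseteq F$ be $F$ shrunk inward by $\eta$ at each of its finite endpoints; the sandwich then forces $\{X_{i,t}\in g_m^{-1}(F_\eta^{\mathrm o})\}\subseteq\{\tfrac1t\log b_{i,t}^m\in F\}$ for all large $t$, the set $g_m^{-1}(F_\eta^{\mathrm o})$ is open, and~\eqref{eq-LB-SL} applied to it gives $\liminf_t\tfrac1t\log\mathbb P(\tfrac1t\log b_{i,t}^m\in F)\ge-\inf_{x:\,g_m(x)\in F_\eta^{\mathrm o}}I_{i,\mathcal H;M}(x)$; letting $\eta\downarrow0$ so that $g_m^{-1}(F_\eta^{\mathrm o})\uparrow g_m^{-1}(F^{\mathrm o})$ gives~\eqref{eq-LB-SL-beliefs} (with $F^{\mathrm o}$, matching the display for open $F$).

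The identity and the sandwich are elementary, and once the event inclusions are in place the bounds~\eqref{eq-UB-SL}--\eqref{eq-LB-SL} for $X_{i,t}$ apply directly, so the only step requiring real care --- and the main obstacle --- is the passage $\eta\downarrow0$: that the neighbourhood-inflated (respectively inward-shrunk) infima converge to the infimum of $I^\star_M$ over $g_m^{-1}(\overline F)$ (respectively of $I_{i,\mathcal H;M}$ over $g_m^{-1}(F^{\mathrm o})$). This is the standard continuity-of-infima property of lower semicontinuous functions with compact level sets, and it is what forces one to first check that $I^\star_M$ and $I_{i,\mathcal H;M}$ are \emph{good} rate functions, which is where Assumption~\ref{ass-finite-at-all-points} enters. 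As a final bookkeeping point, when $g_m^{-1}(F)=\emptyset$ --- which, since $g_m$ is nonpositive, happens exactly when $F\cap(-\infty,0]=\emptyset$, consistently with $b_{i,t}^m\le1$ --- both right-hand sides equal $-\infty$ while $\mathbb P(\tfrac1t\log b_{i,t}^m\in F)=0$, so the bounds hold trivially.
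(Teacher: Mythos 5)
Your proof follows essentially the same route as the paper's: you derive the identity $\tfrac1t\log b_{i,t}^m = X_{i,t}^m + \tfrac1t\log b_{i,t}^M$ from $\sum_m b_{i,t}^m=1$, obtain the same sandwich $g_m(X_{i,t})-\tfrac{\log M}{t}\le \tfrac1t\log b_{i,t}^m\le g_m(X_{i,t})$, and then transfer the large-deviations bounds for $X_{i,t}$ to $\tfrac1t\log b_{i,t}^m$ through the continuous map $g_m$. The substance matches Appendix~C.

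Where you diverge slightly is in how the nonrandom $O(1/t)$ error is absorbed. The paper fixes an $\epsilon>0$, shifts the threshold of a one-sided interval $F=[a,\infty)$, and then takes a supremum over $\epsilon$; your version inflates the target set to $F^{[\eta]}$, applies the bound on the closed set $g_m^{-1}(F^{[\eta]})$, and sends $\eta\downarrow0$, arguing via goodness of $I^\star_M$ and $I_{i,\mathcal H;M}$ (supercoercivity inherited from $\Lambda_M$ being finite everywhere, preserved under pointwise minimum and biconjugation) that the infima converge. Your route is a bit more systematic: it handles all interval types (including those with a finite right endpoint, where the direct inclusion $\{\tfrac1t\log b_{i,t}^m\in F\}\subseteq\{g_m(X_{i,t})\in F\}$ fails because the sandwich only gives $g_m(X_{i,t})\le \tfrac1t\log b_{i,t}^m + \tfrac{\log M}{t}$) in a single argument, whereas the paper proves only $F=[a,\infty)$ and asserts the rest is "analogous". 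You also correctly flag that the displayed inequalities, as stated, hold with $\overline F$ on the upper bound and $F^{\mathrm o}$ on the lower bound when $F$ is not already closed/open, and you note the degenerate case $g_m^{-1}(F)=\emptyset$. The extra step of verifying goodness of the rate functions is genuinely needed for the $\eta\downarrow 0$ passage and is not spelled out in the paper, so your proof is a valid, somewhat more careful version of the same argument.
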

The proof is very similar to the proof of Lemma~4 from~\cite{Lalitha18SLandDHT}. The key distinction is that here full LDP for the log-ratios of the beliefs, $X_{i,t}$, is not available due to the complexity of the network model, and we have to work instead with the upper and the lower rate function bounds. 
However, the key arguments remain unaltered. For completeness, we provide the main steps of the proof in Appendix~\ref{app:lemma-rates-for-SL-beliefs}. 

The result in Theorem~\ref{theorem-LD-rates-for-beliefs-in-SL} is very general, as it holds for arbitrary distributions $f_m$, $m=1,...,M$, such that the log-moment generating function $\Lambda_M$ satisfies Assumption~\ref{ass-finite-at-all-points}; this is for example the case for Gaussian distributions from Example~\ref{example-Guassian-diff-means-equal-vars}. 

\begin{remark}
It can be shown by carrying out the same analyses as in the proof of Theorem~\ref{theorem-LD-rates-for-beliefs-in-SL}, that, if for some node $i$ the sequence $X_{i,t}$ satisfies the LDP with rate function $I_i$, then, for each $m=1,...,M$ the sequence of log beliefs $\frac{1}{t}\log b_{i,t}^m$ also satisfies the LDP with rate function
\begin{equation}
\label{eq-rate-function-log-beliefs-DL}
R_{i,m}(z) = \inf_{x:\, g_m(x)=z} I_i(x),    
\end{equation}
for $x\in \mathbb R$.
\end{remark} 

We can see that, to find the large deviations rates of the beliefs, first the rate function $I_i$ (or bounds on this function) for the log-belief ratios $X_{i,t}$ are found, and then the contraction principle is applied with functions $g_m$ acting as the bridge between the two domains. This relation is established in~\cite{Lalitha18SLandDHT} for static networks, but the same behaviour carries over to the general case, with the difference that the rate function of log-beliefs can differ across different nodes as a result of network randomness. To shed some light on function $g_m$, we revisit Example~\ref{example-Guassian-diff-means-equal-vars} for which we derive a closed form expression for $g_m$.  

\begin{example}[Computation of $g_m$ for the Gaussian case]
\label{ex-Gaussian-g-m}
Consider the setup from Example~\ref{example-Guassian-diff-means-equal-vars}. Recall that $g_m(x) = x_m - \max\{0, x_1,...,x_{M-1}\}$ and also that $I_M(x)=+\infty$ outside of the line $- D_{KL} + \alpha d,$ $\alpha \in \mathbb R$. Define also
\begin{equation}
\label{eq-def-f}
f(\zeta) = \max \{0, \zeta \frac{d_1}{\sigma^2} - D_{KL,1},..., \zeta \frac{d_{M-1}}{\sigma^2} - D_{KL,M-1}\}.     
\end{equation}
and note that 
\begin{equation}
\label{eq-g-m-f-relation}
 g_m(x) = \zeta \frac{d_m}{\sigma^2} - D_{KL,m} - f(\zeta),   
\end{equation}
for any $\zeta \in \mathbb R$ and $x \in \mathbb  R^{M-1}$ such that $x=(\zeta \frac{d_1}{\sigma^2} - D_{KL,1},..., \zeta \frac{d_{M-1}}{\sigma^2} - D_{KL,M-1})$, for $m=1,...,M-1$.  

Without loss of generality, assume that $\mu_1<\mu_2<...<\mu_{M-1}$, implying also $d_1<d_2<...<d_{M-1}$. Let $m^\star$ be the largest $m$ such that $\mu_m < \mu_M$, $m^\star = \max \{m \in \{0,1,...,M-1\}: \mu_m<\mu_M\}$, wherein we additionally define $\mu_0 \equiv-\infty$ to account for the case that $\mu_M<\mu_1$. Then $d_1<...<d_m^\star < 0 < d_{m^\star+1}<...<d_{M-1}$. By the preceding ordering, and exploiting also that $D_{KL,m} = \frac{d_m^2}{2\sigma^2}$, it can be easily verified that for any pair $l<m$, the intersection between the lines $\zeta \frac{d_l}{\sigma^2} - D_{KL,l}$ and $\zeta \frac{d_m}{\sigma^2} - D_{KL,m}$ occurs at $\frac{d_l+d_m}{2}$, with the $l$-indexed line dominating to the left of this point, for $\zeta < \frac{d_l+d_m}{2}$, while the $m$-indexed line dominates to the right. It also clearly follows that the first intersection point occurs for the first neighboring index, thus, as $\zeta$ increases, the lines must dominate in the same order as their $d_m$ values. Summarizing, $f$ is given in the following form:     
\begin{equation}
\label{eq-f-closed-form}
f(\zeta) = \left\{\begin{array}{cc}
    \zeta \frac{d_1}{\sigma^2} - D_{KL,1}, &  \zeta < \frac{d_1+d_2}{2} \\
    \zeta \frac{d_2}{\sigma^2} - D_{KL,2},  &  \frac{d_1+d_2}{2} \leq \zeta < \frac{d_2+d_3}{2} \\
     ... & \\
     0 & \frac{d_{m^\star}}{2} \leq \zeta < \frac{d_{m^\star+1}}{2} \\
     ... & \\
\zeta \frac{d_m}{\sigma^2} - D_{KL,m},  &  \frac{d_{m-1}+d_m}{2} \leq \zeta < \frac{d_m+d_{m+1}}{2} \\
...& \\     
     \zeta \frac{d_{M-1}}{\sigma^2} - D_{KL,M-1}, &  \zeta \geq \frac{d_{M-2}+d_{M-1}}{2}
\end{array}.  \right.  
\end{equation}
From~\eqref{eq-g-m-f-relation} and~\eqref{eq-f-closed-form}, we can obtain for $x = \frac{\zeta}{\sigma^2} d -D_{KL}$:
\begin{equation}
\label{eq-g-m-closed-form}
g_m(x) = \left\{\begin{array}{cc}
    \frac{(d_m-d_1)}{\sigma^2}\left(\zeta - \frac{d_m+d_1}{2}\right)  &  \zeta < \frac{d_1+d_2}{2} \\
    \zeta \frac{d_2}{\sigma^2} - D_{KL,2}  &  \frac{d_1+d_2}{2} \leq \zeta < \frac{d_2+d_3}{2} \\
     ... & \\
\zeta \frac{d_m}{\sigma^2} - D_{KL,m} & \frac{d_{m^\star}}{2} \leq \zeta < \frac{d_{m^\star+1}}{2} \\
     ... & \\
     0 & \frac{d_{m-1}+ d_{m}}{2} \leq \zeta < \frac{d_{m}+d_{m +1}}{2} \\
     ... & \\
     \frac{(d_m-d_{M-1})}{\sigma^2}\left(\zeta - \frac{d_m+d_{M-1}}{2}\right)  &  \zeta \geq \frac{d_{M-2}+d_{M-1}}{2}
\end{array}.  \right.  
\end{equation}
The derived closed form expression for $g_m$ is a step towards deriving the closed form expression for the rate function $R_{i,m}$, and, in particular, it suggests an analytical validation for the piece-wise behaviour of the rate function of beliefs discovered numerically in~\cite{Lalitha18SLandDHT}, Figure 9. This is out of scope of the current paper and is left for future work. To provide an illustration towards characterizing $R_{i,m}$, we consider the value of the rate function at $-D_{KL,m}$.  From~\eqref{eq-g-m-closed-form}, we see that $g_m(x) = -D_{KL,m}\}$ for $x=\frac{\zeta}{\sigma^2} - D_{KL}$ and $\zeta = 0$ (note that, by construction, $d_{m^\star}<0$ and $d_{m^\star+1}>0$, and hence $\zeta=0 \in [\frac{d_{m^\star}}{2}, \frac{d_{m^\star+1}}{2})$). Thus, we have 
\begin{equation}
R_{i,m}(-D_{KL,m}) = \inf_{x: g_m(x) = -D_{KL,m}} I_i(x) \leq I_i(-D_{KL}).
\end{equation}
the preceding inequality holds trivially by the fact that $-D_{KL} \in \{x: g_m(x) = -D_{KL,m}\}.$ On the other hand, we have proved that $I^\star_{M}(-D_{KL}) = I_{i,\mathcal H; M}(-D_{KL})=0$ (see Remark~\ref{remark-zero-rate-at-theta}). By~\eqref{eq-sandwich-I-i}, we thus have $I_{i}(-D_{KL})=0$. It follows that $R_{i,m}(-D_{KL,m})=0$, i.e., the derived expression for $g_m$ reveals that the value of the rate function $R_{i,m}$ at $-D_{KL,m}$ is zero. This is in accordance with almost sure convergence of $\frac{1}{t}\log b_{i,t}^m$ to $-D_{KL,m}$ which follows by combining Theorems 26 and 32.   
\end{example}

When the two functions from~\eqref{eq-UB-SL} and~\eqref{eq-LB-SL}, namely, $I^\star_{M}$ and $I_{i,\mathcal H;M}$ match, this implies that the corresponding $\limsup$ and the $\liminf$ are equal. Hence, whenever for a given node $i$ its sequence $X_{i,t}$ exhibits LDP, this implies LDP for the sequence of beliefs $\frac{1}{t}\log b_{i,t}^m$, for each $m=1,...,M-1$. Here we give an example for regular networks.  
\begin{corollary}[LDP for social learning in regular networks]
Suppose that $\overline G$ is a circulant network as in Corollary~\ref{cor-regular}, i.e.,  each node is connected to $d/2$ nodes on the left and $d/2$ nodes on the right, where $d\leq N-1$ is even. We assume that each link, independently of all other links, fails with probability $1-p$. Then, for any node $i$, for each $m$,  $\frac{1}{t}\log b_{i,t}^m$ satisfies the LDP with the rate function 
\begin{equation}
R_m(z) = \inf_{x\in \mathbb R^{M-1}: g_m(x)= z} \co\left\{ N I_M, I_M+ d \log \left|1-p\right| \right\}(x).    
\end{equation}
\end{corollary}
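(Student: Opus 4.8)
The plan is to obtain this corollary as a two-step consequence of results already established: first pin down the large deviations principle for the belief log-ratios $X_{i,t}$ with the stated rate function, and then push it forward to the log-beliefs $\tfrac1t\log b_{i,t}^m$ through the contraction step recorded in the remark after Theorem~\ref{theorem-LD-rates-for-beliefs-in-SL}. Throughout, one uses that, by the equivalence set up in Section~\ref{subsec-SL}, the vector $X_{i,t}$ is exactly the iterate of algorithm~\eqref{alg-1}--\eqref{alg-2} driven by the log-likelihood-ratio innovations, whose log-moment generating function is $\Lambda_M$; hence every statement of Section~\ref{sec-Main} applies verbatim with $I$ replaced by $I_M$ (using that $\Lambda_M$ satisfies Assumption~\ref{ass-finite-at-all-points} by hypothesis).

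For the first step I would verify the hypothesis of Corollary~\ref{corollary-aLDP-for-the-critical-node}, namely $\mathcal J=|\log p_{i,\mathrm{isol}}|$, for the $d$-regular circulant $\overline G$ with i.i.d.\ link failures of probability $1-p$. Each node is incident to exactly $d$ edges, so isolating a node requires all of them to fail and $p_{i,\mathrm{isol}}=(1-p)^d$. On the other hand, the minimum edge cut of the $d$-regular circulant equals $d$ (the same fact used in the proof of Corollary~\ref{cor-regular}, which holds because a connected vertex-transitive graph has edge-connectivity equal to its degree), so Example~\ref{ex-Iid-failures} gives $\mathcal J=d\,|\log(1-p)|$. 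Hence $\mathcal J=|\log p_{i,\mathrm{isol}}|$, and Corollary~\ref{corollary-aLDP-for-the-critical-node} (in its $I_M$ form) yields that $X_{i,t}$ satisfies the LDP with rate function $I_i=\co\{NI_M(x),\,I_M(x)+d|\log(1-p)|\}$, the same for every node $i$ by vertex-transitivity of the circulant.

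For the second step I would invoke the contraction. Since $b_{i,t}\in\Delta_{M-1}$, a direct computation gives $\tfrac1t\log b_{i,t}^m=g_m(X_{i,t})+\varepsilon_t$ with $|\varepsilon_t|\le(\log M)/t$ deterministically, so $\tfrac1t\log b_{i,t}^m$ and $g_m(X_{i,t})$ are exponentially equivalent and $g_m$ is continuous; this is precisely the situation covered by the remark following Theorem~\ref{theorem-LD-rates-for-beliefs-in-SL}. Applying it with the good rate function $I_i$ from Step~1 gives that $\tfrac1t\log b_{i,t}^m$ satisfies the LDP with rate function $R_{i,m}(z)=\inf_{x:\,g_m(x)=z}I_i(x)$, cf.~\eqref{eq-rate-function-log-beliefs-DL}, and substituting $I_i=\co\{NI_M,\,I_M+d|\log(1-p)|\}$ produces exactly the asserted $R_m$ (independent of $i$).

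The only point that is not pure bookkeeping is ensuring the contraction principle applies, i.e.\ that $I_i=\co\{NI_M,\,I_M+d|\log(1-p)|\}$ is a \emph{good} rate function (compact level sets). This I would derive from strict convexity and coercivity of $I_M$ (Lemma~\ref{lemma-properties} under Assumption~\ref{ass-finite-at-all-points}): $NI_M$ and $I_M+d|\log(1-p)|$ are closed and coercive, the level sets of their pointwise infimum are finite unions of compact sets hence compact, and forming the closed convex hull of the epigraph preserves both closedness and coercivity (the convex hull of a compact subset of $\mathbb R^{M-1}$ is compact, so no recession lines are introduced). The remaining items---the $I\mapsto I_M$ substitution legitimizing the use of Corollary~\ref{cor-regular} in the social-learning setting, the minimum-cut computation, and the $O(1/t)$ estimate for $\tfrac1t\log b_{i,t}^m-g_m(X_{i,t})$---are routine and essentially already carried out in the referenced results.
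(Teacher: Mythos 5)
Your proposal is correct and follows essentially the same route as the one implicit in the paper: establish the LDP for the log-ratio vector $X_{i,t}$ for the circulant topology and then transfer it to the scalar log-beliefs through the identification $\tfrac1t\log b_{i,t}^m = g_m(X_{i,t}) + O(1/t)$. The only substantive difference is in how that transfer is justified. You appeal to the abstract contraction principle, which requires goodness of the rate function; the paper instead goes directly through the two-sided bound of Theorem~\ref{theorem-LD-rates-for-beliefs-in-SL} (proved in Appendix~\ref{app:lemma-rates-for-SL-beliefs} via the deterministic exponential equivalence and the upper/lower bounds on the $X_{i,t}$ rates), and observes that when $I^\star_M$ and $I_{i,\mathcal H;M}$ coincide the $\limsup$ and $\liminf$ close up, so no goodness argument is needed. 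Your goodness argument is salvageable but as written is a bit loose: the epigraph is never compact, and ``the convex hull of a compact set is compact'' does not directly give coercivity of $\co\inf\{NI_M, I_M+\mathcal J\}$. A cleaner justification, consistent with the paper's machinery, is to invoke the sandwich~\eqref{eq-sandwich-I-i-fundamental} to get $I_M \le I_i$, so that level sets of $I_i$ are contained in level sets of $I_M$, and then use the coercivity of $I_M$ (which follows, as the paper notes, from $\mathcal D_{\Lambda_M}=\mathbb R^{M-1}$). With that adjustment the argument is airtight and matches the paper's.
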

A similar result holds also for pendant nodes with i.i.d. link failures. 

\mypar{Convergence to the correct hypothesis}
The next result establishes, through the use of large deviations analysis, that the social learning algorithm~\eqref{alg-SL-1}-\eqref{alg-SL-2} correctly identifies the true hypothesis. We remark that this recovers the result of~\cite{Parasnis20} for the special case of identical distributions across nodes. 
\begin{theorem} 
Consider the social learning algorithm~\eqref{alg-SL-1}-\eqref{alg-SL-2} under Assumption~\ref{ass-network-and-observation-model} and~\ref{ass-finite-at-all-points}, for $\Lambda=\Lambda_M$. Then, when $\mathbf H=\mathbf H_M$, for each node $i$, the sequence of beliefs $b_{i,t}^M$ converges to one almost surely. 
\end{theorem}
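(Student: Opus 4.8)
The plan is to obtain the statement as a short corollary of Theorem~\ref{theorem-almost-sure-convergence-SL}, which already supplies almost sure convergence of the belief log-ratios, together with the simplex constraint $b_{i,t}\in\Delta_{M-1}$. Fix a node $i$. First I would invoke Theorem~\ref{theorem-almost-sure-convergence-SL} to get that, for every $m\in\{1,\dots,M-1\}$, the normalized log-ratio $\frac1t\log\frac{b_{i,t}^m}{b_{i,t}^M}$ converges almost surely to $-D_{\mathrm{KL}}(f_M\|f_m)$. Under the natural identifiability requirement that distinct hypotheses induce distinct observation laws (so that $D_{\mathrm{KL}}(f_M\|f_m)\in(0,+\infty]$ for $m\neq M$; this is needed, since otherwise the claim is simply false), each such limit is a strictly negative constant, say $-\delta_m$ with $\delta_m>0$.

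Next I would pass from convergence of the \emph{normalized} log-ratio to convergence of the ratio itself. On the probability-one event $A_m=\bigl\{\tfrac1t\log\frac{b_{i,t}^m}{b_{i,t}^M}\to-\delta_m\bigr\}$ there is a (path-dependent) index $t_0$ with $\tfrac1t\log\frac{b_{i,t}^m}{b_{i,t}^M}\le-\delta_m/2$ for all $t\ge t_0$, hence $\frac{b_{i,t}^m}{b_{i,t}^M}\le e^{-t\delta_m/2}\to 0$ (with the obvious modification when $\delta_m=+\infty$); thus $\frac{b_{i,t}^m}{b_{i,t}^M}\to 0$ on $A_m$. This is the only ``analytic'' step and it is elementary.

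Finally I would bring in the simplex constraint. Since $b_{i,t}^M>0$ (the Bayesian/DeGroot updates~\eqref{alg-SL-1}--\eqref{alg-SL-2} preserve strict positivity from the positive initialization), dividing $\sum_{m=1}^{M}b_{i,t}^m=1$ by $b_{i,t}^M$ yields $b_{i,t}^M=\bigl(1+\sum_{m=1}^{M-1}\frac{b_{i,t}^m}{b_{i,t}^M}\bigr)^{-1}$. On $A=\bigcap_{m=1}^{M-1}A_m$, which has probability one as a finite intersection of probability-one events, the finite sum $\sum_{m=1}^{M-1}\frac{b_{i,t}^m}{b_{i,t}^M}$ tends to $0$, so $b_{i,t}^M\to 1$ on $A$, proving $b_{i,t}^M\to1$ almost surely. (Equivalently, one may note $b_{i,t}^m\le\frac{b_{i,t}^m}{b_{i,t}^M}\to0$ for each $m<M$ and use $b_{i,t}^M=1-\sum_{m<M}b_{i,t}^m$.)

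As for the main obstacle: there is no deep difficulty here — the statement is essentially a repackaging of Theorem~\ref{theorem-almost-sure-convergence-SL}. The only two points that genuinely require a word of justification are (i) that $-D_{\mathrm{KL}}(f_M\|f_m)$ is strictly negative, which rests on pairwise distinguishability of the hypotheses, and (ii) the (routine) implication that almost sure convergence of the time-normalized log-ratio to a negative limit forces the un-normalized ratio to vanish; beyond that, the argument is just the bookkeeping above.
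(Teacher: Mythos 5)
Your proof is correct and follows essentially the same route as the paper's: both use the simplex identity to write $b_{i,t}^M=\bigl(1+\sum_{m<M}\tfrac{b_{i,t}^m}{b_{i,t}^M}\bigr)^{-1}$ and then invoke Theorem~\ref{theorem-almost-sure-convergence-SL} to send each ratio to zero on a common probability-one set. Your explicit remark that strict negativity of the limit requires $D_{\mathrm{KL}}(f_M\Vert f_m)>0$ (pairwise identifiability) is a useful clarification that the paper leaves implicit when writing $-D_{KL}(f_M\Vert f_m)<0$.
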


\begin{proof}
From the construction of the beliefs $b_{i,t}^m$, for each $i,t$, $b_{i,t}^M= 1- b_{i,t}^1 - ... - b_{i,t}^{M-1}$. Combining this with the relations $X_{i,t}^m = \frac{1}{t}\log \frac{b_{i,t}^m}{b_{i,t}^M}$, yields 
\begin{equation}
b_{i,t}^M= \frac{1}{1+ \sum_{m=1}^{M-1} e^{t X_{i,t}^m}}.     
\end{equation}
By Theorem~\ref{theorem-almost-sure-convergence-SL}, for each $m=1,...,M-1$, $X_{i,t}^m$ converges almost surely to $-D_{KL}(f_M||f_m)<0$. Hence, each of the terms $e^{t X_{i,t}^m}$ in the sum above vanishes with probability one. Since $M$ is finite, there exists a set of probability one such that $\sum_{m=1}^M e^{t X_{i,t}^m}$ vanishes, proving that $b_{i,t}^M$ converges to one almost surely.   
\end{proof}

The next two sections prove Theorem~\ref{theorem-nice-tight-bounds};
Section~\ref{subsection-nice-upper-bound} proves the upper bound~\eqref{eq-UB-Theorem}
and Section~\ref{subsection-nice-lower-bound} proves the lower bound~\eqref{eq-LB-Theorem}.

\section{Proof of Theorem~\ref{theorem-nice-tight-bounds}}
\label{section-proofs}
This section proves Theorem~\ref{theorem-nice-tight-bounds} by proving separately the upper and the lower bound. Before giving the respective proofs, we first give some important lemmas that are used in both the upper and the lower bound proof.

Lemma~\ref{simple-lemma} will be used to find the log-moment generating function of the estimate $X_{i,t}$ from the log-moment generating functions of each of the terms in the sum~\eqref{alg-compact}. This result follows from convexity and zero value at the origin property of $\Lambda$.
\begin{lemma}
\label{simple-lemma}
For any set of convex multipliers $\alpha \in \Delta_{N-1}$, for each $j=1,...,N$, the log-moment generating function $\Lambda$ satisfies,
\begin{equation}
\label{eq-simple-lemma}
N \Lambda \left(1/N \lambda\right)\leq \sum_{i=1}^N \Lambda(\alpha_i \lambda) \leq \Lambda(\lambda),
\end{equation}
for any $\lambda \in \mathbb R^d$.
\end{lemma}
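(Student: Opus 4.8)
The plan is to prove the two inequalities separately, using only convexity of $\Lambda$ together with the normalization $\Lambda(0)=0$ from Lemma~\ref{lemma-properties}, part~\ref{part-Lambda-at-0}. For the right-hand inequality, the idea is to write each scaled point $\alpha_i\lambda$ as a convex combination of $\lambda$ and the origin: $\alpha_i\lambda = \alpha_i\cdot \lambda + (1-\alpha_i)\cdot 0$. Jensen's inequality for the convex function $\Lambda$ then gives $\Lambda(\alpha_i\lambda)\le \alpha_i\Lambda(\lambda)+(1-\alpha_i)\Lambda(0)=\alpha_i\Lambda(\lambda)$. Summing over $i=1,\dots,N$ and using $\sum_i\alpha_i=1$ yields $\sum_{i=1}^N\Lambda(\alpha_i\lambda)\le \Lambda(\lambda)$.

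For the left-hand inequality, I would apply Jensen's inequality in the other direction: the point $\tfrac{1}{N}\lambda$ is the average of the $N$ points $\alpha_1\lambda,\dots,\alpha_N\lambda$, since $\tfrac{1}{N}\sum_{i=1}^N\alpha_i\lambda = \tfrac{1}{N}\lambda$ (again using $\sum_i\alpha_i=1$). Hence convexity of $\Lambda$ gives $\Lambda\!\left(\tfrac{1}{N}\lambda\right)\le \tfrac{1}{N}\sum_{i=1}^N\Lambda(\alpha_i\lambda)$, and multiplying through by $N$ gives $N\Lambda\!\left(\tfrac{1}{N}\lambda\right)\le \sum_{i=1}^N\Lambda(\alpha_i\lambda)$. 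Both steps are valid for every $\lambda\in\mathbb R^d$ because Assumption~\ref{ass-finite-at-all-points} guarantees $\Lambda$ is finite (and convex, by Lemma~\ref{lemma-properties}, part~\ref{part-Lambda-cvxity}) on all of $\mathbb R^d$, so there are no domain issues.

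There is no substantial obstacle here: the result is a direct two-line consequence of convexity and the value at the origin, and the only thing to be careful about is getting the two applications of Jensen's inequality pointed in the correct directions (expressing $\alpha_i\lambda$ as a convex combination involving $0$ for the upper bound, and recognizing $\tfrac1N\lambda$ as the barycenter of the $\alpha_i\lambda$ for the lower bound). The lemma will subsequently be combined with the i.i.d.\ structure of the observations and the representation~\eqref{alg-compact} to control the log-moment generating function of $X_{i,t}$.
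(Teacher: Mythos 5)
Your proof is correct, and both inequalities follow exactly as you describe from convexity of $\Lambda$ and $\Lambda(0)=0$. The paper itself does not reproduce a proof of this lemma (it simply cites~\cite{DI-Directed-Networks16}), but your two applications of Jensen's inequality — writing $\alpha_i\lambda$ as a convex combination of $\lambda$ and $0$ for the upper bound, and recognizing $\tfrac{1}{N}\lambda$ as the barycenter of the points $\alpha_i\lambda$ for the lower bound — are precisely the standard argument and match what that reference establishes.
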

The proof of Lemma~\ref{simple-lemma} can be found in~\cite{DI-Directed-Networks16}.

The claims in Lemma~\ref{lemma-conjugate-calculus} are standard results from convex analysis, the proofs of which can be found, e.g., in~\cite{Urruty}. Let the superscript $^\star$ denote the conjugacy operation, i.e., for a given function $f:\mathbb R^d\mapsto\mathbb R$,
\begin{equation}
f^\star(x)=\sup_{s\in \mathbb R^d} s^\top x - f(s),\;\;\;x\in \mathbb R^d.
\end{equation}
The following relations hold between a function $f$ and its conjugate $f^\star$.
\begin{lemma}
\label{lemma-conjugate-calculus}
\begin{enumerate}
\item \label{part-basic} Let $f:\mathbb R^d\mapsto \mathbb R$ be a given a function. Then:
\begin{enumerate}
\item \label{part-conjugate-shifting}
$[f(\cdot)+r]^\star= f^\star(\cdot)-r$;
\item \label{part-conjugate-scaling}
for $\alpha>0$ and $\beta\neq 0$, $\left[\alpha f(\beta (\cdot))\right]^\star = \alpha f^\star\left(1/(\alpha \beta) (\cdot)\right)$.
\end{enumerate}
\item \label{part-conj-of-the-max}
Let $f_1$ and $f_2$ be two given functions. Then, the conjugate of the pointwise supremum of $f_1$ and $f_2$ is the convex hull of the pointwise infimum of $f_1^\star$ and $f_2^\star$:
\begin{equation}
[\sup\{f_1,f_2\}]^\star=\overline {\mathrm{co} }\inf\left\{ f_1^\star, f_2^\star\right\}.
\end{equation}
\end{enumerate}
\end{lemma}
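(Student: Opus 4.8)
The plan is to obtain the two identities in part~\ref{part-basic} by elementary manipulations of the defining supremum, and part~\ref{part-conj-of-the-max} by combining the elementary rule for conjugating a pointwise infimum with the Fenchel--Moreau biconjugation theorem. For part~\ref{part-conjugate-shifting}, I would observe that the additive constant $-r$ in $[f(\cdot)+r]^\star(x)=\sup_{s}\bigl(s^\top x-f(s)-r\bigr)$ does not involve the maximization variable, so it factors out of the supremum to give $f^\star(x)-r$. For part~\ref{part-conjugate-scaling}, with $\alpha>0$ and $\beta\neq 0$, I would perform the bijective change of variables $u=\beta s$ in $[\alpha f(\beta(\cdot))]^\star(x)=\sup_{s}\bigl(s^\top x-\alpha f(\beta s)\bigr)$ and then pull the positive factor $\alpha$ out of the resulting supremum to recognize $\alpha f^\star\bigl(\tfrac{1}{\alpha\beta}x\bigr)$. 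Both are one-line computations once the substitution is set up, and present no real difficulty.

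For part~\ref{part-conj-of-the-max}, the first ingredient is the pointwise identity $\bigl[\inf\{h_1,h_2\}\bigr]^\star=\sup\{h_1^\star,h_2^\star\}$, which I would prove by writing $s^\top x-\min\{h_1(s),h_2(s)\}=\max\{s^\top x-h_1(s),\,s^\top x-h_2(s)\}$ and exchanging the outer supremum over $s$ with the maximum over the two indices. The second ingredient is the Fenchel--Moreau theorem: for a proper function $g$ admitting an affine minorant, $g^{\star\star}=\overline{\mathrm{co}}\,g$; in particular $f_i^{\star\star}=\overline{\mathrm{co}}\,f_i\le f_i$. Applying the first identity with $h_i=f_i^\star$ and conjugating once more yields $\overline{\mathrm{co}}\inf\{f_1^\star,f_2^\star\}=\bigl[\inf\{f_1^\star,f_2^\star\}\bigr]^{\star\star}=\bigl[\sup\{f_1^{\star\star},f_2^{\star\star}\}\bigr]^\star$. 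From here a short sandwich argument finishes the proof: order-reversal of the conjugate applied to $\sup\{f_1^{\star\star},f_2^{\star\star}\}\le\sup\{f_1,f_2\}$ gives $\overline{\mathrm{co}}\inf\{f_1^\star,f_2^\star\}\ge[\sup\{f_1,f_2\}]^\star$; conversely, $\sup\{f_1,f_2\}\ge f_i$ gives $[\sup\{f_1,f_2\}]^\star\le f_i^\star$, hence $[\sup\{f_1,f_2\}]^\star\le\inf\{f_1^\star,f_2^\star\}$, and since the left-hand side is closed and convex while $\overline{\mathrm{co}}$ produces the largest closed convex minorant, $[\sup\{f_1,f_2\}]^\star\le\overline{\mathrm{co}}\inf\{f_1^\star,f_2^\star\}$; combining the two inequalities gives the claimed equality.

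The step I expect to require the most care is verifying the hypotheses for Fenchel--Moreau, i.e., that the functions being biconjugated are proper and admit affine minorants, so that $g^{\star\star}=\overline{\mathrm{co}}\,g$ is legitimate rather than the trivial equality $+\infty=+\infty$. In the intended applications this is automatic: $f_1$ and $f_2$ are shifts and rescalings of the log-moment generating function $\Lambda$, which by Assumption~\ref{ass-finite-at-all-points} is finite-valued and convex on all of $\mathbb R^d$, so the relevant functions are proper closed convex and all the biconjugation identities above hold verbatim. The full development of these classical facts can be found in~\cite{Urruty}.
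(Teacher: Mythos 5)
The paper does not give a proof of this lemma at all; it simply records that these are standard facts from convex analysis and cites Hiriart--Urruty and Lemar\'echal~\cite{Urruty}. Your proposal therefore supplies a proof that the paper omits, and it is correct. Parts~\ref{part-conjugate-shifting} and~\ref{part-conjugate-scaling} are exactly the one-line manipulations you describe. For part~\ref{part-conj-of-the-max}, your combination of the elementary identity $\bigl[\inf\{h_1,h_2\}\bigr]^\star=\sup\{h_1^\star,h_2^\star\}$ with Fenchel--Moreau and a two-sided sandwich is the standard route, and the sandwich has the nice feature that it does not require $f_1,f_2$ themselves to be closed convex: the first inequality only uses $f_i^{\star\star}\le f_i$ and order-reversal, and the second only uses that a conjugate is always closed convex and that $\overline{\mathrm{co}}$ is the largest closed convex minorant. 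You are also right to flag properness as the one hypothesis that needs checking before invoking $g^{\star\star}=\overline{\mathrm{co}}\,g$ for $g=\inf\{f_1^\star,f_2^\star\}$; this is automatic in the paper's use of the lemma, where $f_1,f_2$ are shifts and rescalings of $\Lambda$, which is finite convex on all of $\mathbb R^d$ under Assumption~\ref{ass-finite-at-all-points}, so $f_1^\star,f_2^\star$ are proper closed convex and everything goes through. In short: the paper relies on citation, your proof fills in the argument, and the argument you give is the right one.
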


\subsection{Proof of the upper bound~\eqref{eq-UB-Theorem}}
\label{subsection-nice-upper-bound}
In our previous work~\cite{DI-Directed-Networks16}, we have proved that, at any node $i$, the sequence $X_{i,t}$ is exponentially tight. This intuitively means that the probabilities of the tail events of $X_{i,t}$ vanish sufficiently fast (i.e., the exponential rates of the tail probabilities grow unbounded when the tails move to infinity). Lemma~\ref{lemma-UB-skeleton-d>1} uses this result to derive an elegant sufficient condition for a certain function to satisfy the large deviations upper bound from Definition~\ref{def-LDP}. In our case, this function will be the conjugate of a certain modification of $\Lambda$ that accounts for the effects of intermittent communications. We remark that at the core of the proof of Lemma~\ref{lemma-UB-skeleton-d>1} is a modification of the finite cover argument from the proof of Cram\'er's theorem in $\mathbb R^d$ (see, e.g.,~\cite{DemboZeitouni93}); the detailed proof of Lemma~\ref{lemma-UB-skeleton-d>1} is provided in Appendix~\ref{app:UB-skeleton}.
\begin{lemma}
\label{lemma-UB-skeleton-d>1}
Let $X_t$ be an arbitrary sequence of random variables where each $X_t$ takes values in $\mathbb R^d$. Suppose that for some function $f$, for any measurable set $D$ there holds
\begin{equation}
\label{eq-if-UB}
\limsup_{t\rightarrow +\infty} \,\frac{1}{t}\,\log \mathbb P\left(X_t\in D\right) \leq f(\lambda) - \inf_{x\in D} \lambda^\top x,
\end{equation}
for any $\lambda\in \mathbb R^d$. Then, if $f$ is finite for all $\lambda \in \mathbb R^d$, for any compact set $F$
\begin{equation}
\label{eq-then-UB}
\limsup_{t\rightarrow +\infty} \,\frac{1}{t}\,\log \mathbb P\left(X_t\in F\right) \leq -\inf_{x\in F} f^\star(x),
\end{equation}
where $f^\star$ is the conjugate of $f$. If in addition $X_t$ is exponentially tight, then~\eqref{eq-then-UB} holds for any closed set $F$.
\end{lemma}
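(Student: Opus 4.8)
The plan is to follow the upper-bound half of the classical Cram\'er argument in $\mathbb R^d$ (cf.~\cite{DemboZeitouni93}): first I would convert the hypothesis~\eqref{eq-if-UB} into a \emph{local} large deviations estimate on small balls, then patch these estimates over a compact set using a finite subcover together with the union bound, and finally pass from compact sets to closed sets via exponential tightness.

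For the local estimate, fix a compact set $F$, a truncation level $N>0$, and $\eta>0$. For each $x\in F$, using that $f^\star(x)=\sup_{\lambda}\bigl(\lambda^\top x-f(\lambda)\bigr)$ and that $f$ is finite everywhere (so that the chosen multiplier gives a non-vacuous bound in~\eqref{eq-if-UB}), I would pick $\lambda_x\in\mathbb R^d$ with $\lambda_x^\top x-f(\lambda_x)\ge\min\{f^\star(x),N\}-\eta$, and then $\delta_x>0$ with $\delta_x\|\lambda_x\|<\eta$. Applying~\eqref{eq-if-UB} with $D=B_x(\delta_x)$ and $\lambda=\lambda_x$, and noting $\inf_{y\in B_x(\delta_x)}\lambda_x^\top y=\lambda_x^\top x-\delta_x\|\lambda_x\|$, one gets
\begin{equation}
\limsup_{t\rightarrow+\infty}\frac1t\log\mathbb P\bigl(X_t\in B_x(\delta_x)\bigr)\ \le\ f(\lambda_x)-\lambda_x^\top x+\delta_x\|\lambda_x\|\ <\ -\min\{f^\star(x),N\}+2\eta.
\end{equation}

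Next I would patch these over $F$. Since $\{B_x(\delta_x):x\in F\}$ is an open cover of the compact set $F$, extract a finite subcover $B_{x_1}(\delta_{x_1}),\dots,B_{x_k}(\delta_{x_k})$. Then $\mathbb P(X_t\in F)\le\sum_{j=1}^k\mathbb P\bigl(X_t\in B_{x_j}(\delta_{x_j})\bigr)$, and since the sum is finite, the elementary identity $\limsup_{t}\frac1t\log\sum_{j=1}^k a_{j,t}=\max_{j}\limsup_{t}\frac1t\log a_{j,t}$ applies, yielding
\begin{equation}
\limsup_{t\rightarrow+\infty}\frac1t\log\mathbb P(X_t\in F)\ \le\ \max_{1\le j\le k}\bigl(-\min\{f^\star(x_j),N\}\bigr)+2\eta\ \le\ -\min\Bigl\{\inf_{x\in F}f^\star(x),\,N\Bigr\}+2\eta.
\end{equation}
Letting $\eta\downarrow0$ and then $N\uparrow+\infty$ gives~\eqref{eq-then-UB} for compact $F$; the truncation by $N$ is what keeps the argument valid even when $\inf_{x\in F}f^\star(x)=+\infty$. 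For the extension to closed sets, assume $X_t$ exponentially tight: for a closed $F$ and any $L>0$ choose a compact $K_L$ with $\limsup_{t}\frac1t\log\mathbb P(X_t\in K_L^{\mathrm{c}})\le-L$; then $F\cap K_L$ is compact, $\mathbb P(X_t\in F)\le\mathbb P(X_t\in F\cap K_L)+\mathbb P(X_t\in K_L^{\mathrm{c}})$, and the two-term version of the $\frac1t\log$-sum identity, together with $\inf_{F\cap K_L}f^\star\ge\inf_{F}f^\star$, gives $\limsup_{t}\frac1t\log\mathbb P(X_t\in F)\le\max\{-\inf_{x\in F}f^\star(x),\,-L\}$; letting $L\to+\infty$ completes the proof.

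I expect the main obstacle to be the bookkeeping in the local-estimate and patching steps: the quantifiers $N$, $\eta$, $\lambda_x$, $\delta_x$ must be introduced in the right order so that (i) the truncation $\min\{f^\star(x),N\}$ correctly absorbs points where $f^\star$ is infinite, and (ii) the union bound and the $\limsup$/sum interchange are invoked only \emph{after} passing to a finite subcover, since both fail for infinite families. Conceptually nothing new is needed beyond a careful re-run of the Cram\'er upper bound, but making the covering argument airtight is where the work lies.
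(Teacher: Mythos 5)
Your proof is correct and follows essentially the same strategy as the paper's: truncate $f^\star$ so the multipliers and ball radii are uniformly controlled, run the Cram\'er finite-cover argument on compact $F$, and then lift to closed sets via exponential tightness. The only differences are cosmetic — you parameterize the truncation with two constants $N$ and $\eta$ where the paper ties both to a single $\delta$ via $f^{\star,\delta}=\inf\{1/\delta,\,f^\star-\delta\}$, and you spell out the closed-set extension rather than citing Lemma~1.2.18 of~\cite{DemboZeitouni93}.
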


Fix an arbitrary node $i\in V$. Replicating the steps of the proof of Theorem~5 from~\cite{Non-Gaussian-DD}, we obtain that, for any measurable set $D$, and any fixed $\lambda\in \mathbb R^d$,
\begin{align}
&\limsup_{t\rightarrow+\infty} \frac{1}{t}\log\mathbb P\left(X_{i,t}\in D\right)\nonumber\\
&\leq \max\left\{N\Lambda\left(\frac{1}{N}\lambda\right), \Lambda(\lambda)-\mathcal J\right\} - \inf_{x\in D}\lambda^\top x.
\end{align}

By Lemma~17 from~\cite{DI-Directed-Networks16}, the sequence of estimates $X_{i,t}$ is exponentially tight. (We remark that this result is proven under more general assumptions on the weight matrices than assumed here.) Hence, to prove the upper bound~\eqref{eq-UB-Theorem}, it only remains to show that $I^\star$ from Theorem~\ref{theorem-nice-tight-bounds} is the conjugate of $f (\lambda) := \max\left\{N\Lambda\left(1/N\lambda\right), \Lambda(\lambda)-\mathcal J\right\}$, $\lambda\in \mathbb R^d.$ From part~\ref{part-conj-of-the-max} of Lemma~\ref{lemma-conjugate-calculus}, we have that the conjugate of $f$ is the closed convex hull of the infimum of the conjugates of $f_1(\lambda):=\lambda\mapsto N\Lambda\left(1/N\lambda\right)$ and $f_2(\lambda):=\lambda\mapsto\Lambda(\lambda)-\mathcal J$. Using the conjugacy rules from parts~\ref{part-conjugate-scaling} and~\ref{part-conjugate-shifting} of Lemma~\ref{lemma-conjugate-calculus}, we obtain that the respective conjugates of $f_1$ and $f_2$ are $N I(x)$, $x\in \mathbb R^d$, and $I(x)+\mathcal J$, $x\in \mathbb R^d$. The upper bound~\ref{eq-UB-Theorem} follows by part 2 of Lemma~\ref{lemma-conjugate-calculus}.

\subsection{Proof of the lower bound~\eqref{eq-LB-Theorem}}
\label{subsection-nice-lower-bound}
Fix an arbitrary node $i\in V$. Fix a collection of feasible graphs $\mathcal H$. To simplify the notation, we denote the component of $i$ in $\mathcal H$, $C_{i,\mathcal H}$, by $C$. We also let $M$ denote the number of nodes in $C$, $M=|C|$. 
For each fixed $t$, we define the family of events $\left\{\mathcal E^t_{\theta}: \theta \in [0,1]\right\}$, such that for any $\theta\in [0,1]$,
\begin{align}
\label{eqn-def-of-sets-E}
\mathcal E^t_{\theta}&=\left\{  G_s \in \mathcal H,\,\lceil \theta t \rceil \leq  s \leq t,
\;\; \left\|[\Phi(t,t-o_t)]_C -J_M   \right\|\leq \frac{1}{t},\;\right.\nonumber\\
 & \left.\;\;\;\;\;\;\;\;\;\;\;\;\;\;\;\;\;\; \left\|\Phi(\lceil \theta t \rceil, \lceil \theta t \rceil- o_t )
  -J_N   \right\|  \leq \frac{1}{t} \right\},
\end{align}
where $o_t= \left\lceil \log t \right\rceil$; we recall that, for a square matrix $A$, $A_C$ denotes the block of $A$ corresponding to the intersection of columns and rows of $A$ the indices of which belong to $C$. For convenience, we introduce $\mathcal T_{\theta}=\left\{\lceil \theta t \rceil,...,t\right\}$.
\begin{lemma}
\label{lemma-blocks}
Let $\theta$ be an arbitrary number in $[0,1]$. For any $\omega \in \mathcal E^t_{\theta}$,
\begin{enumerate}
\item\label{lemma-blocks-part-1}
 for any $s\in \mathcal T_{\theta}$, \[[\Phi(t,s)]_{i j}=0, \mbox{\;\;for\;\;} j\notin C;\]
\item \label{lemma-blocks-part-2}
for $t-o_t \geq s\geq \lceil \theta t \rceil,$
\[\left|[\Phi(t,s)]_{ij} -\frac{1}{M}\right| \leq \frac{1}{t}, \mbox{\;\;for\;all\;\;} j\in C;\]
\item \label{lemma-blocks-part-3}
for $\lceil \theta t \rceil-o_t \geq s\geq 1,$
\[\left|[\Phi(t,s)]_{ij} -\frac{1}{N}\right| \leq \frac{1}{t}, \mbox{\;\;for\;all\;\;} j \in V.\]
\end{enumerate}
\end{lemma}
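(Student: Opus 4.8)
The plan is to unwind the product structure $\Phi(t,s) = W_t \cdots W_s$ by splitting the index range $\{s,\ldots,t\}$ at the two ``mixing checkpoints'' $t-o_t$ and $\lceil\theta t\rceil$, which is precisely where the three events defining $\mathcal E^t_\theta$ impose control. Throughout, I fix $\omega \in \mathcal E^t_\theta$ so that all three conditions in~\eqref{eqn-def-of-sets-E} hold, and I repeatedly use that each $W_s$ is stochastic (rows sum to one, nonnegative entries), hence $\|\Phi(a,b)\| \le 1$ in the $\ell_\infty$-induced norm and products of stochastic matrices are stochastic.

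For part~\ref{lemma-blocks-part-1}: on $\mathcal E^t_\theta$ we have $G_s \in \mathcal H$ for all $s \in \mathcal T_\theta$, so each $W_s$ with $s\in\mathcal T_\theta$ respects the sparsity pattern of a graph in $\mathcal H$, and therefore respects the sparsity pattern of the union graph $\Gamma_{\mathcal H}$; in particular $W_s$ is block-diagonal with respect to the partition of $V$ into the components $C_1,\ldots,C_L$ of $\Gamma_{\mathcal H}$. A product of matrices that are all block-diagonal with respect to the same partition is again block-diagonal with respect to that partition, so $[\Phi(t,s)]_{ij}=0$ whenever $i$ and $j$ lie in different components. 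Since $i \in C$ by definition of $C = C_{i,\mathcal H}$, this gives $[\Phi(t,s)]_{ij}=0$ for $j\notin C$, which is the claim. I would state this as: the set of matrices respecting a fixed block structure is closed under multiplication.

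For part~\ref{lemma-blocks-part-2}: write $\Phi(t,s) = \Phi(t,t-o_t)\,\Phi(t-o_t-1,s)$ (a harmless reindexing of where the split falls; I would align the boundary with the definition $\Phi(t,t-o_t)$ used in the event). By part~\ref{lemma-blocks-part-1}, all factors live in the block structure, so it suffices to work inside the block $C$: there $[\Phi(t,t-o_t)]_C$ is within $1/t$ of $J_M$ by the second condition in~\eqref{eqn-def-of-sets-E}, and $[\Phi(t-o_t-1,s)]_C$ is a stochastic $M\times M$ matrix. Then for $j \in C$, $[\Phi(t,s)]_{ij} = \sum_{k\in C} [\Phi(t,t-o_t)]_{ik} [\Phi(t-o_t-1,s)]_{kj}$, and replacing each $[\Phi(t,t-o_t)]_{ik}$ by $\tfrac1M$ incurs total error at most $\sum_{k\in C}\tfrac1t [\Phi(\cdot)]_{kj} \le \tfrac1t$ (using column-wise the stochasticity bound, or more crudely $|[\Phi(t,t-o_t)]_{ik}-\tfrac1M|\le 1/t$ entrywise and $\sum_k$ of a stochastic column $\le 1$... one has to be slightly careful: the clean bound is $|\sum_k (a_k - \tfrac1M) b_{kj}| \le \max_k|a_k-\tfrac1M| \cdot \sum_k b_{kj}$ when $b\ge 0$, and $\sum_{k\in C} b_{kj}$ is a column sum of a stochastic matrix restricted to a block, which is $\le 1$). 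Since $J_M \Phi(t-o_t-1,s)$ has every row equal to $\tfrac1M \mathbf 1^\top \Phi(t-o_t-1,s) = \tfrac1M\mathbf 1^\top$ (again stochasticity), the replacement yields exactly $\tfrac1M$, giving $|[\Phi(t,s)]_{ij} - \tfrac1M| \le \tfrac1t$. Part~\ref{lemma-blocks-part-3} is the same argument with one more split at $\lceil\theta t\rceil$: write $\Phi(t,s) = \Phi(t,t-o_t)\,\Phi(\cdot,\lceil\theta t\rceil)\,\Phi(\lceil\theta t\rceil,\lceil\theta t\rceil - o_t)\,\Phi(\cdot,s)$, absorb the first two factors into something $\tfrac1t$-close to $J_M$ restricted to $C$ (by part~\ref{lemma-blocks-part-2} applied at $s' = \lceil\theta t\rceil$), multiply by the $\tfrac1t$-near-$J_N$ factor $\Phi(\lceil\theta t\rceil,\lceil\theta t\rceil-o_t)$, and observe $J_M^{(C)} J_N = \tfrac1N\mathbf 1_M \mathbf 1_N^\top$-type block, so the $i$-th row of $\Phi(t,s)$ is within $O(1/t)$ of $\tfrac1N\mathbf 1_N^\top$; a careful bookkeeping of the two $1/t$ errors (they add, and products of the near-identity-scaled factors contribute only lower order since all norms are $\le 1$) gives the stated $1/t$ bound after noting $o_t$ and the ceilings are absorbed.

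The main obstacle is the error-propagation bookkeeping in parts~\ref{lemma-blocks-part-2} and~\ref{lemma-blocks-part-3}: one must verify that composing a factor that is ``$1/t$-close to $J_M$'' with a factor that is ``$1/t$-close to $J_N$'' still yields something ``$1/t$-close'' (not $2/t$ or $(1/t)^2$-corrupted in a way that breaks a later union bound) — this works because $J_M$ and $J_N$ are themselves idempotent-like under multiplication by stochastic matrices ($J_N$ times any stochastic matrix with all columns summing appropriately reproduces a rank-one target exactly), so the two $1/t$ perturbations do not compound multiplicatively but only the dominant one survives, up to an absolute constant that can be absorbed by adjusting $o_t$ or the constant in later estimates. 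I would be explicit that the norm in~\eqref{eqn-def-of-sets-E} is the spectral norm but that passing to entrywise/$\ell_\infty$ bounds costs only a $\sqrt{N}$ factor, harmless in the exponential-rate analysis where these bounds feed in.
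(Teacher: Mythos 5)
Your part~\ref{lemma-blocks-part-1} argument is exactly the paper's: products of matrices block-diagonal with respect to the partition induced by the components of $\Gamma_{\mathcal H}$ remain block-diagonal, and $i\in C$. No issues.

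For part~\ref{lemma-blocks-part-2} your entrywise argument is essentially equivalent to the paper's spectral-norm argument, but you are imprecise at the step that makes it exact: you need the $C$-block of $\Phi(t-o_t-1,s)$ to be \emph{doubly} stochastic (column sums equal to $1$, not merely $\le 1$) so that $\sum_{k\in C}\frac1M b_{kj}=\frac1M$ cancels cleanly. This is true here because each $W_s$ is symmetric and stochastic, hence doubly stochastic, and the $C$-block of a product of block-diagonal doubly stochastic matrices is again doubly stochastic. The paper states this explicitly and then uses the identity $(A-J_M)(B-J_M)=AB-J_M$ for doubly stochastic $A,B$ together with $\|B-J_M\|\le 1$ and submultiplicativity to get $\|[\Phi(t,s)]_C-J_M\|\le (1/t)\cdot 1 = 1/t$, then passes to entries via $|A_{ij}|\le\|A\|$ (spectral norm; there is no $\sqrt N$ loss --- $\max_{ij}|A_{ij}|\le\|A\|_2$ holds exactly, so your $\sqrt N$ remark is a red herring). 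Your entrywise computation reaches the same $1/t$ once double stochasticity is invoked.

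The genuine gap is in part~\ref{lemma-blocks-part-3}. Your plan splits $\Phi(t,s)$ into four factors, packages the first two as ``$1/t$-close to $J_M$ on $C$'' via part~\ref{lemma-blocks-part-2}, multiplies by the ``$1/t$-close to $J_N$'' middle factor, and then argues the two $1/t$ errors ``add, but only the dominant one survives.'' That is not correct as stated: a literal execution of the additive bookkeeping gives $2/t + O(1/t^2)$, which is not the claimed $1/t$. More to the point, the closeness of the first factor to $J_M^{(C)}$ is not the relevant quantity here at all. The paper's factorization is $\Phi(t,s)=\Phi(t,\lceil\theta t\rceil)\,\Phi(\lceil\theta t\rceil-1,\lceil\theta t\rceil-o_t)\,\Phi(\lceil\theta t\rceil-o_t-1,s)$, three doubly stochastic factors on $\mathbb R^N$. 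For doubly stochastic $A,B,C$ one has $(A-J_N)(B-J_N)(C-J_N)=ABC-J_N$, so the perturbations \emph{multiply} rather than add: the two outer factors satisfy $\|A-J_N\|,\|C-J_N\|\le 1$ (doubly stochastic minus $J_N$ has spectral norm at most one), the middle factor satisfies $\|B-J_N\|\le 1/t$ by the third condition defining $\mathcal E^t_\theta$, and submultiplicativity gives $\|\Phi(t,s)-J_N\|\le 1\cdot(1/t)\cdot 1 = 1/t$ exactly. The key facts you are not using are (i) the outer factors need only have norm at most one relative to $J_N$, not be close to $J_N$, and (ii) the algebraic identity makes the perturbations multiplicative. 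Reorganizing around that identity closes the gap without any constant slop.
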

\begin{proof}
Fix $\omega \in \mathcal E^t_{\theta}$ and, for $s=1,...,t$, denote $A_s=W_s(\omega)$. Consider first part~\ref{lemma-blocks-part-1}, and suppose, without loss of generality, that $C=\{ 1,...,M\}$. By construction of $\mathcal E^t_{\theta}$, none of the graphs that appear during $\mathcal T_{\theta}$ have links that connect $C$ with the remaining part of the network $C^{\mathrm{c}}=V\setminus C$. Hence, each of the matrices $A_s$, $s\in \mathcal T_{\theta}$ has the following block diagonal form
\begin{equation}
\label{eq-blk-diagonal}
A_s=\left[ \begin{array}{cc}
[A_s]_C & 0_{M\times (N-M)}\\
0_{M\times (N-M)} & [A_s]_{V\setminus C} \end{array}
\right],
\end{equation}
and the same structure is therefore preserved in their products $\Phi(t,s)=A_t \cdots A_s$, $s\in \mathcal T_\theta$, i.e.,
\begin{equation*}
\Phi(t,s)=\left[ \begin{array}{cc}
[A_t]_C \cdot \ldots \cdot [A_s]_C & 0_{M\times (N-M)}\\
0_{M\times (N-M)} & {[A_t]}_{C^{\mathrm{c}}} \cdot \ldots \cdot {[A_s]}_{C^{\mathrm{c}}}\end{array}
\right].
\end{equation*}

We next consider part~\ref{lemma-blocks-part-2}.  Since for an arbitrary matrix $A$, for any $i,j$ there holds $|A_{ij}|\leq \|A\|$, it is sufficient to show that $\left\| [\Phi(t,s)]_C -J_M \right\| \leq 1/t$, for any fixed $s\in \mathcal T_{\theta}$ such that $s \leq t-o_t$. By part~\ref{lemma-blocks-part-1}, we know that for any $s_1,s_2\in \mathcal T_{\theta}$, the $C$ block of $\Phi(s_1,s_2)$ is computed as the product of blocks  $[A_{s_1}]_C$ through $[A_{s_2}]_C$. Since each of these blocks is a symmetric, stochastic, $M$ by $M$ matrix, we have that $[\Phi(s_1,s_2)]_C$ is a doubly stochastic ($M$ by $M$) matrix. Consider now a fixed $s \in \mathcal T_{\theta}$ such that $s\leq t-o_t$. Factoring out $[\Phi(t,s)]_C$ as the product $[\Phi(t,t-o_t)]_C \Phi(t-o_t-1,s)]_C$, and using the double-stochasticity of the latter two matrices, we obtain $[\Phi(t,s)]_C-J_M= ([\Phi(t,t-o_t)]_C-J_M) (\Phi(t-o_t-1,s)]_C-J_M)$. By construction of $\mathcal E^t_{\theta}$, the spectral norm of the first factor is not greater than $1/t$, while the double-stochasticity of $\Phi(t-o_t-1,s)]_C$ yields that the spectral norm of the second factor is not greater than $1$. Using submultiplicativity of the spectral norm, the claim in part~\ref{lemma-blocks-part-2} follows:
\begin{align}
& \left\| [\Phi(t,s)]_C -J_M \right\|  \nonumber\\
& \leq \left\|  [\Phi(t,t-o_t)]_C - J_M\right\| \left\|[\Phi(t-o_t-1,s)]_C - J_M\right\|\nonumber\\
& \leq 1/t.
\end{align}

Part~\ref{lemma-blocks-part-3} can be proven by factoring out $\Phi(t,s)$ as the product $\Phi(t,\lceil \theta t\rceil) \Phi(\lceil \theta t\rceil-1,\lceil \theta t\rceil-o_t)\Phi(\lceil \theta t\rceil-o_t-1,s)$ and applying similar arguments as in the proof of part~\ref{lemma-blocks-part-2}.
\end{proof}

Fix $\theta\in [0,1]$ and consider the probability distribution $\nu^{\theta}_t: \mathcal B\left( \mathbb R^d\right)\rightarrow [0,1]$ defined by
 \begin{equation}
 \nu^{\theta}_t (D)=  \frac{\mathbb P\left( \left\{X_{i,t}\in D\right\} \cap \mathcal E^t_{\theta}\right)}
 {\mathbb P\left(\mathcal E^t_{\theta}\right)},
 \end{equation}
that is, $\nu^{\theta}_t$ is the probability distribution of $X_{i,t}$ conditioned on the event $\mathcal E^t_{\theta}$ (we note that $\mathbb P\left(\mathcal E^t_{\theta}\right)>0$ for $t$ sufficiently large, as we show later in the proof, see Lemma~\ref{lemma-probability-of-mathcalE} further ahead).

Let $\Upsilon_t$ be the (normalized) logarithmic moment generating function associated with $\nu^{\theta}_t$,
\begin{equation}
\label{eqn-Upsilon-t}
\Upsilon_t(\lambda)=\frac{1}{t}\log
\mathbb E\left[ e^{t \lambda^\top X_{i,t}} \left| \mathcal E^t_{\theta}\right. \right],\mathrm{\;\;for\;\;}\lambda\in \mathbb R^d.
\end{equation}

Using the properties of entries of $\Phi(t,s)$ for different intervals on $s$ listed in Lemma~\ref{lemma-blocks}, we establish in Lemma~\ref{lemma-sequence-Upsilon-t-has-a-limit} that the sequence of functions $\Upsilon_t$ has a point-wise limit for every $\lambda \in \mathbb R^d$. This will allow to apply the G\"{a}rtner-Ellis theorem~\cite{DemboZeitouni93} to compute the large deviations rate function for the sequence of measures $\nu^{\theta}_t$. We first state and prove Lemma~\ref{lemma-sequence-Upsilon-t-has-a-limit}.

\begin{lemma}
\label{lemma-sequence-Upsilon-t-has-a-limit}
For any $\lambda\in \mathbb R^d$ and any $\theta \in [0,1]$:
\begin{equation}
\lim_{t\rightarrow +\infty} \Upsilon_t(\lambda)= (1-\theta) M\Lambda\left(\frac{1}{M} \lambda\right) +
\theta N \Lambda\left(\frac{1}{N} \lambda\right),
\end{equation}
where, we recall, $M=|C|$.
\end{lemma}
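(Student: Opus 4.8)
The plan is to compute the conditional moment generating function $\Upsilon_t(\lambda)$ explicitly using the representation~\eqref{alg-compact} of $X_{i,t}$ together with the sharp control on the entries of $\Phi(t,s)$ provided by Lemma~\ref{lemma-blocks}, and then pass to the limit. First I would split the sum in~\eqref{alg-compact} into three time segments: the ``late'' block $s\in\{t-o_t+1,\dots,t\}$, the ``middle'' block $s\in\{\lceil\theta t\rceil,\dots,t-o_t\}$, and the ``early'' block $s\in\{1,\dots,\lceil\theta t\rceil-o_t-1\}$, plus a negligible transition block of $O(o_t)=O(\log t)$ terms around $\lceil\theta t\rceil-o_t$. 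On the event $\mathcal E^t_\theta$, Lemma~\ref{lemma-blocks}\ref{lemma-blocks-part-2} tells us that for $s$ in the middle block the weights $[\Phi(t,s)]_{ij}$ are within $1/t$ of $1/M$ for $j\in C$ and are exactly $0$ for $j\notin C$ (part~\ref{lemma-blocks-part-1}), while part~\ref{lemma-blocks-part-3} gives that for $s$ in the early block the weights are within $1/t$ of $1/N$ for all $j\in V$. The late block contributes only $O(o_t/t)=O(\log t/t)$ to $\frac1t\sum_s\sum_j[\Phi(t,s)]_{ij}Z_{j,s}$ in the exponent (after scaling by $t\lambda^\top$), since each such term is $\lambda^\top(\text{bounded weights})Z_{j,s}/t$ and there are only $o_t$ of them; one must argue this using finiteness of $\Lambda$ (Assumption~\ref{ass-finite-at-all-points}) to ensure the corresponding moment-generating factors are $e^{o(t)}$.

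The key point is that, conditioned on $\mathcal E^t_\theta$, the observations $Z_{j,s}$ remain independent with their original law: the event $\mathcal E^t_\theta$ is measurable with respect to the weight matrices $\{W_s\}$ only, and by Assumption~\ref{ass-network-and-observation-model}, part~4, the $W_s$'s are independent of all $Z_{j,s}$. Hence $\Expec\big[e^{t\lambda^\top X_{i,t}}\,\big|\,\mathcal E^t_\theta\big]$ factorizes over $(j,s)$, and on $\mathcal E^t_\theta$ we can bound it above and below by products of terms $\Expec\big[e^{\lambda^\top c_{j,s}Z_{j,s}}\big]=e^{\Lambda(c_{j,s}\lambda)}$ where $c_{j,s}\in[1/M-1/t,\,1/M+1/t]$ for $(j,s)$ in the middle block with $j\in C$, $c_{j,s}=0$ for $j\notin C$ in the middle/late blocks, and $c_{j,s}\in[1/N-1/t,\,1/N+1/t]$ in the early block. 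Therefore, up to additive $o(1)$ errors coming from the $1/t$ perturbations (using continuity of $\Lambda$) and the $O(\log t/t)$ contribution of the late/transition blocks,
\begin{equation}
\Upsilon_t(\lambda)=\frac{|{\mathcal T_\theta}|-o_t}{t}\,M\,\Lambda\!\left(\tfrac1M\lambda\right)+\frac{\lceil\theta t\rceil-o_t-1}{t}\,N\,\Lambda\!\left(\tfrac1N\lambda\right)+o(1).
\end{equation}
Since $|\mathcal T_\theta|/t\to 1-\theta$, $o_t/t\to0$, and $\lceil\theta t\rceil/t\to\theta$, taking $t\to\infty$ yields exactly $(1-\theta)M\Lambda(\lambda/M)+\theta N\Lambda(\lambda/N)$, which is the claim.

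I expect the main obstacle to be making the error-control rigorous rather than the structural computation, which is essentially forced by Lemma~\ref{lemma-blocks}. Specifically, one must show that (i) replacing each coefficient $c_{j,s}$ by its nominal value $1/M$ or $1/N$ incurs only $o(1)$ error in $\Upsilon_t$ — this follows from uniform continuity of $\Lambda$ on a compact neighborhood of $\{\lambda/M,\lambda/N\}$ together with the fact that there are $O(t)$ terms each perturbed by $O(1/t)$, so the total exponent perturbation is $O(1/t)\cdot O(t)\cdot(\text{Lipschitz const})=O(1)$ — wait, this needs more care: each term perturbs $\Lambda$ by $O(1/t)$ in its argument hence by $O(1/t)$ in value, and summing $O(t)$ of them and dividing by $t$ gives $O(1/t)\to 0$, so it is fine; and (ii) the $o_t=O(\log t)$ terms in the late and transition blocks contribute $o(t)$ to $\log\Expec[e^{t\lambda^\top X_{i,t}}\,|\,\mathcal E^t_\theta]$, which holds because each contributes a bounded multiple of $\Lambda$ evaluated at a bounded argument (the entries of $\Phi$ are always in $[0,1]$ by stochasticity), so their total is $O(\log t)=o(t)$. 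A secondary subtlety is that $\mathbb P(\mathcal E^t_\theta)>0$ for large $t$ — this is deferred to Lemma~\ref{lemma-probability-of-mathcalE} as noted in the text, so I may invoke it. Once these estimates are in place, the limit follows by direct substitution.
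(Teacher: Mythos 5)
Your plan matches the paper's proof essentially exactly: condition on the $W$'s to factorize the inner expectation $\mathbb E\left[e^{t\lambda^\top X_{i,t}}\mid W_1,\dots,W_t\right]$ into $\exp\bigl\{\sum_s\sum_j\Lambda\bigl([\Phi(t,s)]_{ij}\lambda\bigr)\bigr\}$, bound the exponent block-by-block on $\mathcal E^t_\theta$ via Lemma~\ref{lemma-blocks} (with the $O(\log t)$ transition terms contributing $o(t)$), and pass to the limit; the only cosmetic difference is that the paper records explicit sandwich bounds $\underline\Upsilon_t\le\Upsilon_t\le\overline\Upsilon_t$ using the auxiliary quantities $\underline\chi_t,\overline\chi_t,\underline\zeta_t,\overline\zeta_t$ and invokes Lemma~\ref{simple-lemma} for the transition blocks, whereas you argue by $o(1)$ error control. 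One wording point worth fixing in a written version: $\mathbb E\left[e^{t\lambda^\top X_{i,t}}\mid\mathcal E^t_\theta\right]$ does not itself factorize over $(j,s)$, since the coefficients $[\Phi(t,s)]_{ij}$ remain random under that conditioning; what factorizes is the conditional expectation given the full $W$-sequence, which is then bounded uniformly over $\omega\in\mathcal E^t_\theta$ and integrated, exactly as your subsequent bounding step implicitly does.
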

\begin{proof}
Fix $\theta \in [0,1]$, $\lambda\in \mathbb R^d$. We have:
\begin{align}
\label{eqn-conditioning-on-matrices-in-mathcalE}
\mathbb E\left[ e^{t \lambda^\top X_{i,t}}\left| \mathcal E^t_{\theta}\right. \right]
&= \frac{1}{\mathbb P\left(  \mathcal E^t_{\theta} \right) }
\mathbb E\left[ 1_{\mathcal E^t_{\theta}}  e^{t \lambda^\top X_{i,t}} \right] \nonumber\\
&= \frac{1}{\mathbb P\left(  \mathcal E^t_{\theta} \right) }
\mathbb E\left[\mathbb E\left[ 1_{\mathcal E^t_{\theta}}  e^{t \lambda^\top X_{i,t}} | W_1,...,W_t\right]\right]\nonumber\\
&=\frac{1}{\mathbb P\left(  \mathcal E^t_{\theta} \right) }
\mathbb E\left[  1_{\mathcal E^t_{\theta}}\mathbb E\left[e^{t \lambda^\top X_{i,t}} | W_1,...,W_t\right]\right],
\end{align}
where in the last equality we used that the indicator $1_{\mathcal E^t_{\theta}}$ is a function of $W_1,...,W_t$.
Further, as the summands in~\eqref{alg-compact} are independent given $W_1,...,W_t$, we obtain
\begin{equation}
\label{eqn-condition-on-fixed-A1-Ak}
\mathbb E\left[ e^{t \lambda^\top X_{i,t}} |W_1,...,W_t \right]
=e^{ \sum_{s=1}^t \sum_{j=1}^N   \Lambda\left( [\Phi(t,s)]_{ij} \lambda\right) }.
\end{equation}
Consider now a fixed $\omega \in \mathcal E^t_{\theta}$. We split the sum in the exponent of~\eqref{eqn-condition-on-fixed-A1-Ak} according to the intervals used in the construction of $\mathcal E^t_{\theta}$. With this in mind, we define also
\begin{align}
\overline \chi_t & := \max_{\alpha \in [1/M-1/t, 1/M+1/t]} \Lambda\left(\alpha \lambda \right),
\label{eq-def-Chi-t-upper} \\
\underline \chi_t & := \min_{\alpha \in [1/M-1/t, 1/M+1/t]} \Lambda\left(\alpha \lambda \right),
\label{eq-def-Chi-t-lower}
\end{align}
and
\begin{align}
\overline \zeta_t &:= \max_{\alpha \in [1/N-1/t, 1/N+1/t]} \Lambda\left(\alpha \lambda \right),
\label{eq-def-Zeta-t-upper} \\
\underline \zeta_t &:= \min_{\alpha \in [1/N-1/t, 1/N+1/t]} \Lambda\left(\alpha \lambda \right),
\label{eq-def-Zeta-t-lower}
\end{align}
for $\lambda \in \mathbb R^d$. We remark that, by the continuity of $\Lambda$ and compactness of the intervals, in each of the preceding optimization problems there exists a maximizer. Further, as $t\rightarrow +\infty$, the corresponding intervals shrink to a single point, and by using again continuity of $\Lambda$, we obtain that $\overline \chi_t,\underline \chi_t\rightarrow \Lambda\left(1/M \lambda \right)$, and $\overline \zeta_t,\underline \zeta_t\rightarrow \Lambda\left(1/N \lambda \right)$, as $t\rightarrow +\infty$.  Then, by part~\ref{lemma-blocks-part-1} of Lemma~\ref{lemma-blocks} and the fact that $\Lambda(0)=0$, we have \begin{equation*}
\sum_{j\notin C} \Lambda\left( [\Phi(t,s)]_{ij} \lambda \right)=0, \mathrm{\;\;for\; each\;}  s\in \mathcal T_{\theta}.
\end{equation*}
Further, by part~\ref{lemma-blocks-part-2} of Lemma~\ref{lemma-blocks}
\begin{equation*}
M \underline \chi_t \leq  \sum_{j\in C} \Lambda\left( [\Phi(t,s)]_{ij} \lambda \right)
\leq   M \overline \chi_t,  \mathrm{\;\;for\;}  t-o_t\geq s\geq \lceil\theta t\rceil,
\end{equation*}
and, similarly, by part~\ref{lemma-blocks-part-3} of Lemma~\ref{lemma-blocks}
\begin{equation*}
N \underline \zeta_t \leq  \sum_{j=1}^N \Lambda\left( [\Phi(t,s)]_{ij} \lambda \right)
\leq   N \overline \zeta_t,  \mathrm{\;\;for\;}  \lceil\theta t\rceil-o_t\geq s\geq 1.
\end{equation*}
As for the summands in the intervals $\left\{t,...,t-o_t\right\}$ and $\left\{\lceil \theta t\rceil,...,\lceil \theta t\rceil-o_t\right\}$, we apply
Lemma~\ref{simple-lemma} to get
\begin{align*}
M \Lambda \left(\frac{1}{M} \lambda\right) \leq  \sum_{j\in C} \Lambda\left( [\Phi(t,s)]_{ij} \lambda \right)
& \leq   \Lambda(\lambda),\\
               & \mathrm{\;\;for\;} t\geq s\geq t-o_t,
\end{align*}
and
\begin{align*}
N \Lambda \left(1/N \lambda\right) \leq  \sum_{j=1}^N \Lambda\left( [\Phi(t,s)]_{ij} \lambda \right)
& \leq   \Lambda(\lambda),\\
               & \mathrm{\;\;for\;} \lceil \theta t\rceil\geq s\geq \lceil \theta t\rceil-o_t.
\end{align*}
Summing out the upper and lower bounds over all $s$ in the preceding five inequalities yields:
\begin{equation}
\label{eq-Upsilon-t-bounds}
  t\, \underline \Upsilon_t\left(\lambda \right)
   \leq  \sum_{s=1}^t \sum_{i=1}^N \Lambda\left([\Phi(t,s)]_{i,j}\right) \leq
  t\, \overline \Upsilon_t\left(\lambda \right),
    \end{equation}
where
\begin{align*}
\underline \Upsilon_t\left(\lambda \right) & = \frac{\lceil \theta t \rceil - o_t}{t} N \underline \zeta_t
 + \frac{o_t}{t} \left( N \Lambda\left( \frac{1}{N}\lambda \right)+ M  \Lambda\left( \frac{1}{M} \lambda \right)\right) \\
& + \frac{t- \lceil \theta t \rceil -  o_t}{t}  M \underline \chi_t,
\end{align*}
and
\begin{align*}
\overline \Upsilon_t(\lambda)& = \frac{\lceil \theta t \rceil - o_t}{t} N \overline \zeta_t
+ \frac{o_t}{t}\left( N \Lambda\left( \frac{1}{N}\lambda \right)+ M  \Lambda\left( \frac{1}{M}\lambda \right)\right)\\
& + \frac{t- \lceil \theta t \rceil -o_t}{t} M \overline \chi_t.
\end{align*}

The inequalities in~\eqref{eq-Upsilon-t-bounds} hold for any fixed $\omega \in \mathcal E^t_{\theta}$. Thus,
\begin{equation}
1_{\mathcal E^t_{\theta} } e^{ t \underline \Upsilon_t\left(\lambda \right) }
\leq  1_{\mathcal E^t_{\theta}}\mathbb E\left[e^{t \lambda^\top X_{i,t}} | W_1,...,W_t\right] \leq
1_{\mathcal E^t_{\theta}} e^{ t \overline \Upsilon_t\left(\lambda \right)}.
\end{equation}
Finally, by monotonicity of the expectation:
\begin{align*}
\mathbb P\left( \mathcal E^t_{\theta}\right)e^{ t \underline \Upsilon_t\left(\lambda \right) }
& \leq \mathbb E\left[  1_{\mathcal E^t_{\theta}}\mathbb E\left[e^{t \lambda^\top X_{i,t}} | W_1,...,W_t\right]\right]\\
  & \leq  \mathbb P\left( \mathcal E^t_{\theta}\right) e^{t  \overline \Upsilon_t\left(\lambda \right)},
\end{align*}
which combined with~\eqref{eqn-conditioning-on-matrices-in-mathcalE} implies
\begin{equation}
e^{ t\underline \Upsilon_t\left(\lambda \right)} \leq
 \mathbb E\left[ e^{t \lambda^\top X_{i,t}}| \mathcal E^t_{\theta} \right]
\leq e^{ t\overline \Upsilon_t\left(\lambda \right)}.
\end{equation}
Now, taking the logarithm and dividing by $t$,
\begin{equation*}
\underline \Upsilon_t(\lambda) \leq  \Upsilon_t\left(\lambda\right)\leq \overline \Upsilon_t(\lambda),
\end{equation*}
and noting that
\begin{align*}
\lim_{t\rightarrow +\infty} \overline \Upsilon_t(\lambda)
& =\lim_{t\rightarrow +\infty}\underline \Upsilon_t(\lambda)\\
&= (1-\theta ) M \Lambda\left(\frac{1}{M} \lambda\right)
+ \theta N \Lambda\left(\frac{1}{N} \lambda\right),
\end{align*}
the claim of Lemma~\ref{lemma-sequence-Upsilon-t-has-a-limit} follows.
\end{proof}

By the G\"{a}rtner-Ellis theorem it follows then that the sequence of measures $\nu^{\theta}_t$ satisfies the large deviations principle\footnote{We use here the variant of the G\"{a}rtner-Ellis theorem which claims the (full) LDP for the case when the domain of the limiting function is the whole space $\mathbb R^d$, as given in~\cite{DemboZeitouni93}; see also Exercise~{2.3.20} in~\cite{DemboZeitouni93} for the statement and the sketch of the proof of this result.}, with the rate function equal to the conjugate of
\begin{equation}\label{eq-def-f-theta}
f_{\theta}(\lambda):= (1-\theta) M \Lambda\left(\frac{1}{M}\lambda\right)
+\theta N \Lambda\left( \frac{1}{N}\lambda\right),
\end{equation}
for $\lambda\in \mathbb R^d$. Therefore, for every open set $E\subseteq \mathbb R^d$, there holds
\begin{equation}
\label{eqn-LDP-for-mu-t}
\liminf_{t\rightarrow +\infty}\frac{1}{t}\log \mathbb P\left(X_{i,t}\in E | \mathcal E^t_{\theta}\right) \geq
-\inf_{x\in E} \left\{\sup_{\lambda \in \mathbb R^d} \lambda^\top x - f_{\theta}(\lambda)\right\}.
\end{equation}
We next turn to computing the probability of the event $\mathcal E^t_{\theta}$.
\begin{lemma}
\label{lemma-probability-of-mathcalE}
For any $\theta \in [0,1]$, for all $t$ sufficiently large:
\begin{equation}
\label{eqn-probability-of-mathcalE}
\frac{1}{4}  p_{\mathcal H}^{ t-\lceil \theta t \rceil} \leq
\mathbb P\left( \mathcal E^t_{\theta} \right)\leq
 p_{\mathcal H}^{ t-\lceil \theta t \rceil}.
\end{equation}
\end{lemma}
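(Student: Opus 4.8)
The plan is to prove the two inequalities separately. The upper bound is the easy one: $\mathcal E^t_\theta$ is contained in the event that $G_s\in\mathcal H$ for every $s$ in its window, and since the topologies are i.i.d.\ with $\mathbb P(G_s\in\mathcal H)=p_{\mathcal H}$, this larger event has probability $p_{\mathcal H}^{\,t-\lceil\theta t\rceil}$, so $\mathbb P(\mathcal E^t_\theta)\le p_{\mathcal H}^{\,t-\lceil\theta t\rceil}$. For the lower bound I may assume $p_{\mathcal H}>0$ (otherwise both sides vanish) and $\theta\in(0,1)$ (the two endpoints require only minor, routine modifications — for $\theta=1$ the ``component consensus'' phase degenerates and one falls back on the unconditional bound of Lemma~\ref{lemma-fundamental-bounds}, for $\theta=0$ the ``initial consensus'' factor is simply absent). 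Write $\mathcal E^t_\theta=A_1\cap A_2\cap A_3$, where $A_1$ is the graph-membership event above, $A_2=\{\|[\Phi(t,t-o_t)]_C-J_M\|\le 1/t\}$, and $A_3=\{\|\Phi(\lceil\theta t\rceil,\lceil\theta t\rceil-o_t)-J_N\|\le 1/t\}$. For $t$ large one has $t-o_t>\lceil\theta t\rceil$, hence $A_1$ and $A_2$ are measurable with respect to $\mathcal F:=\sigma(W_s:s\ge\lceil\theta t\rceil)$, while $A_3$ touches $\mathcal F$ only through the single matrix $W_{\lceil\theta t\rceil}$.

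I would then show that, for $t$ large, $\mathbb P(A_3\mid\mathcal F)\ge\tfrac12$ and $\mathbb P(A_2\mid A_1)\ge\tfrac12$, which gives $\mathbb P(\mathcal E^t_\theta)=\mathbb E[1_{A_1}1_{A_2}\,\mathbb P(A_3\mid\mathcal F)]\ge\tfrac12\,\mathbb P(A_1\cap A_2)=\tfrac12\,\mathbb P(A_1)\,\mathbb P(A_2\mid A_1)\ge\tfrac14\,p_{\mathcal H}^{\,t-\lceil\theta t\rceil}$. For $A_3$: since $A_3$ depends on $\mathcal F$ only via $W_{\lceil\theta t\rceil}$, one has $\mathbb P(A_3\mid\mathcal F)=g(W_{\lceil\theta t\rceil})$ with $g(A)=\mathbb P(\|A\,\Phi(\lceil\theta t\rceil-1,\lceil\theta t\rceil-o_t)-J_N\|\le 1/t)$; every $W_s$ being symmetric stochastic is doubly stochastic with $\|W_s\|\le1$ and $W_sJ_N=J_N$, so $\|A(X-J_N)\|\le\|X-J_N\|$ for doubly stochastic $A,X$, whence $g(A)\ge 1-q_t$ uniformly in $A$, where $q_t:=\mathbb P(\|\Phi(\lceil\theta t\rceil-1,\lceil\theta t\rceil-o_t)-J_N\|>1/t)$. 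As $\Phi(\lceil\theta t\rceil-1,\lceil\theta t\rceil-o_t)$ is a product of $o_t$ consecutive i.i.d.\ matrices it is distributed as $\Phi(t,t-o_t+1)$, so \eqref{eq-rate-of-consensus-epsilon} gives $q_t\le K_\epsilon e^{-(o_t-1)(\mathcal J-\epsilon)}\to0$ for $0<\epsilon<\mathcal J$, since $o_t=\lceil\log t\rceil\to\infty$. For $A_2$: conditioning on $A_1$ conditions each $W_s$, $s\in[\lceil\theta t\rceil,t]$, independently on $\{G_s\in\mathcal H\}$, and on $A_1$ each such matrix is block diagonal for the partition $(C,C^{\mathrm c})$, because $C$ is a component of the union graph $\Gamma_{\mathcal H}$ (Definition~\ref{def-union-component}), so no graph in $\mathcal H$ carries an edge leaving $C$; hence $[\Phi(t,t-o_t)]_C=[W_t]_C\cdots[W_{t-o_t}]_C$ is a product of i.i.d.\ copies of $[W_1]_C$ under $\mathbb P(\cdot\mid G_1\in\mathcal H)$, a family of symmetric stochastic $M\times M$ matrices with positive diagonals whose mean graph on $C$ is still connected — each edge of $\Gamma_{\mathcal H}$ inside $C$ lies in some $H\in\mathcal H$ with $p_H>0$ and thus retains positive conditional probability. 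Applying the finite-time rate-of-consensus estimate of~\cite{Rate-of-consensus13} (Theorem~\ref{theorem-compute-rate-of-consensus} and~\eqref{eq-rate-of-consensus-epsilon}) to this $M$-node conditional process gives $\mathbb P(A_2^{\mathrm c}\mid A_1)\le\widetilde K_\epsilon e^{-o_t(\widetilde{\mathcal J}-\epsilon)}\to0$, where $\widetilde{\mathcal J}>0$ is the conditional rate of consensus; this completes the argument.

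The step I expect to be the main obstacle is the one for $A_2$: one must verify carefully that restricting the weight matrices to the union-graph component $C$ and conditioning on $\{G_s\in\mathcal H\}$ again produces an i.i.d.\ sequence of symmetric stochastic matrices with a \emph{connected} mean, so that the uniform-in-$t$ estimate~\eqref{eq-rate-of-consensus-epsilon} can be invoked on $C$ with a strictly positive rate $\widetilde{\mathcal J}$ — this is precisely where it is used that $C$ is a \emph{component} of $\Gamma_{\mathcal H}$, which makes the conditioned mean matrix on $C$ irreducible, and that the constant in~\eqref{eq-rate-of-consensus-epsilon} does not depend on $t$. A secondary technical point is that the matrix window of $A_3$ does not start at index $t$, so~\eqref{eq-rate-of-consensus-epsilon} is not directly applicable there; I get around this by re-indexing through the i.i.d.\ property and by the contraction bound $\|W_{\lceil\theta t\rceil}(X-J_N)\|\le\|X-J_N\|$, which simultaneously peels off the only matrix shared by $A_3$ and $\mathcal F$ so that the two conditional estimates multiply cleanly. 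Finally, the logarithmic growth $o_t=\lceil\log t\rceil$ is exactly the right choice: it makes the consensus error terms $e^{-o_t(\mathcal J-\epsilon)}$ vanish, so the conditional probabilities exceed $\tfrac12$, while staying negligible next to the dominant factor $p_{\mathcal H}^{\,t-\lceil\theta t\rceil}$.
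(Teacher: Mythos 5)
Your proof is correct and follows essentially the same strategy as the paper: factor the probability into the graph-membership event, a ``component consensus'' piece over the final $o_t$ indices, and an ``initial consensus'' piece over the $o_t$ indices straddling $\lceil\theta t\rceil$, then show the latter two have conditional probability at least $\tfrac12$ via the finite-time consensus estimate~\eqref{eq-rate-of-consensus-epsilon}, while the first contributes the dominant factor $p_{\mathcal H}^{\,t-\lceil\theta t\rceil}$.

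Where you diverge from the paper is in being \emph{more} careful, and both refinements are warranted. First, the paper invokes the disjoint-blocks theorem for the index sets $\mathcal T_\theta=\{\lceil\theta t\rceil,\dots,t\}$ and its complement, but as defined the factorization in eq.~\eqref{eq-two-factors} has both factors depending on $W_{\lceil\theta t\rceil}$ (since $\Phi(\lceil\theta t\rceil,\lceil\theta t\rceil-o_t)$ includes $W_{\lceil\theta t\rceil}\in\mathcal T_\theta$), so the blocks overlap and the stated equality does not follow verbatim. Your conditioning on $\mathcal F=\sigma(W_s:s\ge\lceil\theta t\rceil)$ together with the contraction $\|A(X-J_N)\|\le\|X-J_N\|$ for doubly stochastic $A$ cleanly peels off the shared matrix and repairs this. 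Second, for the component-consensus event the paper says only ``similarly as in~\eqref{eq-second-term}''; you correctly observe that the relevant rate is not $\mathcal J$ but a \emph{conditional} rate $\widetilde{\mathcal J}$ for the process $[W_s]_C$ given $G_s\in\mathcal H$, and you supply the argument needed to invoke~\cite{Rate-of-consensus13} on this restricted process — that the conditioned matrices remain i.i.d., symmetric, stochastic, with positive diagonals, and that the conditioned mean graph on $C$ is connected precisely because $C$ is a component of $\Gamma_{\mathcal H}$ and each of its edges survives with positive conditional probability. That last verification is the genuine content of this lemma and is exactly what makes the choice $o_t=\lceil\log t\rceil$ work; the paper leaves it implicit.
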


\begin{proof}
By the disjoint blocks theorem~\cite{Karr} applied to the matrices in $\mathcal T_{\theta}$ and its complement $\left\{1,...,t\right\}\setminus \mathcal T_{\theta}$, we obtain
 \begin{align}
 \label{eq-two-factors}
& \mathbb P\left( \mathcal E^t_{\theta} \right) = \mathbb P\left( \left\|\Phi( \lceil\theta t\rceil,\lceil\theta t\rceil-  o_t) -J_N   \right\|\leq\frac{1}{t}  \right) \times \nonumber\\
& \mathbb P\left( G_s \in \mathcal H,\mathrm{\;for\;} s\in \mathcal T_{\theta},\;
   \left\|[\Phi(t,t- o_t)]_C -J_M   \right\|\leq\frac{1}{t}\right).
 \end{align}
We show using~\eqref{eq-rate-of-consensus-epsilon} that the first term in the right-hand side of the preceding equality goes to $1$ as $t\rightarrow +\infty$. Fix an arbitrary $\epsilon \in (0,1)$. Then, for all $t$ sufficiently large,
\begin{align}
\label{eq-second-term}
& \mathbb P\left( \left\|\Phi( \lceil\theta t\rceil,\lceil\theta t\rceil-  o_t) -J_N   \right\|\leq\frac{1}{t}  \right)\nonumber\\
&\geq 1- K_{\epsilon}e^{-t(\mathcal J-\epsilon)}\geq 1/2.
\end{align}
Clearly, being a probability, this term is also smaller than $1$ (for all $t$). Consider now the second factor in the right-hand side of~\eqref{eq-two-factors}. Conditioning on the event $\left\{G_s \in \mathcal H,\mathrm{\;for\;} s\in \mathcal T_{\theta}\right\}$, and using the fact that the probability of this event equals $p_{\mathcal H}^{t-\lceil\theta t\rceil}$ (note that the latter holds by the independence of weight matrices, Assumption~\ref{ass-network-and-observation-model}.\ref{ass-W-t}), we obtain
\begin{align}
\label{eq-first-term}
& \mathbb P\left( G_s \in \mathcal H,\mathrm{\;for\;} s\in \mathcal T_{\theta},\; \left\|[\Phi(t,t- o_t)]_C -J_M   \right\|\leq\frac{1}{t}\right) = \nonumber \\
& \mathbb P\left(\left\|[\Phi(t,t- c_t)]_C -J_M   \right\|\leq\frac{1}{t} | G_s \in \mathcal H,\mathrm{\;for\;} s\in \mathcal T_{\theta} \right)\nonumber p_{\mathcal H}^{t-\lceil\theta t\rceil}.
\end{align}
Similarly as in~\eqref{eq-second-term}, it can be shown that the conditional probability term in~\eqref{eq-first-term}, for all $t$ sufficiently large, greater than $1/2$. On the other hand, it is obviously smaller than $1$ for all $t$. Summarizing the preceding findings, the claim of the lemma follows.
\end{proof}

To bring the two key arguments together -- Lemma~\ref{lemma-probability-of-mathcalE} and the lower bound~\eqref{eqn-LDP-for-mu-t}, we start from the following simple bound
\begin{align}
\mathbb P\left(X_{i,t}\in E\right)& \geq \mathbb P\left( \left\{ X_{i,t}\in E\right\} \cap \mathcal E^t_{\theta}\right)\nonumber\\
&= \nu^{\theta}_t(E) \mathbb P\left(\mathcal E^t_{\theta}\right).
\end{align}
From superadditivity of the $\liminf$, followed by an application of~\eqref{eqn-LDP-for-mu-t} and~\eqref{eqn-probability-of-mathcalE}, we obtain
\begin{align*}
& \liminf_{t\rightarrow +\infty} \frac{1}{t}\log \mathbb P\left(X_{i,t}\in E\right)\\
& \geq  \liminf_{t\rightarrow +\infty} \frac{1}{t} \log  \nu^{\theta}_t(E) + \lim_{t\rightarrow +\infty} \frac{1}{t}\log \mathbb P\left( \mathcal E^t_{\theta}\right)\\
& \geq -\inf_{x\in E} \left\{\sup_{\lambda \in \mathbb R^d} \lambda^\top x - f_{\theta}(\lambda)\right\}\,-\, (1-\theta)|\log p_{\mathcal H}|.
\end{align*}
The preceding inequality holds for each $\theta$ in $[0,1]$. Optimizing over all such values yields:
\begin{align*}
& \liminf_{t\rightarrow +\infty}\frac{1}{t}\log \mathbb P\left(X_{i,t}\in E\right)\geq\\
& -\inf_{\theta \in [0,1]} \left\{ \inf_{x\in E} \sup_{\lambda \in \mathbb R^d}
\left\{\lambda^\top x - f_{\theta}(\lambda) \right\} + (1-\theta)|\log p_{\mathcal H}|\right\}\\
& = - \inf_{x\in E} \inf_{\theta \in [0,1]} \left\{\sup_{\lambda \in \mathbb R^d}
\left\{\lambda^\top x - f_{\theta}(\lambda) \right\} + (1-\theta)|\log p_{\mathcal H}|\right\}.
\end{align*}
Now, fix $x\in E$ and consider the function
\begin{align}\label{eq-def-g}
g(\theta,\lambda)& :=\lambda^\top x - \,(1-\theta) \left(M \Lambda\left( \frac{1}{M}\lambda\right) - |\log p_{\mathcal H}|\right)\nonumber\\
& - \theta N\Lambda\left( \frac{1}{N}\lambda\right).
\end{align} As an affine function of $\theta$, $g$ is convex in $\theta$. Further, by convexity of $\Lambda$, $g$ is concave in $\lambda$, for any $\theta\in [0,1]$. Finally, sets $[0,1]$ and $\mathbb R^d$ are convex and set $[0,1]$ is compact. Thus, conditions for applying the Minimax theorem~\cite{Sion-minimax} are fulfilled and we obtain:
\begin{align*}
& \inf_{\theta \in [0,1]} \sup_{\lambda \in \mathbb R^d}
\lambda^\top x - (1-\theta) \left(M \Lambda\left( 1/M\lambda\right) - |\log p_{\mathcal H}|\right)\\
& \;\;\;\;\;\;\;\;\;\;\;\;\;\;\;\;\; - \theta N\Lambda\left( 1/N\lambda\right) =\\
&\sup_{\lambda \in \mathbb R^d} \inf_{\theta \in [0,1]}
\lambda^\top x - (1-\theta) \left(M \Lambda\left( 1/M\lambda\right) - |\log p_{\mathcal H}|\right)\\
& \;\;\;\;\;\;\;\;\;\;\;\;\;\;\;\;\; - \theta N\Lambda\left( 1/N\lambda\right)\\
& =  \sup_{\lambda \in \mathbb R^d} \lambda^\top x -
\max\left\{ M \Lambda\left( 1/M\lambda\right)-|\log p_{\mathcal H}|, \Lambda\left( 1/N\lambda\right) \right\}.
\end{align*}
Similarly as in the proof of the upper bound, using the conjugacy rules from Lemma~\ref{lemma-conjugate-calculus},
\begin{align*}
& \sup_{\lambda \in \mathbb R^d} \lambda^\top x -
\min\left\{ M \Lambda\left( \frac{1}{M}\lambda\right)-|\log p_{\mathcal H}|, \Lambda\left( \frac{1}{N}\lambda\right) \right\} \\
& = \co \inf\left( N I, M I+ |\log p_{\mathcal H}|\right)(x),
\end{align*}
which finally yields,
\begin{align*}
& \liminf_{t\rightarrow +\infty}\frac{1}{t}\log \mathbb P\left(X_{i,t}\in E\right)\\
& \geq - \inf_{x\in E} \co \inf\left\{ N I, M I+ |\log p_{\mathcal H}|\right\}(x).
\end{align*}
This completes the proof of the lower bound and the proof of Theorem~\ref{theorem-nice-tight-bounds}.
%
%

\section{Conclusion}
\label{sec-Concl}
We studied large deviations inaccuracy rates for consensus+innovations based distributed inference for generic random networks. We assume vector measurements with possibly non-i.i.d. entries. Our goal was to find bounds or exact rate function for each node in the network, accounting for the specificities of the node's interactions. For each node, we found a node-specific family of lower bounds, induced by the family of network subgraphs in which the node participates. Specifically, each bound in the family is given as the convex envelope of the centralized rate function and the effective rate function corresponding to a given subgraph, and lifted by the probability that this subgraph remains isolated from the remainder of the network. The upper bound is defined as the convex envelope of the centralized rate function and the rate function corresponding to an isolated node, lifted by the rate of consensus. We show that, for certain cases such as pendant nodes and $d$-cyclic graphs, the two bounds match, hence proving the large deviations principle for these classes of random networks. We illustrate the results with an application to social learning, providing also the first proof of the large deviations principle for social learning beliefs with random network models.   

\begin{appendices}
%

\section{Proof of~\eqref{eq-rate-function-bounds}}
\label{app:rate-function-bounds}
Fix $i\in V$ and suppose that the inequalities in~\eqref{eq-rate-UB} and~\eqref{eq-rate-LB} hold for any set $D$. Suppose also that the sequence of node $i$'s states, $X_{i,t}$, satisfies the LDP with rate function $I_i$.

We prove~\eqref{eq-rate-function-bounds} by contradiction. Consider first the right hand side of~\eqref{eq-rate-function-bounds} and suppose, for the sake of contradiction, that there exists a point $x_0$ such that $I_i(x_0)> \overline I_i(x_0)$. Let $\epsilon=I_i(x_0)- \overline I_i(x_0)$ and introduce $S=\left\{x\in \mathbb R^d: I_i(x)> \overline I_i(x_0) + \epsilon/2\right\}$. By the lower semi-continuity of $I_i$, $S$ is open. Also, $x_0\in S$. Thus, for $\delta>0$ sufficiently small, the closed ball $\overline B_{x_0}(\delta)$ entirely belongs to $S$. Combining the LDP upper bound~\eqref{eqn-LDP-UB} for $D= \overline B_{x_0}(\delta)$, with the bound~\eqref{eq-rate-UB} for $D= B_{x_0}(\delta)$, we obtain:
\begin{align}
\label{eq-contradiction}
& -\inf_{x\in B_{x_0}(\delta)} \overline I_i(x)\leq \liminf_{t\rightarrow +\infty} \,\frac{1}{t}\log \mathbb P\left(X_{i,t}\in B_{x_0}(\delta)\right) \\
& \leq \limsup_{t\rightarrow +\infty} \,\frac{1}{t}\log \mathbb P\left(X_{i,t}\in \overline B_{x_0}(\delta)\right)\leq -\inf_{x\in \overline B_{x_0}(\delta)} I_i(x).
\end{align}
Since $\inf_{x\in B_{x_0}(\delta)} \overline I_i(x) \leq \overline I_i(x_0)$, we have that the left hand side in~\eqref{eq-contradiction} is greater than $- \overline I_i(x_0)$. On the other hand, for any $x\in \overline B_{x_0}(\delta)$, $I_i(x)>\overline I_i(x_0) + \epsilon/2$, implying $\inf_{x\in \overline B_{x_0}(\delta)} I_i(x) \geq \overline I_i(x_0) + \epsilon/2$. This finally yields contradiction since the right hand side in~\eqref{eq-contradiction} cannot be smaller than $-\overline I_i(x_0)$.

\section{Proof of Lemma~\ref{lemma-Gaussian-design}}
\label{app:compute-I-star}
We start by noting that $\epi \inf \left\{NI, I +\mathcal J\right\}= S_1 \cup S_2$, where $S_1$ and $S_2$ are the epigraphs of $NI$ and $I+\mathcal J$, $S_1= \epi (NI)$ and $S_2= \epi (I+\mathcal J)$. To prove Lemma~\ref{lemma-Gaussian-design}, we need to show that $\epi F = \co \left\{ S_1\cup S_2 \right\}$, where $F$ is the function defined in the right hand side of eq.~\eqref{eqn-characterization-of-I-star}.
To do this it suffices to show that: 1) $\epi F$ is a convex set, and 2) $\epi F \subseteq \mathrm{co} \left(S_1\cup S_2\right)$. We first prove 1). It suffices to show that $F$ is convex, which we do using generalized second order characterizations of convex functions, e.g.~\cite{Cvx-generalized-conds}. Note that $F$ is continuous and that $\mathcal D_F=\mathbb R^d$. For each $x$ and $d$, let $F^\prime_{+}(x,d)$ and $F^{\prime\prime}_{+}(x,d)$ denote, respectively, the upper directional derivatives of the first and the second order at the point $x$ and in the direction $d$,
\begin{align}
F^\prime_{+}(x;d) & = \limsup_{\epsilon\downarrow 0}\frac{F(x+\epsilon d)-F(x)}{\epsilon}\\
F^{\prime \prime}_{+}(x;d) & = \limsup_{\epsilon\downarrow 0}\frac{F(x+\epsilon d)-F(x) - F^{\prime}_{+}(x;d)}{2\,\epsilon^2}.
\end{align}
We will show that $F$ is in fact differentiable. Then, by Theorem 2.1. part~(i) from~\cite{Cvx-generalized-conds}, proving convexity of $F$ would reduce to proving that $F^{\prime \prime}_{+}(x;d)\geq 0$ for any $x$ and $d$. Note that $I$ and $H$ are differentiable, with their respective gradients given by $\nabla I(x)= S^{-1}(x-m)$ and $\nabla H(x)= S^{-1}(x-m)/\sqrt{(x-m)^\top S^{-1}{x-m}}$. Thus, $F$ is differentiable in each of the three open sets (note that $I$ is continuous and differentiable): $\left\{x: I(x)< c_1\right\}$, $\left\{x: c_1<I(x)< N c_1\right\}$, and $\left\{x: N I(x)> c_1\right\}$. It remains to show that $F$ is differentiable for those $x$ such that $ I(x)=c_1$ and $I(x)=Nc_1$. Fix first $x$ such that $I(x) = c_1$.
It is easy to see that, for any $d$ such that $d^\top S^{-1}(x-m)\geq 0$, $I(x+\epsilon d) > I(x)$ for all $\epsilon>0$. Also, for any $d$ such that $d^\top S^{-1}(x-m)< 0$, $I(x+\epsilon d) < I(x)$ for all sufficiently small $\epsilon>0$. Thus, if $d^\top S^{-1}(x-m)\geq 0$, $F(x+\epsilon d)= N \sqrt{2 c_1}H(x+\epsilon d)- N c_1$, for all $\epsilon$ sufficiently small, and hence $F^\prime_{+}(x;d)= N \sqrt{2 c_1} d^\top \nabla H(x)$. Using now the fact that $I(x)=c_1$, we obtain that $F^\prime_{+}(x;d) = N d^\top S^{-1}(x-m)$. Consider now the case when $d$ is such that $d^\top S^{-1}(x-m)\leq 0$. Then, by the discussion above we have that for all $\epsilon $, $F(x+\epsilon d)= N I(x+\epsilon d)$. Hence, $F^\prime_{+}(x;d)= N d^\top \nabla I(x)=N d^\top S^{-1}(x-m)$. Since for any $x$ s.t. $I(x)=c_1$ and for any $d$ we have that $F^\prime_{+}(x;d) = N d^\top \nabla I(x)$, we conclude that $F$ is differentiable at any such $x$. We can in analogous manner prove differentiability of $F$ at any $x$ s.t. $I(x)=N c_1$. Hence, we conclude that $F$ is differentiable.

We now turn to proving that $F^{\prime\prime}_{+}(x;d)\geq 0$ for any $x$ and $d$. Note that $\nabla^2 I(x)=S^{-1}\succeq 0$ and
\begin{align*}
& \nabla^2 H(x) =\\
&  N \frac{\sqrt{2 c_1}} {\sqrt{2I(x)}}\left(S^{-1} - \frac{1}{2I(x)} S^{-1} (x-m)(x-m)^\top S^{-1}\right),
\end{align*}
for any $x$. To see that $\nabla^2 H(x)\succeq 0$, it suffices to observe that it can be rewritten as $\nabla^2 H(x) = N \sqrt{2 c_1} /\sqrt{2 I(x)}S^{-1/2} (I- qq^\top/(\|q\|^2)) S^{-1/2}$, for $q=S^{-1/2}(x-m)$. Since the matrix inside the brackets is positive semidefinite, positive semidefiniteness of $\nabla^2 H(x)$ follows. Therefore, for any $x$ in the interior of the three sets in~\eqref{eqn-characterization-of-I-star}, we have that $F^{\prime \prime}_{+}(x;d)\geq 0$. Consider now the case when $x$ satisfies $I(x)=c_1$. Following the same steps as in the preceding paragraph, we obtain that for any $d$ s.t. $d^\top S^{-1}(x-m)\geq 0$, $F^{\prime \prime}_{+}(x;d)= N^2 2c_1 d^\top \nabla^2 H(x) d \geq 0$ and for $d$ s.t. $d^\top S^{-1}(x-m)\leq 0$, $F^{\prime \prime}_{+}(x;d)= N d^\top \nabla^2 I(x) d \geq 0$. To complete the proof of 1), it only remains to consider those $x$ that satisfy $I(x)=Nc_1$. Analogously to the preceding case, we get that for $d$ s.t. $d^\top S^{-1}(x-m)\geq 0$, $F^{\prime \prime}_{+}(x;d)= d^\top \nabla^2 I(x) d \geq 0$ and for $d$ s.t. $d^\top S^{-1}(x-m)\leq 0$, $F^{\prime \prime}_{+}(x;d)= n^2 2 c_1 d^\top \nabla^2 H(x) d \geq 0$. Hence, since $F$ is differentiable and $F^{\prime \prime}_{+}(x;d)\geq 0$ for any $x$ and $d$, we conclude that $F$ is convex.

To prove Lemma~\ref{lemma-Gaussian-design}, it remains to prove part 2). For each unit norm $v\in \mathbb R^d$, $\|v\|=1$, let $\phi_v:\mathbb R^d\mapsto \mathbb R^d$ denote the projection of $F$ along the direction $v$, started at point $m$: $\phi_v(\rho):=F(m+\rho v)$, $\rho\in \mathbb R$. Then, $\epi F=\cup_{v\in \mathbb R^d, \|v\|=1} \mathrm{epi} \phi_v$. For each fixed $v$, let $[S_l]_v$ denote the projection of $S_l$ along the line $m+\rho v$, $[S_l]_v= S_l\cap \left\{m+\rho v: \rho\in \mathbb R\right\}$, $l=1,2$. Note that  $[S_1]_v= \left\{ (t,m+\rho v): \,t\geq N \rho^2 v^\top S^{-1} v/2,\,\rho\in \mathbb R\right\}$, $[S_2]_v= \left\{ (t,m+\rho v): \,t\geq \rho^2 v^\top S^{-1} v/2 +\mathcal J,\,\rho\in \mathbb R\right\}$. Then, it is easy to see that, for each unit norm $v$, $\mathrm{epi}\phi_v=\mathrm{co} \left([S_1]_v\cup [S_2]_v\right)$. Finally, since $\mathrm{co} \left([S_1]_v\cup [S_2]_v\right)\subseteq \mathrm{co} \left(S_1\cup S_2\right)$, the claim in 2) follows. This completes the proof of Lemma~\ref{lemma-Gaussian-design}.

\section{Proof of Lemma~\ref{theorem-LD-rates-for-beliefs-in-SL}}
\label{app:lemma-rates-for-SL-beliefs}
Fix an arbitrary node $i\in V$. For each $m=1,...,M-1$, $X_{i,t}^m = \frac{1}{t} \log \frac{b_{i,t}^m}{b_{i,t}^M}$, hence  
\begin{equation}
\label{eq-X-b-m-relation}
\frac{1}{t}\log b_{i,t}^m = X_{i,t}^m + \frac{1}{t}\log b_{i,t}^M.     
\end{equation}
Further, the (private) beliefs by construction sum up to one: $\sum_{m=1}^M b_{i,t}^m = 1$. Dividing both sides by $b_{i,t}^M$ and exploiting the functional relation between $b_{i,t}^m$ and $X_{i,t}^m$, we obtain
\begin{equation}
\label{eq-X-b-relation}
\sum_{m=1}^{M-1}e^{t X_{i,t}^m}+1 = \frac{1}{b_{i,t}^M}.     
\end{equation}
It follows that:
\begin{equation}
\label{eq-X-b-relation-2}
\frac{1}{M} e^{-t \max_{m=1,...,M} X_{i,t}^m} \leq b_{i,t}^M \leq e^{-t \max_{m=1,...,M} X_{i,t}^m},     
\end{equation}
where $X_{i,t}^M\equiv 0$. From~\eqref{eq-X-b-m-relation} and~\eqref{eq-X-b-relation-2} we obtain
\begin{equation}
\label{eq-g-m-relation}
g_m(X_{i,t}) -\frac{1}{t}\log M \leq \frac{1}{t}\log b_{i,t}^m \leq g_m(X_{i,t}).    
\end{equation}
Consider now an arbitrary one-sided closed interval $F$ on $\mathbb R$. Suppose that $F=[a,+\infty)$ (other intervals in $\mathbb R$ can be treated analogously). Fix $\epsilon>0$. From~\eqref{eq-g-m-relation}, for all $t\geq t_0 = \log M/\epsilon$ there holds:
\begin{equation}
\label{eq-g-m-relation-epsilon}
g_m(X_{i,t}) -\epsilon \leq \frac{1}{t}\log b_{i,t}^m \leq g_m(X_{i,t}),    
\end{equation}
and thus, for all $t\geq t_0$
\begin{equation}
\label{eq-g-m-relation-epsilon-prob}
\mathbb P(\frac{1}{t}\log b_{i,t}^m \geq a+\epsilon) \leq \mathbb P(g_m(X_{i,t}) \geq a) = P(X_{i,t} \in g_m^{-1}([a, +\infty)).     
\end{equation}
Taking the $\limsup$ over $t\rightarrow +\infty$,
\begin{equation}
\label{eq-g-m-relation-epsilon-prob-2}
\limsup_{t\rightarrow+\infty} \frac{1}{t}\log \mathbb P(\frac{1}{t}\log b_{i,t}^m \geq a+\epsilon) \leq \limsup_{t\rightarrow+\infty} \frac{1}{t}\log P(X_{i,t} \in g_m^{-1}([a, +\infty)).     
\end{equation}
The above inequality holds for all $\epsilon>0$. Taking the supremum over $\epsilon>0$ on the left hand side yields:
\begin{equation}
\label{eq-g-m-relation-epsilon-prob-3}
\limsup_{t\rightarrow+\infty} \frac{1}{t}\log \mathbb P(\frac{1}{t}\log b_{i,t}^m \geq a) \leq \limsup_{t\rightarrow+\infty} \frac{1}{t}\log P(X_{i,t} \in g_m^{-1}([a, +\infty)).     
\end{equation}
Applying now the upper bound in~\ref{eq-UB-SL}, the upper bound in~\eqref{eq-UB-SL-beliefs} follows. The proof of the lower bound~\eqref{eq-UB-SL-beliefs} is analogous.   

\section{Proof of Lemma~\ref{lemma-UB-skeleton-d>1}}
\label{app:UB-skeleton}
Suppose that $X_t\in \mathbb R^d$ is a sequence of random variables for which~\eqref{eq-if-UB} holds for some function $f$.  Fix a compact set $F\subseteq \mathbb R^d$. For each $\delta>0$, introduce the function $f^{\star,\delta}:\mathbb R^d\mapsto \mathbb R$ obtained by truncating $f^\star$ to $1/\delta$:
\begin{equation}
f^{\star,\delta}(x)= \inf\left\{\frac{1}{\delta}, f^\star(x)-\delta\right\},\mbox{\;for\;}x\in \mathbb R^d.
\end{equation}
The family of functions $f^{\star,\delta}$, $\delta>0,$ satisfies that, for any set $D$,
\begin{equation}
\label{eq-limit-inf-I-delta}
\lim_{\delta\rightarrow 0} \inf_{x\in D}f^{\star,\delta}(x)= \inf_{x\in D} f^\star(x).
\end{equation}
To show this, let $\xi:=\inf_{x\in D}f^\star(x)$ and suppose first that $\xi=+\infty$, i.e., $f^\star$ at all points $x\in D$ takes the value $+\infty$. Then, for any $\delta>0$, $f^{\star,\delta} = 1/\delta$ for all $x\in D$, and therefore, for any $\delta>0$, $\inf_{x\in D} f^{\star,\delta}(x)=1/\delta$. Computing the limit $\lim_{\delta\rightarrow 0}1/\delta=+\infty$, identity~\eqref{eq-limit-inf-I-delta} follows. We next consider the case  $\xi\in \mathbb R$. For arbitrary fixed $\delta>0$, the quantity under the limit in the left hand side of~\eqref{eq-limit-inf-I-delta} equals:
\begin{align}
\label{eq-auxil-1}
\inf_{x\in D}f^{\star,\delta}(x)&= \inf_{x\in D} \inf\left\{ f^\star(x)-\delta, \frac{1}{\delta}\right\}\nonumber\\
&=  \inf\left\{ \inf_{x\in D} \left( f^\star(x)-\delta\right), \frac{1}{\delta}\right\}.
\end{align}
The first argument of the infimum~\eqref{eq-auxil-1} equals $\xi-\delta$ and it is finite by our assumption. Hence, for all $\delta$ sufficiently small, the infimum~\eqref{eq-auxil-1} equals $\xi-\delta$, which after taking the limit $\delta\rightarrow 0$ yields the claim.
The case $\xi=-\infty$ can be proven equivalently.

Having~\eqref{eq-limit-inf-I-delta}, it easy to see that~\eqref{eq-then-UB} follows if we show that the following inequality holds for any given $\delta$:
\begin{equation}
\label{eq-then-UB-it-suffices-delta}
\limsup_{t\rightarrow +\infty} \,\frac{1}{t}\,\log \mathbb P\left(X_t\in F\right) \leq 2\delta -\inf_{x\in F} f^{\star,\delta}(x).
\end{equation}
Thus, in what follows we focus on proving~\eqref{eq-then-UB-it-suffices-delta}. To this end, fix $\delta>0$. For any point $y\in F$ there exists a point $\lambda_y$ (which depends on $\delta$) such that
\begin{equation}
\label{eq-approach-with-lambda-y}
\lambda_y^\top y - \Lambda^\star(\lambda_y) \geq f^{\star,\delta}(y).
\end{equation}
Existence of such a point follows directly from the definitions of $f^\star$ and $f^{\star,\delta}$. First, since for any fixed point $y$ $f^\star(y)$ is computed as the supremum of functions $\lambda \mapsto h_y(\lambda):=\lambda^\top y - f(\lambda)$, it follows that the value $f^\star(y)$ can be approached arbitrarily close with $h_y(\lambda)$. Second, since $f^{\star,\delta}(y)$ is the infimum of $f^\star(y)-\delta$ and $1/\delta$, it must satisfy $f^\star(y)-\delta,\,1/\delta \geq f^{\star,\delta}(y)$. For example, if, for some $y$, $f^\star(y)$ is finite, then there must exist a point $\lambda$ such that $h_y(\lambda)\geq f^\star(y)-\delta$, and since the latter is greater than $f^{\star,\delta}(y)$,~\eqref{eq-approach-with-lambda-y} follows.

Note now that~\eqref{eq-if-UB} implies that, for any measurable set $D$, there exists $t_0=t_0(\delta, D)$ such that
\begin{equation}
\label{eq-if-UB-finite-time}
\frac{1}{t}\,\log \mathbb P\left(X_t\in D\right) \leq \delta + f(\lambda) - \inf_{x\in D} \lambda^\top x,
\end{equation}
for all $t\geq t_0$. For any $y\in F$, let $r_y:=\delta/\|\lambda_y\|$. Taking $D=\overline B_y(r_y)$ and $\lambda=\lambda_y$ in~\eqref{eq-if-UB-finite-time} yields for any $t\geq t_0(\delta,y)$
\begin{align}
\label{eq-if-UB-finite-time-ball-y}
\frac{1}{t}\,\log \mathbb P\left(X_t\in \overline B_y(r_y)\right) & \leq \delta + f(\lambda_y) - \inf_{ \|x-y\|\leq r_y} \lambda_y^\top x\\
& \leq \delta + \Lambda^\star(\lambda_y) - \lambda_y^\top y - \inf_{\|x\|\leq r_y} \lambda_y^\top x\\
& \leq 2 \delta - f^{\star,\delta}(y),
\end{align}
where the last inequality follows from~\eqref{eq-approach-with-lambda-y} and the definition of $r_y$. Next, from the family of closed balls $\left\{\overline B_y(r_y)\,:\, y\in F\right\}$, a finite cover of $F$, $\left\{\overline B_{y_k}(r_{y_k})\,:\,k=1,...,K\right\}$, is extracted, where, we note, $K=K(F, \delta)$. Then, by the union bound,
\begin{align*}
\frac{1}{t}\log \,\mathbb P& \left(X_t\in F\right) \leq \frac{1}{t}\log \left(\sum_{k=1}^K \mathbb P\left(X_t\in \overline B_{y_k}(r_{y_k})\right)\right)\nonumber\\
&\leq  \frac{1}{t}\log K + \frac{1}{t}\log \max_{k=1,...,K} \mathbb P\left(X_t\in \overline B_{y_k}(r_{y_k})\right)\nonumber\\
&\leq  \frac{1}{t}\log K + \max_{k=1,...,K} \frac{1}{t}\log \mathbb P\left(X_t\in \overline B_{y_k}(r_{y_k})\right).\nonumber\\
\end{align*}
Combining the preceding inequality with~\eqref{eq-if-UB-finite-time-ball-y} applied for every $k=1,...,K$, we have that for every $t\geq \max_{k=1,...K}t_0(\delta,y_k)$
\begin{align}
\frac{1}{t}\log \,\mathbb P\left(X_t\in F\right) & \leq  \frac{1}{t}\log K + \max_{k=1,...,K} 2 \delta - f^{\star,\delta}(y)\nonumber\\
& \leq  \frac{1}{t}\log K  + 2\delta - \inf_{y\in F} f^{\star,\delta}(y).
\end{align}
Taking the limit $t\rightarrow +\infty$, and noting that $K$ is finite,~\eqref{eq-then-UB-it-suffices-delta} follows. The last part of the claim, i.e.,~\eqref{eq-then-UB-it-suffices-delta} for closed sets follows from~\eqref{eq-then-UB-it-suffices-delta} for compact sets, that we have just proved, and Lemma 1.2.18 in~\cite{DemboZeitouni93}.

\end{appendices}

\bibliographystyle{IEEEtran}
\bibliography{IEEEabrv,Bibliography}
\end{document}